\documentclass[11pt]{article}
\usepackage[utf8]{inputenc}
\usepackage[english]{babel}
\usepackage{graphicx} 
\usepackage{amsmath}
\usepackage{amssymb}
\usepackage{graphicx}
\usepackage{float}
\usepackage{caption}
\usepackage{amsthm}
\usepackage{xcolor}
\usepackage{tabu}
\usepackage{fullpage}
\usepackage{hyperref}
\hypersetup{
    colorlinks,
    linkcolor={red!50!black},
    citecolor={blue!50!black},
    urlcolor={blue!80!black}
}
\usepackage{cleveref}
\usepackage{xspace}
\usepackage{thm-restate}

\iffalse
\renewbibmacro*{doi+eprint+url}{
    \printfield{doi}
    \newunit\newblock
    \iftoggle{bbx:eprint}{
        \usebibmacro{eprint}
    }{}
    \newunit\newblock
    \iffieldundef{doi}{
        \usebibmacro{url+urldate}}
        {}
    }
\fi

\theoremstyle{plain}
\newtheorem{theorem}{Theorem}
\newtheorem{lemma}[theorem]{Lemma}

\newtheorem{corollary}[theorem]{Corollary}

\theoremstyle{definition}

\title{Probabilistically checkable proofs for the Existential Theory of the Reals}

\author{Jack Stade}
\date{May 2026}

\begin{document}

\maketitle

\begin{abstract}
We prove a PCP theorem for the existential theory of the reals, showing that MAX-ETR-INV is $\exists\mathbb{R}$-hard to approximate to within some constant factor.

The existential theory of the reals (ETR) is a decision problem asking if there exists a set of real-valued variables satisfying some constraints involving polynomials and inequalities, and $\exists\mathbb{R}$ is the complexity class of problems polynomial-time reducible to ETR. Many important geometric problems are known to be $\exists\mathbb{R}$-complete. 

$\exists\mathbb{R}$-hardness results frequently work by a reduction from the $\exists\mathbb{R}$-complete problem ETR-INV, which asks if there is a an assignment of real variables each in the interval $[\frac12, 2]$ satisfying some constraints of form $x=1$, $xy=1$ and $x+y=z$.

MAX-ETR-INV is a related optimization problem that asks, given a set of constraints of form $x=1$, $xy=1$, and $x+y=z$, for a feasible (that is, satisfiable with variables in $[\frac12, 2]$) subset of those constraints of the largest possible size. We show that there is some constant $\epsilon>0$ such that it is $\exists\mathbb{R}$-hard to approximate MAX-ETR-INV better than a $1-\epsilon$ factor. This means that even a non-deterministic polynomial-time algorithm can't approximate MAX-ETR-INV better than this factor unless $\exists\mathbb{R}=\text{NP}$.

We also give a polynomial-time $8$-factor approximation algorithm and a non-deterministic-polynomial-time $2$-factor approximation algorithm for MAX-ETR-INV. 
\end{abstract}

\tableofcontents

\section{Introduction}

\subsection{Background and related work}

\paragraph{The existential theory of the reals.}

The existential theory of the reals (ETR) is the decision problem for sentences of the form:

\[\exists x_1,\dots, x_n:\Phi(x_1, \dots, x_n)\]

\noindent where the $x_i$ are real variables $\Phi$ is a formula involving rational constants, arithmetic operations ($+$, $-$, $\cdot$), (in)equalities ($=$, $\le$, $<$), and logical connectives ($\wedge$, $\vee$, $\neg$). So an instance of ETR consists of constraints involving polynomials and inequalities, and asks for an assignment of real variables satisfying these constraints.

The complexity class $\exists\mathbb{R}$ consists of problems that are polynomial-time reducible to ETR. It is straightforward to see that ETR is at least NP-hard, but much less easy to see that it is even decidable. In 1988, Canny \cite{Canny1988} showed that ETR can be decided in PSPACE. So $\text{NP}\subseteq \exists\mathbb{R}\subseteq \text{PSPACE}$. It is unknown whether either inclusion is strict. 

More recently, the class $\exists\mathbb{R}$ has become increasingly relevant in the study of the complexity of geometric problems, with many important problems being show to $\exists \mathbb{R}$-complete. Notable examples come from art-galleries \cite{AAM2021,Stade2025}, packing \cite{AMS2020}, and training neural networks \cite{BHJMW2023,AKM2021}. A recent survey \cite{ERCompendium} lists roughly 150 $\exists\mathbb{R}$-complete problems. 

Note that a satisfying assignment for an instance of ETR can require irrational coordinates (e.g. the instance $\exists x:x\cdot x=2$). So we generally don't expect algorithms for ETR (or any $\exists\mathbb{R}$-complete problem) to output a satisfying assignment, only to determine if one exists.

\paragraph{ETR-INV.}

Many $\exists\mathbb{R}$-hardness results reduce from a particularly simple $\exists\mathbb{R}$-complete problem called ETR-INV. This is the decision problem for sentences of form:

\[\exists x_1\in \left[\tfrac12, 2\right], \dots, x_n\in \left[\tfrac12, 2\right]:\Phi(x_1, \dots, x_n)\]

\noindent where $\Phi$ is a conjunction of clauses of form $x_i=1$, $x_ix_j=1$, or $x_i+x_j=x_k$. Abrahamsen, Adamaszek, and Miltzow \cite{AAM2021} showed that ETR-INV is $\exists\mathbb{R}$-complete.

\paragraph{Hardness of approximation.}

The complexity of optimization problems is often studied in terms of a related decision problem, which asks if the optimum is better than some threshold. This models the complexity of obtaining an exactly optimal solution.

In order to study the complexity of obtaining an approximate solution, we can instead look at a related ``gap'' problem. If $A$ is a maximization problem with objective taking values in $[0, 1]$, then for $0\le a<b\le 1$, the promise problem Gap-$A_a^b$ asks, given an instance of $A$, whether the optimal objective value at least $b$. We are promised that the optimum is either $\ge b$ or $\le a$, so an algorithm for Gap-$A_a^b$ need not give a correct answer if the optimum is in $(a, b)$. 

To show hardness of approximation for $A$, we can reduce a known hard problem to Gap-$A$. A reduction to to Gap-$A_{a}^b$ must guarantee that the instance produced does not have optimum in $(a, b)$.

For some complexity class $C$, if there are some $a$ and $b$ such that every problem in $C$ reduces in polynomial time to Gap-$A_a^b$, then we say that $A$ is $C$-hard to approximate to within a factor of $\frac{a}{b}$. Indeed, a better-than-$\frac{a}{b}$-factor approximation algorithm for $A$ is sufficient to distinguish between instances with optimum $\le a$ and instances with optimum $\ge b$. 

In a few cases, existing NP-hardness constructions can be used to obtain NP-hardness of approximation. For example, the fact that $3$-colorability of graphs is NP-hard implies that graph coloring is NP-hard to approximate to within a $\frac34$ factor. But typically, hardness of approximation is not this straightforward to show.

A common strategy to show NP-hardness is to reduce from 3SAT. So in order to show hardness of approximation, it is useful to consider the optimization version of 3SAT. The problem MAX-3SAT is an optimization problem that asks, given a 3-CNF formula, for an assignment of the variables that satisfies the largest possible fraction of the clauses. The \emph{PCP theorem} implies that MAX-3SAT is NP-hard to approximate to within some constant factor:

\begin{theorem}(PCP theorem, hardness of approximation version)\label{thm:pcptheorem}
There is some constant $c<1$ so that Gap-MAX-3$\text{SAT}_c^1$ is NP-hard. 
\end{theorem}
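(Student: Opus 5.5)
This is the classical PCP theorem, which the paper cites rather than proves. The natural self-contained route is Dinur's combinatorial proof via gap amplification, phrased for constraint graphs and then translated to MAX-3SAT.

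The plan has four steps.

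\textbf{Step 1: reformulate.} I would restate the theorem for constraint graphs. An instance is a graph $G=(V,E)$ with variables over a constant alphabet $\Sigma$ and a binary constraint on each edge. Let $\mathrm{UNSAT}(G)$ be the minimum fraction of violated edges. It suffices to give a polynomial-time reduction from an NP-hard instance to a constraint graph over a fixed alphabet such that satisfiable instances map to $\mathrm{UNSAT}=0$ and unsatisfiable ones to $\mathrm{UNSAT}\ge\epsilon_0$ for a constant $\epsilon_0>0$. Each edge constraint can then be written as a constant-size 3-CNF (with auxiliary variables), which yields Gap-MAX-3$\text{SAT}_c^1$ with $c=1-\epsilon_0/K$ for a constant $K$.

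\textbf{Step 2: starting point.} I would begin from 3-colorability, which is NP-hard. Its constraint graph has $\mathrm{UNSAT}\ge 1/|E|$ whenever the instance is unsatisfiable.

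\textbf{Step 3: one round of amplification.} The core is a single round that at least doubles $\mathrm{UNSAT}$, as long as it is below a fixed constant. The round must keep the alphabet constant, preserve perfect completeness, and increase the size by at most a constant factor. It has three parts.
\begin{enumerate}
\item \emph{Preprocessing.} Make the graph regular and of constant degree by replacing vertices with expander clouds linked by equality constraints. Then overlay an expander and add self-loops. This costs only a constant factor in $\mathrm{UNSAT}$.
\item \emph{Powering.} Form $G^t$, where each vertex holds its claimed view of all vertices within distance $t$, and edges correspond to walks of length about $t$. The key estimate is $\mathrm{UNSAT}(G^t)\ge \Omega(\sqrt t)\cdot\min(\mathrm{UNSAT}(G),1/t)$. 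Its proof uses the plurality-decoding assignment together with expander mixing and random-walk hitting bounds.
\item \emph{Alphabet reduction.} The alphabet has grown to $|\Sigma|^{d^t}$. Compose with a constant-size assignment tester (PCP of proximity), for instance one built from the Hadamard/Long-code linearity test and quadratic-equations test. This brings the alphabet back to a fixed size while losing only a constant factor in $\mathrm{UNSAT}$.
\end{enumerate}
Choosing $t$ large enough makes the net gain at least a factor of $2$.

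\textbf{Step 4: iterate.} Repeat Step 3 $O(\log |E|)$ times. The size grows polynomially, and the gap rises from $1/|E|$ to a constant $\epsilon_0$, which completes the proof via Step 1.

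\textbf{Main obstacle.} I expect the powering analysis in Step 3 to be the hardest part. One must show that the plurality assignment derived from a good assignment to $G^t$ forces many violations in $G$. The approach I would try first is to truncate walk contributions to the walks' middle portions and bound the second moment using the expander's spectral gap. If this became unwieldy, the fallback is the algebraic proof of Arora–Safra and Arora–Lund–Motwani–Sudan–Szegedy via low-degree tests and proof composition. There the obstacle would instead be the soundness of the low-degree test.
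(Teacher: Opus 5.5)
Your proposal is correct and follows the route the paper outlines in \Cref{sec:dinur} for this theorem, which the paper cites rather than reproves: Dinur's preprocessing, powering, and composition with an assignment tester, iterated $O(\log|E|)$ times. The only difference is that you use Dinur's original $\Omega(\sqrt t)$ powering bound instead of Radhakrishnan's linear-in-$t$ bound, which the paper adopts; over a fixed alphabet either suffices, given what your choice of $t$ already assumes, namely that the preprocessing and composition losses are constants independent of $t$.
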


The theorem was first proved by Arora, Lund, Motwani, Sudan, and Szegedy \cite{ALMSS98}, building on a long line of previous work (including \cite{BFL90,BFLS91,FGLSS91,FGLSS96}).

For the formulation above, the best-known value of $c$ is due to Guruswami and Khot \cite{GK2005}, who proved that Gap-MAX-3$\text{SAT}_c^1$ is NP-hard for each $1>c>\frac{7}{8}$. Karloff and Zwick \cite{KZ1997} gave a randomized polynomial-time algorithm that decides Gap-MAX-3$\text{SAT}_c^1$ for any $c<\frac{7}{8}$, so this is essentially tight unless NP$\subseteq$BPP. 

Version of \Cref{thm:pcptheorem} have been used to show or improve hardness of approximation bounds for optimization versions of many classical NP-complete problems. Notable examples include maximum independent set, max cut \cite{PY1991}, set-cover \cite{RS97}, max-clique \cite{FGLSS91,FGLSS96,Zuckerman07}, and graph coloring \cite{Zuckerman07}.

\paragraph{Probabilistically checkable proofs.}

The initialization PCP comes from the theory of interactive proof systems, and stands for \emph{probabilistically checkable proofs}. We imagine a \emph{prover} (with unbounded computational resources) that wants to convince a \emph{verifier} (with polynomially-bounded resources) of the truth of some statement. The prover sends the verifier a ``proof'', which is some information that the verifier uses to help check the claim.

The PCP theorem says that a randomized protocol with polynomial proof length where the verifier only needs to read a constant number of bits of the proof can be (up to some small false-positive probability) as powerful as any protocol with polynomial-length proofs where the verifier can read the entire proof. Note that the statement being proved is agreed upon before hand, and isn't considered to be part of the proof. 

Here the bits that the verifier chooses to read depend on the outcomes of some random events. However, they are chosen \emph{non-adaptively}, that is, the verifier must choose all the bits to read before having read any of the proof. They cannot use the proof to inform decisions about which bits to read.

To see that this claim is equivalent to \Cref{thm:pcptheorem}, suppose that the prover wants to convince the verifier that an instance of SAT is satisfiable. Both the prover and verifier use the reduction implied by \Cref{thm:pcptheorem} to construct a Gap-MAX-$\text{3SAT}_c^1$ instance. Since the original SAT instance was satisfiable, the Gap-MAX-3$\text{SAT}_c^1$ will be also, and the prover can send the satisfying assignment. 

The verifier then chooses $\mathcal{O}(\frac{1}{1-c})$ of the clauses independently at random, and checks that the assignment sent satisfies those clauses. If the instance is not satisfiable, (that is, if the optimum is $<1$), then \Cref{thm:pcptheorem} guarantees that any assignment fails to satisfy a $1-c$ fraction of clauses, and so there is a high probability that the verifier would check one of these clauses.

So if all the clauses checked by the verifier are satisfied, then the verifier can be highly confident that the original SAT instance was satisfiable. We say that the proof sent by the prover is \emph{probabilistically checkable}. 

\paragraph{Locally testable, decodeable, and correctable codes.}

Known proofs of the PCP theorem frequently make use of error-correcting codes with certain ``local'' properties.

A code $C\subseteq \Sigma^m$ over an alphabet $\Sigma$ is called $q$-local $\delta$-testable if there is a randomized process that chooses (non-adaptively) $q$ entries of a string $x\in \Sigma^m$ and accepts or rejects based on those queries. The test should satisfy:

\begin{itemize}
    \item (Completeness) if $x\in C$, then the test accepts with probability $1$
    \item (Soundness) if the $x\in \Sigma^m$ is such that the test accepts with high probability, then there is some $y\in C$ such that $x$ and $y$ differ on at most a $\delta$ fraction of entries (we say that $x$ and $y$ are $\delta$-close).
\end{itemize}

A code $C\subseteq \Sigma^m$ is $(q, \delta, \epsilon)$-locally correctable if there is a randomized process that given an index $i\in [m]$, reads $q$ entries of a string $x\in \Sigma^m$ and guesses an element of $\Sigma$. If $x$ is $\delta$-close to some $y\in C$, then the process should correctly determine the $i$th entry of $y$ with probability at least $1-\epsilon$.

A code $f:\Sigma^n\rightarrow \Sigma^m$ is $(q, \delta, \epsilon)$-locally decodeable if there is a randomized process that given an index $i\in [n]$, reads $q$ entries of a string $x\in \Sigma^m$ and guesses an element of $\Sigma$. If $x$ is $\delta$-close to $f(y)$ for some $y\in \Sigma^n$, then the process should correctly determine the $i$th entry of $y$ with probability at least $1-\epsilon$.

A variety of constant-query locally testable, decodeable, and/or correctable codes are known over finite alphabets (see e.g. \cite{Goldreich2010,DELLM22,Yekhanin2012,BSS2020}). Over an infinite alphabet ($\mathbb{R}$, say), it is possible to create codes with unreasonably good properties, for example using an injective map $\mathbb{R}^n\rightarrow \mathbb{R}$. To rule out these examples, we can restrict to linear codes, that is codes given by a linear subspace of $C\subseteq \mathbb{R}^m$. 

While constant-query locally-decodeable linear codes are known over $\mathbb{R}$, the best correctors known requires queries roughly proportional to $\log(\text{dim}(C))$ \cite{ABPSS2024,ABPSS2025}. Little seems to be known about local testability of codes over $\mathbb{R}$.  

\subsection{Our results}

Let MAX-ETR-INV be the optimization version of ETR-INV. So an instance of MAX-ETR-INV consists of some constraints of form $x=1$, $xy=1$, or $x+y=z$, and asks to determine the largest fraction of clauses that can be satisfied simultaneously by an assignment of variables in $[\frac12, 2]$. Our main result is the following:

\begin{theorem}\label{thm:main}
There is some constant $c<1$ such that Gap-MAX-ETR-INV$_c^1$ is $\exists\mathbb{R}$-hard.
\end{theorem}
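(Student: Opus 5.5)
The plan is to mimic the algebraic (Arora--Safra/ALMSS-style) proof of the PCP theorem, replacing finite-field codes by real-valued linear codes that we build and analyse ourselves.

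\textbf{Step 1: a structured, quadratic starting problem.} Starting from ETR-INV (which the paper takes as $\exists\mathbb{R}$-complete), pass to an equivalent system of quadratic equations over variables confined to a bounded interval. Feasibility of this system should be witnessed by a vector $x\in[\frac12,2]^n$ satisfying $Q_j(x)=0$ for polynomially many quadratics $Q_j$. By squaring and summing, this is the same as asking whether a single polynomial $\sum_j Q_j(x)^2$ vanishes. To keep numerical scales bounded, we would use random signed combinations of the $Q_j$ rather than squares, so that all coefficients stay in a fixed range.

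\textbf{Step 2: encode the witness with a locally testable real code.} The prover writes down a redundant encoding of $x$: the values of all linear forms $\langle a,x\rangle$ and all bilinear forms $\langle A, x\otimes x\rangle$, for $a$ and $A$ ranging over a polynomial-size structured set $S$ of bounded vectors. A natural choice for $S$ is an expander-like family or a set of $\{-1,0,1\}$ vectors drawn from a small-bias sample space; a real analogue of the Hadamard code would be too long. The consistency checks are then written as MAX-ETR-INV clauses:
\begin{itemize}
\item linearity checks of the form $f(a)+f(b)=f(a+b)$, which are addition clauses after shifting and rescaling into $[\frac12,2]$;
\item tensor consistency checks of the form $f(a)f(b)=g(a\otimes b)$, realised through gadgets built from inversion and addition, using the identity $xy=\frac{(x+y)^2-(x-y)^2}{4}$ and expressing squares via $1/x$ gadgets;
\item a check that a random combination of the $Q_j$ evaluates to $0$ on $g$.
\end{itemize}
Each gadget uses only a constant number of constraints. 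Consequently, a constant fraction of violated checks in the encoded instance translates into a constant fraction of violated clauses.

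\textbf{Step 3: soundness.} Suppose an assignment satisfies a $1-\epsilon$ fraction of the clauses. We must show that it is close to a genuine codeword, that this codeword decodes to some $x$, and that this $x$ satisfies all $Q_j$, up to an error that exact feasibility then rules out. This is the main obstacle. Over $\mathbb{R}$ we need a local testability theorem for a linear code: an assignment satisfying almost all local linear checks \emph{exactly} must agree with a codeword on most coordinates. Because clauses are exact equalities, this is a combinatorial statement. We would argue via rank/span arguments on a high-dimensional-expander or agreement-test structure. Concretely, we would take a majority-style (plurality) decoding over lines, and show that exactly satisfied constraints propagate the linear structure, so that any fixed set of satisfied constraints forces linearity on a large connected "core".

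A second difficulty is distinguishing exact from approximate zeros. We would handle it by noting that an exactly satisfied subset of clauses defines its own real-feasible system. So we never need quantitative gaps: we show that if the satisfied subset is large, it contains, after decoding, a complete copy of the original ETR-INV instance. Hence the original instance is feasible. I would first try to prove this with a gap-amplification (Dinur-style) approach instead: take the constraint graph of ETR-INV, amplify via expander replacement and powering, and use the property that satisfiability of a subset is monotone and exact. This avoids codes, but requires an alphabet-reduction step over $\mathbb{R}$. That step would itself be done with the linear/tensor encoding above, applied only to constant-size blocks, where testability is easy by brute force.
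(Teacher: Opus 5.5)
\textbf{Your primary, ALMSS-style plan does not close.} It needs two things you do not supply and that are not available over $\mathbb{R}$:
\begin{itemize}
\item a constant-query test for a polynomial-length real linear code whose acceptance forces Hamming-closeness to a codeword;
\item constant-query self-correction, because your final check queries $g$ at specific structured points that a cheating table can corrupt at negligible Hamming cost.
\end{itemize}
The best known real correctors use about $\log(\dim)$ queries. Exact-equality tests over $\mathbb{R}$ need not imply Hamming-closeness at all. For example, the step function $A(x)=0$ for $x_1\le k_1/2$, $A(x)=1$ otherwise, passes the paper's four-query midpoint test with high probability. Yet it disagrees with every codeword on about half its entries.

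There is also a scaling problem you gloss over. Dense forms $\langle a,x\rangle$ have magnitude growing with $n$, so fitting them into $[\frac12,2]$ needs a scale constant of size $1/\mathrm{poly}(n)$. That constant costs $\Theta(\log n)$ ETR-INV clauses to build. Building it inside every check dilutes the gap by $\Theta(\log n)$. Sharing one copy instead lets an assignment violate $O(\log n)$ clauses and thereby change the meaning of every multiplication check.

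\textbf{Your fallback has the right skeleton, which is the paper's, but its crucial step is missing.} First, powering must use Radhakrishnan's analysis. In Dinur's original lemma the amplification constant depends on $|\Sigma|$, which is infinite here.

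Second, and more seriously, alphabet reduction is not ``easy by brute force'' on constant-size blocks. The block alphabet is still $\mathbb{R}^N$, so there is nothing to enumerate. More importantly, the following must all be independent of the block length $N=\Theta(d^{t+1})$, because $t$ is chosen as $\lceil 2/(\eta_1\eta_2\eta_3)\rceil$:
\begin{itemize}
\item the tester's arity $\ell$;
\item its rejection threshold $\zeta$;
\item the relative distance of its input encoding.
\end{itemize}
Suppose the tester reads a whole block, or checks one of $\mathrm{poly}(N)$ constraints, or uses the trivial encoding (distance $1/N$). Then $\eta_3$ depends on $t$, and the choice of $t$ becomes circular.

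Supplying an $N$-independent real tester is the paper's main new contribution. Its core is the midpoint code on $\{0,\dots,k_1\}^n$ with shifts $z\in\{-k_2,\dots,k_2\}^n$. Its test does not certify Hamming-closeness. It certifies something weaker: a single $\alpha$, fixed independently of $d$, is decoded as $A(x+d)-A(x)$ with probability $1-\delta$ for every $\Vert d\Vert_1\le r$.

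The paper runs this code over both the additive and the multiplicative group via $u_\sigma=\prod_i(1+\alpha_i)^{\sigma_i}$:
\begin{itemize}
\item multiplicative constraints become linear after taking logarithms;
\item $\alpha_i+\alpha_j=\alpha_k$ becomes $u_{\sigma+e_i}+u_{\sigma+e_j}=u_{\sigma+e_k}+u_\sigma$;
\item a consistency test ties $u$ to the input encodings.
\end{itemize}
Everything is rescaled by a constant $q$ depending only on $N$, so that all auxiliary values stay in $[\frac12,2]$. This is why the recursion runs over the closed family $\mathcal{C}(q)$, with its constraint $x=(1+q)y$, rather than over ETR-INV. The conversion to ETR-INV happens only at the end, at $O(\log(1/q))$ clauses per constraint.

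Finally, your tensor check $f(a)f(b)=g(a\otimes b)$ would need decoding at shifts of large $\ell_1$-norm. A small-$\ell_1$ decoding guarantee like the paper's does not provide that.
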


This says that even a non-deterministic polynomial-time algorithm can't approximate MAX-ETR-INV better than a $c$ factor, unless $\text{NP}=\exists\mathbb{R}$. 

In order to prove \Cref{thm:main}, we first show hardness of approximation for a somewhat more general problem that we call ETR-$\mathcal{C}(q)$. For each $q>0$, the problem ETR-$\mathcal{C}(q)$ asks if it is possible to assign in $[\frac12, 2]$ to some variables so that some constraints of form $x=y$, $x=(1+q)y$, $x+y=z$, $x+y=z+w$, $xy=z$, and $xy=zw$ are satisfied. The problem MAX-ETR-$\mathcal{C}(q)$ asks for largest number of clauses that can be satisfied by an assignment. 

We first prove that, for sufficiently small $q$, it is $\exists\mathbb{R}$-hard to approximate MAX-ETR-$\mathcal{C}(q)$ to within some constant factor. In \Cref{sec:formulaconversion}, we then use more-or-less standard algebraic manipulations to show that any constraint appearing in MAX-ETR-$\mathcal{C}(q)$ can be simulated with $\mathcal{O}(\log(\frac1q))$ constraints of form $x=1$, $xy=1$, or $x+y=z$. By choosing some small fixed $q$, this will prove \Cref{thm:main}.

Our proof that MAX-ETR-$\mathcal{C}(q)$ is $\exists\mathbb{R}$-hard to approximate is based on a proof of the PCP theorem by Dinur \cite{Dinur2007}, and some simplifications of that proof by Radhakrishnan \cite{Radha2006}. 

We discuss Dinur's proof (and the improvements by Radhakrishnan) in \Cref{sec:dinur}. The proof proceeds in three steps, and in \Cref{sec:steps12}, we notice that the first two steps can already be applied in our setting.

The third step is the construction of an \emph{assignment tester}. An assignment tester is like a PCP protocol for satisfiability of a constraint satisfaction problem, but the proof also contains an assignment of the variables in the original instance. In addition to verifying that the instance is satisfiable, the verifier also needs to check that the assignment provided is close (in some sense) to a satisfying assignment.

Roughly speaking, the first two steps in Dinur's proof show that in order to prove a PCP theorem for a constraint satisfaction problem, it is enough to construct an assignment tester for that problem, where the proof can have unbounded length and the verifier has unbounded computational resources (but is still only allowed a constant number of queries to the proof).

Existing assignment-tester constructions work only for discrete alphabets, and constructing one for a problem like ETR-$\mathcal{C}(q)$ requires some new ideas. In \Cref{sec:step3}, we fully state our main technical result, which is a type of assignment tester for ETR-$\mathcal{C}(q)$. We then explain how to obtain inapproximability of Max-ETR-$\mathcal{C}(q)$ from this result. The construction of our assignment tester occupies \Cref{sec:code,sec:constraints,sec:finalproof}. 

\paragraph{Codes over infinite alphabets.}

Our assignment tester is based on a novel way of testing and decoding linear codes over $\mathbb{R}$. For an abelian group $G$ and a subset $S$ containing $m$ points in $\mathbb{Z}^m$, there is a linear code $G^n\rightarrow G^m$ that sends $\alpha\in G^n$ to the map $S\rightarrow G$ sending $x\in S$ to $\sum_ix_i\alpha_i$.

There has already been some work studying these codes when $G$ is the additive group of the reals. For example, Amireddy, Behera, Paraashar, Srinivasan and Sudan \cite{ABPSS2024} showed that the codes with with $S=\{0, 1\}^n$ is locally correctable with $\mathcal{O}(\log(n))$ queries for any abelian group $G$. But testers or correctors with constant query complexity have not been found. Some known lower bounds \cite{BDYW2011,DSW2014,DSW2017,BSS2020,ABPSS2025} illustrate that locally correction is more difficult when $G$ has elements of infinite order. 

However, our code is not locally testable or correctable in the usual sense. 

Setting $S=\{0, \dots, k\}^n$ (for some parameter $k$), we give a randomized tester and decoder for the corresponding code $G^n\rightarrow G^{m}$, where $m=(k+1)^n$. The tester queries a string in $G^m$ in $4$ places and either accepts or rejects the string. The decoder takes as input an integer vector $d\in \mathbb{Z}^n$ with $\Vert d\Vert_1$ sufficiently small, makes two random queries to a string $G^m$ and outputs a value in $G$.

If the $\beta\in G^m$ is the encoding of some $\alpha\in G^n$, then tester will accept $\beta$ with probability $1$ and the decoder will always output $\sum_id_i\alpha_i$. 

However, the code is not locally testable because the test can pass with high probability even on strings that are very far (in the Hamming distance sense) from being codewords. However, we show that, if $\beta\in G^m$ causes the tester to accept with high probability, then there is some $\alpha\in G^n$ such that, for any decoding parameter $d$, the decoder is likely to correctly evaluate $\sum_id_i\alpha_i$. In this case $\beta$ is ``close'' in some sense to the encoding of $\alpha$, but not necessarily in the Hamming distance sense. 

Our code is also not locally correctable because the decoder only works for $d$ where $\Vert d\Vert_1$ is sufficiently small, roughly when $\Vert d\Vert_1\le \mathcal{O}(\frac{k}{n})$. So while the decoder can retrieve information more complicated than just a single entry in the input, it can't retrieve an arbitrary entry of the corresponding codeword. 

These codes are discussed in more detail in \Cref{sec:code}. In \Cref{sec:constraints}, we show how to use our codes to test $G$-linear constraints. Then we show how to use two instances of our code, one with $G$ being the additive group of the reals and one with $G$ being the multiplicative group of the reals, to test complicated constraints defined by polynomials. In \Cref{sec:finalproof}, use this test to construct our assignment tester, completing the proof of \Cref{thm:main}.

\paragraph{Interactive proofs with real RAM.}

Similar to the PCP theorem, we can give an interpretation of \Cref{thm:main} in terms of interactive proof systems. We should imagine an interactive proof system where the proof sent by the prover consists of real numbers (rather than binary bits), and the verifier has access to real RAM. So the verifier can add, subtract, multiply, divide and compare real variables at unit cost per operation. 

\Cref{thm:main} implies that ETR has polynomial-length probabilistically-checkable proofs in this model with a constant number of queries, even if the verifier is only allowed to perform a constant number of arithmetic operations on real RAM (and even if the only operations allowed are checking constraints of form $x=1$, $xy=1$, $x+y=z$, or $x\in [\frac12, 2]$).

\paragraph{Complexity of solutions.}

A satisfying assignment to an instance of ETR-INV can require irrational coordinates, but if a solution exists, then one exists where the coordinates are real algebraic numbers (that is, roots of integer-coefficient univariate polynomials, this is implied by e.g. the algorithm from \cite{Canny1988}).

It is possible to perform arithmetic operations on real algebraic numbers. For example, Mishra and Pedersen \cite{MP90} give a nearly canonical way of representing a real algebraic as a root of a polynomial with integer coefficients, and show that basic arithmetic operations can be performed in time polynomial in the size of this representation. 

However, an satisfiable instance of ETR-INV might require coordinates that have exponential complexity in this representation. Our techniques can be used to show that this extends to approximating ETR-INV. Recall that a real algebraic number $\alpha$ generates a field extension $\mathbb{Q}(\alpha)$ of $\mathbb{Q}$ of finite degree, where the degree of the extension is the degree of the lowest-degree integer-coefficient polynomial of which $\alpha$ is a root.

\begin{theorem}\label{thm:solutioncomplexity}
There is some constant $0<c<1$ such that, for any $k\in \mathbb{N}$, there is a satisfiable instance of ETR-INV with $n\ge 2$ variables such that any assignment satisfying at least a $c$ fraction of constraints requires at least one variable to be set to a value that generates a field extension of degree more than $n^k$ over $\mathbb{Q}$.
\end{theorem}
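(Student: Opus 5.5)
Start from an exact ETR-INV instance whose solutions are forced to have high algebraic degree, and push it through the gap-producing reduction behind \Cref{thm:main}. The gap reduction must be \emph{solution-preserving in the reverse direction}: from any assignment of the output instance that satisfies at least a $c$ fraction of constraints, we should be able to read off, using field operations only, an assignment of the input instance satisfying every constraint. The construction via ETR-$\mathcal{C}(q)$, Dinur-style gap amplification with the assignment tester, and the conversion of \Cref{sec:formulaconversion} should have this property. In the soundness analysis, a high-value assignment to the output lets the decoder of \Cref{sec:code} evaluate the original variables. It recovers each variable as a value produced from a constant number of queried entries by additions, multiplications and inversions, so the value lies in the field generated by the output assignment. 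The argument ``if value $\ge c$ then the original instance is satisfiable'' should, when unpacked, produce a genuine satisfying assignment of the original instance inside the field $\mathbb{Q}(\text{output values})$.

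The hard input instance is iterated squaring. Take variables encoding $x_0$ with $x_0^2=2$ scaled into $[\frac12,2]$, and enforce $x_{i+1}=x_i^{1/2}$-type relations by $x_{i+1}\cdot x_{i+1}=x_i$. Write these with only $x=1$, $xy=1$, $x+y=z$ using the standard gadgets, keeping all values in $[\frac12,2]$. Then $x_N=2^{1/2^{N+1}}$, and since $t^{2^{N+1}}-2$ is irreducible (Eisenstein), this has degree $2^{N+1}$. The input size $n_0$ is $O(N)$, and any satisfying assignment forces this value. The resulting ETR-INV instance after the reduction has $n\le n_0^{C}$ variables for a fixed constant $C$. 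Choosing $N$ with $2^{N+1}>(n_0^{C})^k$, which is possible since $n_0=O(N)$, gives the degree bound in terms of $n$.

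The key bookkeeping step is this. Suppose an assignment $y$ of the output satisfies a $c$ fraction of constraints, and every coordinate generates an extension of degree $\le n^k$. Then the compositum $K=\mathbb{Q}(y_1,\dots,y_n)$ could have degree up to $n^{kn}$, which is too weak. So the argument should not pass through the compositum. It must use that each recovered original variable is a rational function of only $O(1)$ output coordinates. Then that variable has degree at most $(n^k)^{O(1)}$, and $k$ is adjusted accordingly, i.e.\ we replace $k$ by $O(k)$ when choosing $N$.

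The main obstacle is verifying that soundness is constructive and local. The decoder is randomized, and ``likely to correctly evaluate'' is probabilistic. I would argue that if the decoder is correct with positive probability, then some specific choice of its two queries yields the true value $\sum_i d_i\alpha_i$ as a difference or quotient of two output entries. Hence the recovered original variable is a simple function of $O(1)$ coordinates. This has to be carried through the additive and multiplicative code copies and through Dinur's amplification and composition steps, each of which is also local.
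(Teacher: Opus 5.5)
Your first step, that every decoding map uses only field operations, is right and matches the paper. The maps in question are copies or pluralities in Steps 1 and 2, $1+q\alpha_i=a_{\sigma+e_i}/a_\sigma$ for the nearest codeword of $f$ in \Cref{lem:composition}, and $x=q^{-1}(x''-1)$ in \Cref{lem:gapetrinv}.

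\textbf{The gap is in your ``key bookkeeping step''.} It is not true that a recovered original variable is a rational function of $O(1)$ output coordinates. Each round of \Cref{lem:gapamplification} is local, but \Cref{thm:etrq} iterates it $L=\lceil\log_2|E(G)|-\log_2\alpha\rceil$ times, and the composition decoder has fan-in $2$. So an original variable depends on up to $2^L\ge |E(G)|/\alpha$ output coordinates, i.e.\ linearly many in the size $n_0$ of the starting instance. Your compositum bound then gives only $\deg\le (n^k)^{\Theta(n_0)}=2^{\Theta(kn_0\log n_0)}$, using $n=n_0^{\Theta(1)}$. This is far larger than your target degree $2^{N+1}=2^{O(n_0)}$, so no contradiction results. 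Nor can another starting instance rescue a pure degree count: a value forced by an $n_0$-variable system of quadratic constraints has degree only $2^{O(n_0)}$, by standard B\'ezout-type bounds.

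\textbf{The missing idea} is an invariant preserved under arbitrarily many field operations, so that locality plays no role. The paper uses $\mathcal{A}(d)$, the field generated by all real algebraic numbers of degree at most $d$.
\begin{itemize}
\item If every output coordinate has degree $\le d=n^k$, the whole assignment lies in $\mathcal{A}(d)$, and every decoding step keeps it there.
\item So it suffices that the starting instance be unsatisfiable over $\mathcal{A}(d)$; gap amplification then bounds $\text{UNSAT}_d$ below by a constant.
\item The obstruction is the tower-law fact that every element of $\mathcal{A}(d)$ has degree whose prime factors are all $\le d$.
\end{itemize}
This also forces a different hard instance. Your $2^{1/2^{N+1}}$ has degree a power of $2$, so it passes this test. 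The paper instead forces $x_0=2^{1/p}$ for a prime $p>d$, using $O(\log p)$ squarings and the binary expansion of $p$ (\Cref{lem:cqbadsolution}). To keep your instance you would need a finer invariant. For example, Galois closures of subfields of $\mathcal{A}(d)$ have Galois groups of exponent dividing $\mathrm{lcm}(1,\dots,d)$, while the Galois closure of $\mathbb{Q}(2^{1/2^{N+1}})$ contains an element of order $2^{N}$.
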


The representation from \cite{MP90} represents a real algebraic number using a polynomial of which it is a root, so we need at least $d$ bits to represent a number that generates a field extension of degree $d$. So \Cref{thm:solutioncomplexity} implies that we need super-polynomially many bits to represent even an approximate solution to ETR-INV. We give a proof of \Cref{thm:solutioncomplexity} in \Cref{sec:solutioncomplexity}. 

\paragraph{Approximation algorithms for ETR-INV.}

In \Cref{sec:algorithms}, we give two constant-factor approximation algorithms for ETR-INV. The first algorithm is a polynomial-time $8$-factor approximation algorithm. This is based on a fairly standard probabilistic argument.

Using the additional power of NP, we get a better approximation algorithm with an approximation factor of $2$. The idea is to try to satisfy either all the constraints of form $x=1$ and $xy=1$ or as many of the constraints of form $x+y=z$ as possible. We need the power of NP because this requires solving the NP-hard (and hard to approximate \cite{EV95}) maximum-feasible-subsystem problem. 

Both algorithms \Cref{sec:algorithms} can output assignments achieving the fraction of satisfied constraints that they report. The assignments that they output are rational numbers with polynomially many bits. 

\section{Preliminaries}

We use $[n]$ to represent the set $\{1, \dots, n\}$. All logarithms are natural unless marked otherwise. 

Throughout, a graph will be allowed to have self loops and multiple edges. Given a graph $G$, we write $V(G)$ for the vertex set and $E(G)$ for the edge set. 

An \emph{constraint graph} is a graph $G$ equipped with an alphabet $\Sigma$, and, for each edge, a constraint modeled as a subset of $\Sigma^2$. An assignment of a constraint graph is a map $\mathcal{A}:V(G)\rightarrow \Sigma$. We say that an edge $e=(u, v)$ is satisfied by an assignment if $(\mathcal{A}(u), \mathcal{A}(v))$ is in the corresponding subset of $\Sigma^2$. 

Given a constraint graph $G$, let $\text{UNSAT}(G)$ be the smallest fraction of edges that are not satisfied over all assignments. A graph is satisfiable if $\text{UNSAT}(G)=0$. 

A subset of $\mathbb{R}^n$ is called \emph{semialgebraic} if it can be cut out by constraints of form $f(x)=0$, $f(x)\ge 0$ or $f(x)<0$, where each $f$ is a rational-coefficient polynomial $\mathbb{R}^n\rightarrow \mathbb{R}$. The constraints in a real constraint graph will be semialgebraic sets, which are represented in terms of the polynomials cutting them out.

For an alphabet $\Sigma$, a \emph{constraint set} on $\Sigma$ is a set of subsets of $\Sigma$. If $\mathcal{C}$ is a constraint set on $\Sigma^2$, then we say that a constraint graph $G$ has constraints in $\mathcal{C}$ if every edge constraint is an element of $\mathcal{C}$.

It will be useful to describe constraints sets in terms of the ``types'' of constraints that can occur. If $\mathcal{C}$ is a constraint set on $\Sigma^{\ell}$, then we define the constraint set $\mathcal{C}_{\Sigma}^{k}$ on $\Sigma^k$ to be the set of ways to apply a constraint in $\mathcal{C}$ to $\ell$ (not necessarily distinct) entries from $\Sigma^k$. That is:

\[\mathcal{C}_{\Sigma}^k=\{\{(x_1, \dots, x_k)\in \Sigma^k|(x_{i_1}, \dots, x_{i_\ell})\in c\}|c\in\mathcal{C}, i_1, \dots, i_\ell\in [k]\}\]

For a constraint set $\mathcal{C}$ on $\Sigma$, we define $\mathcal{P}(\mathcal{C})$ to be the constraint set on $\Sigma$ given by arbitrary conjunctions of constraints in $\mathcal{C}$. That is:

\[\mathcal{P}(\mathcal{C})=\left\{\bigcap_{c\in\mathcal{S}}c|\mathcal{S}\subseteq \mathcal{C}\right\}\]

\section{Dinur's PCP construction}\label{sec:dinur}

Dinur proves that, given a constraint graph $G$ with finite alphabet ($\Sigma=\{0, 1\}^3$, say), there is a polynomial-time construction of a constraint graph $G'$ with alphabet $\{0, 1\}^3$ such that, if $\text{UNSAT}(G)=0$ then $\text{UNSAT}(G')=0$, but if $\text{UNSAT}(G)>0$ then $\text{UNSAT}(G')\ge 2\cdot\text{UNSAT}(G)$. The construction proceeds in $3$ steps:

\begin{enumerate}
    \item Preprocess $G$ so that it is $d$-regular (for some global constant $d$) and satisfies certain expansion properties. Let $G_1$ be the graph obtained. This satisfies $\text{UNSAT}(G_1)\ge \eta_1\cdot \text{UNSAT}(G)$ for some universal constant $\eta_1$. 
    \item Let $G_2=G_1^t$ for some constant $t$. This is a new graph satisfying\footnote{In Dinur's original construction, the bound was $\text{UNSAT}(G_2)\ge \eta_2\sqrt{t}\cdot \text{min}(\text{UNSAT}(G_1), \frac1t)$. We use an improvement due to Radhakrishnan \cite{Radha2006} that achieves the stated bound.} $\text{UNSAT}(G_2)\ge \eta_2t\cdot \text{min}(\text{UNSAT}(G_1), \frac1t)$ for some global constant $\mu_2$. However, $G_2$ now has alphabet $\{0, 1\}^{3d^{t+1}}$. 
    \item Compose $G_2$ with an assignment tester to get a graph $G'$ with alphabet $\{0, 1\}^{3}$. This has $\text{UNSAT}(G')\ge \eta_3\cdot \text{UNSAT}(G_2)$ for some global constant $\eta_3$. 
\end{enumerate}

Each step preserves consistency, that is, if $\text{UNSAT}(G)=0$ then $\text{UNSAT}(G')=0$. The constants $\mu_1$, $\mu_2$ and $\mu_3$ don't depend on $t$, so by choosing $t=\frac{2}{\eta_1\eta_2\eta_3}$, we have $\text{UNSAT}(G')\ge 2\cdot \text{UNSAT}(G)$.

Each step also causes the number of edge in the graph to increase by at most a constant factor (depending only on $t$). So the process can be applied logarithmically many times in polynomial total time. 

Step (1) can be applied to constraint graphs with real variables and polynomial constraints with essentially no modification. 

In Dinur's original construction, the constant $\eta_2$ depends on the size of the alphabet for $G$. However, Radhakrishnan \cite{Radha2006} later developed a simplified version of step (2). Crucially, in his proof the constant does not depend on the alphabet size. This version can be easily adapted to our setting. 

In contrast, step (3) requires several new ideas. Our main contribution is to construct an assignment tester for families of continuous constraints. 

\subsection{Steps (1) and (2)}\label{sec:steps12}

We now state formal versions of steps (1) and (2) that work in the continuous setting, being careful to keep track of the types of constraints needed.

The statements of these steps involve the spectral expansion of a graph. For a graph $G$, the second eigenvalue $\lambda(G)$ is the second largest (in absolute value) eigenvalue of $G$'s adjacency matrix. See e.g. \cite{HLW2006} for a survey of expander graphs and their properties. 

\begin{lemma}(Step (1))\label{lem:step1}
There are some universal constants $\eta_1$, $d$ and $\lambda<d$ such that, for a constraint graph $G$ with alphabet $\Sigma$ and constraints from a set $\mathcal{C}$, we can compute a $d$-regular constraint graph $G_1$ with alphabet $\Sigma$ and constraints from $\mathcal{C}\cap \{x=y, \Sigma^2\}$ such that:

\begin{itemize}
    \item $|E(G_1)|=\mathcal{O}(|E(G)|)$
    \item $\lambda(G_1)\le \lambda$
    \item if $\text{UNSAT}(G)=0$ then $\text{UNSAT}(G_1)=0$
    \item $\text{UNSAT}(G_1)\ge \eta_1\cdot \text{UNSAT}(G)$
\end{itemize}
\end{lemma}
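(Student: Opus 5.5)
The plan is to follow Dinur's two-stage preprocessing. Stage one is degree reduction by expander replacement, which uses only equality constraints. Stage two is expanderization by superimposing a fixed expander carrying trivial constraints $\Sigma^2$. I read the constraint set in the conclusion as $\mathcal{C}\cup\{x=y,\Sigma^2\}$, since the new edges carry exactly these two constraints. Nothing in the argument uses finiteness of $\Sigma$. Because $\text{UNSAT}$ over an infinite alphabet is an infimum, I will prove every soundness inequality for an \emph{arbitrary} assignment of the new graph and then take the infimum.

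\textbf{Stage one.} Isolated vertices of $G$ contribute no edges, so I discard them. I replace each remaining vertex $v$ of degree $d_v$ by a cloud $[v]$ of $d_v$ vertices. On $[v]$ I place a $d_0$-regular graph $X_{d_v}$ with $\lambda(X_{d_v})\le\lambda_0<d_0$ for global constants $d_0,\lambda_0$, all of whose edges carry the constraint $x=y$. Such graphs are explicitly constructible in polynomial time for every size, taking complete graphs padded with self-loops for small sizes. Each original edge $(u,v)$ with constraint $c\in\mathcal{C}$ is then attached to a distinct cloud vertex of $[u]$ and of $[v]$. The resulting graph $G'$ is $(d_0+1)$-regular with $|E(G')|=\mathcal{O}(|E(G)|)$, and a satisfying assignment of $G$ copied onto the clouds satisfies $G'$.

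For soundness, take any assignment $\mathcal{A}'$ of $G'$ and let $\mathcal{A}(v)$ be a plurality value of $\mathcal{A}'$ on $[v]$, which exists since each cloud is finite. Let $F$ be the set of original edges violated by $\mathcal{A}$, so $|F|\ge \text{UNSAT}(G)|E(G)|=:\mu|E(G)|$. For each $e=(u,v)\in F$, either $e$ is violated in $G'$, or one of its endpoints carries in $G'$ a value different from the plurality of its cloud. Hence either at least $|F|/2$ original edges are violated in $G'$, or the total number of vertices disagreeing with their cloud's plurality is at least $|F|/2$.

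In the second case, fix a cloud and let $S$ be its set of vertices off the plurality value. The disagreeing set may hold up to half the cloud, or more when there are many values, so I will apply the edge-expansion of $X_{d_v}$ to the partition by value classes. Every value class other than the plurality class has size at most half the cloud, and by Cheeger (the expander mixing lemma) each such class $T$ has at least $\frac{d_0-\lambda_0}{2}|T|$ boundary edges. Summing over the classes shows that at least $\frac{d_0-\lambda_0}{4}|S|$ equality edges of the cloud are violated. Summing over clouds, and using $|E(G')|=\mathcal{O}(|E(G)|)$, gives $\text{UNSAT}(G')\ge\eta'\mu$ for a constant $\eta'$.

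\textbf{Stage two.} On $V(G')$ I superimpose a $d_1$-regular explicit expander $H$ with $\lambda(H)\le\lambda_1<d_1$, together with $d_2$ self-loops at every vertex, all carrying $\Sigma^2$. This gives $G_1$, which is $d$-regular with $d=d_0+1+d_1+d_2$. The adjacency matrix is a sum, so for vectors orthogonal to the all-ones vector the Rayleigh quotient satisfies
\[ \lambda(G_1)\le (d_0+1)+\lambda_1+d_2 < d .\]
Since $H$ is a good expander, this bound holds in absolute value as well; the self-loops are there so that negative eigenvalues cannot approach $-d$. The new edges are always satisfied, so $\text{UNSAT}(G_1)\ge\frac{d_0+1}{d}\text{UNSAT}(G')$, and completeness and the linear size bound are immediate. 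Composing the two stages gives $\eta_1=\eta'\frac{d_0+1}{d}$.

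\textbf{Main obstacle.} I expect the main obstacle to be the plurality argument of stage one. It needs care about multiple value classes, which is handled by applying edge expansion class by class as above. It also needs care about degenerate clouds of size one or two, where I use complete graphs with loops and check the expansion constant directly. Everything else transfers verbatim from Dinur's finite-alphabet argument.
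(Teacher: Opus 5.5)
Your proposal is correct and takes essentially the same route as the paper, which gives no argument of its own. The paper defers to Dinur's Section 4 preprocessing (degree reduction by expander replacement with equality constraints, then superimposing an expander and self-loops carrying $\Sigma^2$) and remarks that it carries over to infinite alphabets unchanged; your reconstruction, including the class-by-class plurality argument and reading the constraint set as $\mathcal{C}\cup\{x=y,\Sigma^2\}$, matches that argument.
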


For a proof of \Cref{lem:step1}, see Dinur (\cite{Dinur2007}, Section 4). While the result is stated for finite alphabets, the proof works for infinite alphabets without modification. Note that the constants $\eta_1$, $d$ and $\lambda$ do not depend on $\Sigma$. 

The only new types of constraints needed to construct $G_1$ are a null constraint and an equality constraint $\{(x, y)\in \Sigma^2|x=y\}$, written $x=y$ for short. Note that if $\Sigma=\mathbb{R}^4$ (say), then this represents a conjunction of $4$ constraints of form $x=y$ for $x,y\in \mathbb{R}$.

We can now state step 2:

\begin{lemma}(Step (2))\label{lem:step2}
Fix constants $d$ and $\lambda<d$. Then there is some constant $\eta_2$ such that, for each $t\in \mathbb{N}$, given a constraint $d$-regular constraint graph $G_1$ with alphabet $\Sigma$, constraint set $\mathcal{C}$ and $\lambda(G_1)\le \lambda$, we can in polynomial time construct a constraint graph $G_2$ with alphabet $\Sigma^{d^{t+1}}$ and constraint set $\mathcal{P}(\mathcal{C}_{\Sigma}^{2d^{t+1}})$ such that:

\begin{itemize}
    \item $|E(G_2)|=\mathcal{O}_t(|E(G_1)|)$
    \item if $\text{UNSAT}(G_1)=0$, then $\text{UNSAT}(G_2)=0$
    \item $\text{UNSAT}(G_2)\ge \eta_2t\cdot \text{min}(\text{UNSAT}(G_1), \frac1t)$
\end{itemize}
\end{lemma}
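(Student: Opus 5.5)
We will follow Radhakrishnan's version of Dinur's graph powering, checking that it uses nothing about $\Sigma$ beyond it being a set, and keeping track of which constraints appear.

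\emph{Construction.} Vertices of $G_2$ are those of $G_1$. A vertex $v$ is assigned a value in $\Sigma^{d^{t+1}}$, read as an ``opinion'' about every vertex in the radius-$t$ ball around $v$. Since $G_1$ is $d$-regular there are at most $d^{t+1}$ such vertices (after fixing an indexing, e.g.\ by walks), and unused coordinates are left unconstrained. Edges of $G_2$ correspond to walks of a random length: start at a uniform vertex $u$ and take a lazy random walk in which, at each step, we stop with probability $\frac1t$. We cap the length at $O(t\log t)$ and weight the walks by their probabilities. Under the cap there are $\mathcal{O}_t(|E(G_1)|)$ such walks, and the truncation changes the stopping distribution only by a tiny amount.

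For the walk $u\to w$ we add an edge $(u,w)$ with the following constraint. For every edge $(a,b)$ of $G_1$ that lies within distance $t$ of both $u$ and $w$, the opinion of $u$ and of $w$ about $(a,b)$ must satisfy $c_{(a,b)}$. In addition, $u$'s and $w$'s opinions about each common vertex must agree, via equality constraints. Each such condition applies a constraint from $\mathcal{C}$ (or equality, which lies in $\mathcal{C}$ after step (1)) to entries of $(\Sigma^{d^{t+1}})^2$. So the full edge constraint is a conjunction of members of $\mathcal{C}_\Sigma^{2d^{t+1}}$, i.e.\ it lies in $\mathcal{P}(\mathcal{C}_\Sigma^{2d^{t+1}})$. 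Completeness is immediate: give every vertex the true opinions coming from a satisfying assignment of $G_1$.

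\emph{Soundness.} Take any assignment $\mathcal{A}$ of $G_2$ and decode an assignment $\sigma$ of $G_1$ by plurality over opinions weighted by the walk distribution. Over an infinite alphabet, ``plurality'' means choosing a value that maximizes the probability mass. It is attained, because only finitely many distinct values occur in $\mathcal{A}$. This is the one place finiteness is needed, and it depends on the finite graph, not on $|\Sigma|$.

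Let $F$ be a set of edges violated by $\sigma$, of fraction $\min(\text{UNSAT}(G_1),\frac1t)$. Following Radhakrishnan, call a walk step \emph{faulty} if it traverses an edge $(a,b)\in F$ and the endpoint opinions about $a$ and $b$ agree with $\sigma$. A walk with at least one faulty step rejects. The memoryless stopping rule makes the conditioned endpoints distributed like the plurality distribution. So each step on an $F$-edge is faulty with probability at least a constant. The argument for this uses only the definition of plurality and makes no counting over $\Sigma$.

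The first and second moments of the number $N$ of faulty steps then give $\Pr[N>0]\ge \mathbb{E}[N]^2/\mathbb{E}[N^2]$. We have $\mathbb{E}[N]=\Omega(t|F|/|E|)$. The bound $\mathbb{E}[N^2]=O(t|F|/|E|)$ uses the expansion $\lambda(G_1)\le\lambda$ to control repeated visits to $F$, together with $|F|/|E|\le\frac1t$. Together these give $\text{UNSAT}(G_2)\ge\eta_2 t\min(\text{UNSAT}(G_1),\frac1t)$, with $\eta_2$ depending only on $d,\lambda$.

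The main obstacle is verifying that the constant in the ``faulty with constant probability'' step is alphabet-independent and survives the truncation of walk lengths. We will take this from Radhakrishnan's analysis, checking that it compares the probabilities of agreeing with $\sigma$ rather than counting symbols. Everything else is bookkeeping of the constraint types.
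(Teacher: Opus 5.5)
\textbf{Verdict: there is a genuine gap.} Your overall plan is the same as the paper's: for this lemma the paper simply invokes Radhakrishnan's lazy-walk powering. The gap is in the one step that actually has to be re-examined for an infinite alphabet, and the justification you give for it is wrong. You claim that a step on an $F$-edge $(a,b)$ is faulty with constant probability ``using only the definition of plurality''. But plurality only gives $\Pr_u[\mathcal{A}(u)_a=\sigma(a)]\ge 1/N_a$, where $N_a$ is the number of distinct opinions about $a$. This is exactly the source of the $|\Sigma|$-dependence in Dinur's bound.

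Finiteness of the set of occurring values does not rescue this. $N_a$ can be as large as the radius-$t$ ball $|B(a,t)|\approx d^{t}$, which would give $\eta_2=d^{-O(t)}$. The lemma requires $\eta_2$ independent of $t$, and the paper later chooses $t=\lceil 2/(\eta_1\eta_2\eta_3)\rceil$. A cleverer decoding does not help either: if the opinions about $a$ are spread out, every candidate value for $\sigma(a)$ is rarely agreed with, so faulty steps at edges touching $a$ are rare. So the alphabet-independent constant cannot come from plurality decoding and faulty steps alone. That is precisely the point your proposal leaves to a citation.

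\textbf{The missing ingredient} is the equality checks that you already built into the construction but never use in the soundness argument.

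Define $\sigma(a)$ and its mass $p_a$ with respect to the distribution $D_a$ of endpoints of a lazy walk from $a$, conditioned on length at most $t-1$. Let $S=\{a:p_a<\frac12\}$.

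Condition on a visit to $a$ whose prefix and suffix both have length at most $t-1$. By memorylessness, the endpoints $u,w$ are then independent samples of $D_a$. So for $a\in S$ they disagree about $a$ with probability at least $1-\sum_x\Pr[\mathcal{A}(u)_a=x]^2\ge 1-p_a>\frac12$, and the equality check at $a\in B(u,t)\cap B(w,t)$ rejects. Your first/second-moment computation, run over visits to $S$ instead of traversals of $F$, then gives rejection probability $\Omega(t\min(|S|/|V|,\frac1t))$.

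Now set $\epsilon=\min(\text{UNSAT}(G_1),\frac1t)$ and split into two cases.
\begin{itemize}
\item If $|S|\ge\frac{\epsilon}{4}|V|$, the bound above already gives $\Omega(t\epsilon)$ and you are done.
\item Otherwise at most an $\frac{\epsilon}{2}$ fraction of edges touch $S$. On the remaining violated edges $p_ap_b\ge\frac14$, so your faulty-step argument goes through with an absolute constant.
\end{itemize}
Both cases give constants depending only on $d$ and $\lambda$. The length-$(t-1)$ conditioning also removes an off-by-one in your version: an edge $(a,b)$ with $d(u,a)=t$ is not checked by your endpoint-based constraint.

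Two side remarks.
\begin{itemize}
\item This argument needs $x=y\in\mathcal{C}$. That is not a hypothesis of the lemma as stated, but it does hold where the paper applies the lemma, after Lemma~\ref{lem:step1}.
\item Truncation needs no total-variation estimate, and such an estimate would fail anyway when $\epsilon$ is below the truncation error. All relevant events occur on walks of length at most $2t$, so any cap $L\ge 2t$ leaves them and their moments unchanged.
\end{itemize}
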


For a proof of \Cref{lem:step2}, see Radhakrishnan \cite{Radha2006}. Unlike the gap amplification lemma in \cite{Dinur2007}, the constant $\eta_2$ depends on only on $d$ and $\lambda$ (and not on $|\Sigma|$). So the proof works for infinite alphabets without modification.

Say $G$ had constraint set $\mathcal{C}$. Then $G_2$ has constraint set $\mathcal{P}((\mathcal{C}\cup \{x=y, \Sigma^2\})_{\Sigma}^{2d^{t+1}})$. Recall that this consists of unions of constraints from $\mathcal{C}\cup \{x=y, \Sigma^2\}$ each applied to a subset of entries with indices $[2d^{t+1}]$.

\section{Assignment testers and alphabet reduction}\label{sec:step3}

Our version of step (3) works by constructing a type of assignment tester for certain semialgebraic sets. We should first define carefully what we mean by an assignment tester.

Let $\mathcal{C}$ be a constraint set on an alphabet $\Sigma^2$. The type of assignment tester that we will need consists of the following data:

\begin{itemize}
    \item a blow-up size $k$
    \item an output alphabet $\Gamma$
    \item an error-correcting function $f:\Sigma\rightarrow \Gamma^k$
    \item a number $m$ of auxiliary variables
    \item a randomness parameter $R$
    \item a length parameter $\ell$
    \item an error threshold $\zeta$
    \item an output constraint set $\mathcal{D}$ on $\Gamma^\ell$, and
    \item a test function $(\mathcal{Q}, \mathcal{T}): \mathcal{C}\times [R]\rightarrow [2k+m]^\ell\times \mathcal{D}$
\end{itemize}

To test a constraint $c\in \mathcal{C}$, the assignment test takes as input a value $\alpha\in \mathbb{R}^{2k+m}$, then chooses a value in $r\in [R]$ uniformly at random, queries the $\ell$ entries of $\alpha$ given by $\mathcal{Q}(c, r)$, and checks that they satisfy the constraint given by $\mathcal{T}(c, r)$. 

Let $\delta$ be the distance of the error-correcting code $f$. That is, $\delta$ is such that $f(x)$ and $f(y)$ differ on at least a $\delta$ fraction of entries for any distinct $x$ and $y$ in $\Sigma$. This test should have the following properties:

\begin{itemize}
    \item (Completeness) for each pair $(x, y)\in \Sigma$ satisfying $c$, then there is some $\alpha$ such that the test always passes on input $(f(x), f(y), \alpha)$
    \item (Soundness) if $(\alpha, \beta, \gamma)$ causes the test to pass with probability at least $1-\zeta$, then there are $x, y\in \Sigma$ satisfying the constraint $c$ such that $\alpha$ is less than $\frac{\delta}{2}$-close to $f(x)$ and $\beta$ is less than $\frac{\delta}{2}$-close to $f(y)$
\end{itemize}

This construction is adapted from Dinur and Reingold's definition of an assignment tester \cite{DR2006}, with the most important difference being that our assignment tester includes an error-correcting function $f$ as part of the definition. This is used to ensure what Dinur and Reingold call ``robustness''. Our construction in \Cref{sec:finalproof} relies on a specific choice of error correcting function, so we include the error correcting function as part of the data of an assignment tester.

Given a constraint graph with a large alphabet, we can use \emph{composition with an assignment tester} to get a graph over a smaller alphabet. Our composition lemma is the following:

\begin{lemma}(Composition Lemma)\label{lem:composition}
Let $G_2$ be a constraint graph with alphabet $\Sigma$ and constraint set $\mathcal{C}$. Given an assignment tester for $\mathcal{C}$ with parameters as above, we can construct a new graph $G'$ with alphabet $\Gamma^\ell$ and constraint set $\mathcal{D}_{[\frac12, 2]}^{2\ell}$ such that:

\begin{itemize}
    \item if $\text{UNSAT}(G_2)=0$, then $\text{UNSAT}(G')=0$
    \item $\text{UNSAT}(G')\ge \frac{\zeta}{\ell+1}\cdot \text{UNSAT}(G)$
    \item $|E(G')|=(\ell+1)R|E(G)|$
\end{itemize}
\end{lemma}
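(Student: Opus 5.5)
The plan is the standard Dinur--Reingold composition, using the robustness that the built-in error-correcting function $f$ provides. For each vertex $v\in V(G_2)$ I introduce $k$ new vertices $v_1,\dots,v_k$, meant to hold the codeword $f(\mathcal{A}(v))\in\Gamma^k$. For each edge $e=(u,v)$ with constraint $c_e\in\mathcal{C}$ I introduce $m$ auxiliary vertices $a^e_1,\dots,a^e_m$. Concatenated, these give a string in $\Gamma^{2k+m}$ on which the tester for $c_e$ can run. The new graph $G'$ has alphabet $\Gamma^\ell$, so that a vertex can hold an $\ell$-tuple.

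For each edge $e$ and each $r\in[R]$, I create a ``test vertex'' $w_{e,r}$, intended to carry the $\ell$ queried values $(\mathcal{Q}(c_e,r))$ of the local string. I then add $\ell+1$ edges incident to $w_{e,r}$. The first is a self-loop, or an edge to any fixed partner, whose constraint is $\mathcal{T}(c_e,r)$ applied to the $\ell$ coordinates of $w_{e,r}$. The remaining $\ell$ are consistency edges, one for each $j\in[\ell]$. Each connects $w_{e,r}$ to the vertex holding the $j$-th queried entry and imposes equality between the $j$-th coordinate of $w_{e,r}$ and the appropriate coordinate of that vertex. To keep everything uniform, the original-type vertices also carry elements of $\Gamma^\ell$ and only their first coordinate is used.

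All of these constraints lie in $\mathcal{D}^{2\ell}$. The test constraint is a $\mathcal{D}$-constraint on $\ell$ entries among the $2\ell$ coordinates of the edge. An equality $x=y$ between two coordinates should be expressible in $\mathcal{D}$. If it is not automatically, I would add $x=y$ to $\mathcal{D}$ as the paper's setup implicitly does, noting this assumption. The subscript $[\frac12,2]$ reflects that $\Gamma$ will be $[\frac12,2]$. The edge count is $(\ell+1)R|E(G_2)|$, as claimed. Completeness is immediate: take a satisfying assignment $\mathcal{A}$, put $f(\mathcal{A}(v))$ on the $v_i$, use the auxiliary $\alpha$ supplied by the completeness of the tester, and fill each $w_{e,r}$ with the true queried values.

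For soundness, take any assignment $\mathcal{B}$ of $G'$. I decode each $v$ to some $x_v\in\Sigma$ minimizing the distance from $(\mathcal{B}(v_1),\dots,\mathcal{B}(v_k))$ to $f(x_v)$, which I take to exist (by compactness, or by choosing a near-minimizer). This gives an assignment $\mathcal{A}$ of $G_2$, which violates at least a $\mathrm{UNSAT}(G_2)$ fraction of edges. Fix a violated edge $e=(u,v)$. I claim that at least a $\zeta$ fraction of $r\in[R]$ yield an unsatisfied edge among the $\ell+1$ edges at $w_{e,r}$. Suppose not. Then replace each $w_{e,r}$ by the true values it queries. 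For all but a $\zeta$ fraction of $r$ the consistency edges hold, so $w_{e,r}$ already equals those true values, and its test edge then certifies that the tester accepts the local string $(\beta_u,\beta_v,\gamma_e)$. So the tester passes with probability $>1-\zeta$. Soundness of the tester then gives $(x,y)$ satisfying $c_e$ with $\beta_u$ less than $\frac{\delta}{2}$-close to $f(x)$ and $\beta_v$ less than $\frac{\delta}{2}$-close to $f(y)$. Since $f$ has distance $\delta$, the unique codeword within $\frac{\delta}{2}$ of $\beta_u$ is $f(x)$, so the decoding gives $x_u=x$, and likewise $x_v=y$. Hence $\mathcal{A}$ satisfies $e$, a contradiction.

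So each violated edge of $G_2$ contributes at least $\zeta R$ violated edges of $G'$, out of $(\ell+1)R$ edges per original edge. This gives $\mathrm{UNSAT}(G')\ge\frac{\zeta}{\ell+1}\mathrm{UNSAT}(G_2)$. The main obstacle is this robustness/decoding step: making sure the strict ``less than $\frac{\delta}{2}$'' closeness forces unique decoding, and that the decoding is fixed per vertex independently of the edge. The requirement that equality constraints live in $\mathcal{D}$ is the other point to check.
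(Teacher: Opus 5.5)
Your proposal is correct and follows the paper's proof essentially step for step. It uses the same construction: test vertices $t_{e,r}$ holding the $\ell$ queried values, $\ell$ equality edges plus a self-loop enforcing $\mathcal{T}(c_e,r)$, nearest-codeword decoding, and the tester's soundness combined with unique decoding at radius $<\frac{\delta}{2}$. The only difference is cosmetic: you count at least $\zeta R$ violated edges of $G'$ per violated edge of $G_2$, while the paper bounds the fraction of edges outside its ``good'' set $S$ by averaging, which is the same double count.

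Two small remarks. The nearest codeword exists trivially because Hamming distance takes finitely many values, so you need neither compactness nor near-minimizers. In fact a non-exact minimizer could break the unique-decoding step. Your caveat that equality must lie in $\mathcal{D}$ is a point the paper leaves implicit; it holds in the paper's application since $\mathcal{C}(q)$ contains $x=y$.
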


\begin{proof}
For each vertex $u$ of $G_2$, we create $k$ vertices $u_1, \dots, u_k$ of $G'$. For each edge $e=(u, v)$ of $G_2$ we create $m$ vertices $e_1, \dots, e_m$ and $R$ vertices $t_{e, 1}, \dots, t_{e, R}$. 

Fix an edge $e=(u, v)$ of $G_2$. This edge represents a constraint $c_e\in \mathcal{C}$. For each $r\in \{1, \dots, R\}$ and $i\in \{1, \dots, \ell\}$, we create an edge between $t_{e, r}$ and the $\mathcal{Q}_i(c_e, r)$-index vertex in $(u_1, \dots, u_k, v_1, \dots, v_k, e_1, \dots, e_m)$. The represents the constraint that the $i$th entry of $t_{e, r}$ equals the first entry of the other vertex at the other end of the edge. 

For each $t_{e, r}$, we add a self loop to that vertex representing the constraint $\mathcal{R}(c_e, r)$.

If $\mathcal{A}$ is a satisfying assignment of $G_2$, then we can construct a satisfying assignment $\mathcal{B}$ of $G'$. Set $\mathcal{B}(u_i)=(f_i(\mathcal{A}(u)), \dots, f_i(\mathcal{A}(u)))$. 

For $e=(u, v)$ and edge of $G_2$, we know that $(\mathcal{A}(u), \mathcal{A}(v))$ satisfies the constraint on that edge, so the properties of an assignment tester guarantee that there is an assignment of the auxiliary variables, which can be used to set the $\mathcal{B}(e_i)$. 

Finally, the $\ell$ entries of each $t_{e, r}$ can be uniquely defined by the equality constraints connecting it to the vertices specified by $\mathcal{Q}(c_e, r)$. The properties of an assignment tester ensure that the loop on $t_{e, r}$ is satisfied.

Conversely, suppose that we have some assignment $\mathcal{B}$ of $G'$ that satisfies a $1-\epsilon$ fraction of edges. For each vertex $u$ of $G_2$, let $\mathcal{A}(u)$ be the element $x\in \mathbb{R}^n$ that minimizes the (Hamming) distance between $f(x)$ and $(\mathcal{B}(u_1), \dots, \mathcal{B}(u_k))$. 

Let $S$ be the set of edges $e$ of $G_2$ such that, for at least a $1-\zeta$ fraction of $r\in \{1, \dots, R\}$, all the edges incident to $t_{e, r}$ are satisfied by $\mathcal{B}$. 

For an edge $e=(u, v)$ in $S$, the assignment $(\mathcal{B}(u_1), \dots, \mathcal{B}(u_k), \mathcal{B}(v_1), \dots, \mathcal{B}(v_k), \mathcal{B}(e_1), \dots, \mathcal{B}(e_m))$ causes the assignment tester to pass with probability at least $1-\zeta$, so there are some $x, y\in \mathbb{R}^n$ satisfying $c_e$ such that $(\mathcal{B}(u_1), \dots, \mathcal{B}(u_k))$ is $<\frac12\delta$-close to $f(x)$ and $(\mathcal{B}(v_1), \dots, \mathcal{B}(v_k))$ is $<\frac12\delta$-close to $f(y)$. Since $f$ has distance $\delta$, this means that $x=\mathcal{A}(u)$ and $y=\mathcal{A}(v)$, so the edge $e$ is satisfied by $\mathcal{A}$.

Each edge of $G'$ is incident to exactly one of the $t_{e, r}$ vertices, and each $t_{e, r}$ vertex is incident to exactly $\ell+1$ edges. So at least a $1-(\ell+1)\epsilon$ of the $t_{e, r}$ vertices have all incident edges satisfied. 

For each edge $e$ of $G_2$, there are $R$ vertices $t_{e, r}$. So at least a $1-\frac{(\ell+1)\epsilon}{\zeta}$ fraction of edges $e$ of $G_2$ are in $S$. So $\text{UNSAT}(G')\ge \frac{\mu_3}{(\ell+1)}\cdot \text{UNSAT}(G)$. 
\end{proof}

Proofs of results very similar to \Cref{lem:composition} can be found in \cite{DR2006} and \cite{Dinur2007}.

\subsection{The constraint set $\mathcal{C}(q)$}\label{sec:cq}

For $q>0$, the constraint set $\mathcal{C}(q)$ contains the following constraints on $[\frac12, 2]^4=\{(x, y, z, w)\in [\frac12, 2]^4\}$:

\[\begin{array}{c}
x=y\\
x=(1+q)y\\
x+y=z\\
x+y=z+w\\
xy=z\\
x+y=z+w\\
-\\
\end{array}\]

The last constraint in the list is the null constraint, which is included for convienience. Our main technical result is an assignment tester for constraint sets using constraints of this form:

\begin{lemma}\label{lem:main}
For each $n$, for sufficiently small $q>0$, we can construct an assignment tester on $\mathcal{P}\left(\mathcal{C}(q)_{[\frac12, 2]}^{n}\right)$ with length $\ell=4$, error threshold $\zeta=\frac{1}{1000}$, output alphabet $\Gamma=[\frac12, 2]$, and output constraint set $\mathcal{D}=\mathcal{C}(q)_{[\frac12, 2]}^4$
\end{lemma}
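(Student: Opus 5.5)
We will build the tester from the codes sketched in the introduction (developed in \Cref{sec:code,sec:constraints}), using two copies: one over the additive group $(\mathbb{R},+)$ and one over the multiplicative group $(\mathbb{R}_{>0},\cdot)$.

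\textbf{The encoding.} Fix a parameter $k$ that is large compared to $n$. The error-correcting function $f$ sends $x\in[\frac12,2]^n$ to two tables indexed by $S=\{0,\dots,k\}^n$:
\begin{itemize}
\item an additive table $s\mapsto \lambda\sum_i s_ix_i$;
\item a multiplicative table $s\mapsto \prod_i x_i^{\mu s_i}$.
\end{itemize}
The normalisations $\lambda,\mu$ are chosen, possibly with a shift by a fixed known vector, so that every entry lies in $[\frac12,2]$. Shifts can be removed by subtraction or division using constraints already in $\mathcal{C}(q)$. The factor $1+q$ is what allows fixed small constant steps, since everything is homogeneous.

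The distance $\delta$ of $f$ is large, because two distinct linear forms agree on few grid points. In addition, each $x_i$ and each product of two coordinates is decodable in the sense of the introduction, via queries at $s$ and $s+d$ with $\|d\|_1$ small.

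\textbf{The local tests.} The four-query tester is a linearity test:
\begin{itemize}
\item in the additive table, check $\beta(s)+\beta(s')=\beta(s'')+\beta(s''')$ whenever $s+s'=s''+s'''$, which is a constraint of type $x+y=z+w$;
\item in the multiplicative table, check the analogous identity of type $xy=zw$.
\end{itemize}
These checks use only constraints from $\mathcal{C}(q)$ and have length $\ell=4$.

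To test a conjunction $c\in\mathcal{P}(\mathcal{C}(q)^n)$, each atomic constraint ($x_i=x_j$, $x_i=(1+q)x_j$, $x_i+x_j=x_k$, and so on) is a $G$-linear relation on the coordinates in the additive or multiplicative group. For example, $x_ix_j=x_k$ is additive in $\log x$, so it lives in the multiplicative table, while $x_i+x_j=x_k$ lives in the additive table. A random linear combination of the atomic constraints, with small integer coefficients, is then checked through the decoder: a fixed linear form $\sum_i d_ix_i$ should equal a known constant.

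Consistency between the two tables is needed because a variable appears in both. I would handle it by testing that decoded coordinates agree. Concretely, introduce the auxiliary variables $m$ as a third table encoding the pair $(x,\log$-type$)$ jointly, or include the single-variable entries $x_i$ themselves, which both decoders can compare against using an $x=y$ constraint.

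\textbf{Soundness.} From the soundness of the linearity tests over $G$, a high acceptance probability yields $\alpha$ (additive) and $\alpha'$ (multiplicative) whose decodings are correct with high probability. The consistency checks force $\alpha'_i=\alpha_i$. The random combinations of constraints then show that every atomic constraint holds exactly, since a violated linear relation is detected with constant probability by a random $\{0,1\}$-combination.

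Finally, I must show that the given strings are $\frac\delta2$-close in Hamming distance to $f(\alpha)$. I would force this by adding a spot-check: compare a random entry against its decoded value using two queries and one linear constraint. This turns the "decodable closeness" into Hamming closeness at the threshold $\zeta=\frac1{1000}$.

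\textbf{Main obstacle.} The hard step is soundness of the four-query linearity test on the box $\{0,\dots,k\}^n$ over $\mathbb{R}$. Its boundary effects and the infinite order of group elements mean that the classical BLR analysis fails. I would attempt a self-correction argument restricted to shifts with $\|d\|_1=O(k/n)$: define the corrected function by majority vote over a random base point, show that it is almost linear on short differences via the expansion of the random-walk structure on the grid, and extend to a global linear $\alpha$ by summing along short steps. A final issue is that the values must stay inside $[\frac12,2]$, which requires $q$ small enough.
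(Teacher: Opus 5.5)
There are two genuine gaps. The step you flag as the main obstacle is actually handled fairly directly in \Cref{sec:code}, through the most-common-difference function $B(d)$ of \Cref{lem:Bdefinedness,lem:Blinearity}.

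The main gap is the link between your additive and multiplicative tables. This is exactly where the difficulty of mixing $+$ and $\times$ lies, and neither mechanism you suggest works.
\begin{itemize}
\item The additive table only exposes differences $A(s+d)-A(s)=\lambda\langle d,x\rangle$. The multiplicative one only exposes ratios $M(s+d)/M(s)=\prod_i x_i^{\mu d_i}$.
\item Keeping entries in $[\frac12,2]$ forces $\lambda,\mu\lesssim 1/(kn)$. Comparing with a stored $x_i$, via $A(s)+x_i=A(s+d)$ or $M(s)x_i=M(s+d)$, would then need $d=\lambda^{-1}e_i$ or $d=\mu^{-1}e_i$. The $i$-th entry of such a $d$ is of order $kn$, beyond the side length $k$ of the grid.
\item The relation between $c\,x_i$ and $x_i^{c'}$ is not a constraint of $\mathcal{C}(q)$. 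Any check that handles one coordinate at a time, or one link of a chain simulating this relation, is a single point of failure. That is incompatible with an $n$-independent threshold $\zeta$.
\item The implicit bounds $x_i\in[\frac12,2]$, which you do not address, need the same mixing, because linear tests alone cannot certify them.
\end{itemize}

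The missing idea is to use a single encoding. The paper first applies \Cref{lem:threeconstraintconversion}, replacing $V$ by a set $V'$ with these properties:
\begin{itemize}
\item $V'$ is cut out by $x+y=z$, $(1+x)(1+y)=1+z$ and one constraint $x=q$;
\item its projection is $qV$, and every point of $V$ has a witness in $[-6q,6q]^{m}$;
\item the bounds are enforced by the gadget $q\alpha_2=\alpha_3\alpha_3$ of \Cref{lem:constraintshrinking}.
\end{itemize}
Then $u_\sigma=\prod_i(1+\alpha_i)^{\sigma_i}$, for $\sigma\in\{0,1\}^{n+m}$, stays in range with exponent $1$. Since $u_{\sigma+e_i}=(1+\alpha_i)u_\sigma$ and $u_\sigma>0$, an additive constraint $\alpha_i+\alpha_j=\alpha_k$ is equivalent to the linear identity $u_{\sigma+e_i}+u_{\sigma+e_j}=u_{\sigma+e_k}+u_\sigma$, for $\sigma$ vanishing at $i,j,k$. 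This is a relation among entries of the same multiplicative code. It is checked by an additive midpoint test on $u-u_0$, where the shift keeps the auxiliary table in range. Alongside it run a multiplicative midpoint test for the $(1+x)(1+y)=1+z$ constraints and the check $u_{\sigma+e_i}=(1+q)u_\sigma$ for $x=q$ (\Cref{lem:constrainttest}).

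The second gap is Hamming closeness. Your spot-check cannot work when the codeword is the big grid table itself. A random entry $\beta(s)$ has $\Vert s\Vert_1$ of order $nk$, far outside the decodable range $\Vert d\Vert_1=O(k/n)$, so there is no decoded value to compare it with. Tables such as the step function of \Cref{sec:code}, or symmetric multi-step versions of it superimposed on an honest table, pass every midpoint and short-difference check while being Hamming-far from every codeword.

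The paper instead puts the codeword on the Boolean cube: $a_\sigma=\prod_i(1+q\alpha_i)^{\sigma_i}$ with $\Vert\sigma\Vert_1\le n$. Every codeword entry is then itself a short-difference decoding. The grid tables $L_1,L_2$ of side $512n^2$ appear only as auxiliary proof. The check $a_\sigma L(x)=L(x+\sigma)$ then yields genuine Hamming closeness (\Cref{lem:lineartest}). Agreement with the $u$-encoding is a separate test, $a_{\mu_1}u_{\sigma_2}=a_{\mu_2}u_{\sigma_1}$ with $\mu_1-\mu_2$ uniform on $\{-1,0,1\}^n$, finished off by \Cref{lem:schwartzzippel}.
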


Note that the value of $q$ needed depends on $n$, with larger values of $q$ requiring larger values of $n$. However, the error threshold $\zeta=\frac{1}{1000}$ does not depend on $n$. The proof of \Cref{lem:main} can be found in \Cref{sec:code,sec:constraints,sec:finalproof}. In the rest of this section, we explain how \Cref{lem:main} leads to a proof of \Cref{thm:main}. 

\begin{corollary}(Step 3)\label{cor:step3}
Fix $n$. Let $G_2$ be a constraint graph with alphabet $[\frac12, 2]^{n}$ and constraint set $\mathcal{P}\left(\mathcal{C}(q)_{[\frac12, 2]}^{2n}\right)$. If $q$ is sufficiently small, then in polynomial time we can construct a constraint graph $G_3$ with alphabet $[\frac12, 2]^4$ and constraint set $\mathcal{C}(q)_{[\frac12, 2]}^{8}$ such that:

\begin{itemize}
    \item $|E(G_3)|=\mathcal{O}_n(|E(G_2)|)$
    \item if $\text{UNSAT}(G_2)=0$ then $\text{UNSAT}(G_3)=0$
    \item $\text{UNSAT}(G_3)\ge \eta_3\cdot \text{UNSAT}(G_2)$, where $\eta_3=\frac{1}{5000}$
\end{itemize}
\end{corollary}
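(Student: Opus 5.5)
The plan is to apply \Cref{lem:composition} to $G_2$ with the assignment tester supplied by \Cref{lem:main}. The tester in \Cref{lem:main} is built for the constraint set $\mathcal{P}\left(\mathcal{C}(q)_{[\frac12, 2]}^{n'}\right)$ on $[\frac12,2]^{n'}$. An edge constraint of $G_2$ lives on $\Sigma^2=[\frac12,2]^{2n}$, so I will invoke \Cref{lem:main} with $n'=2n$. This choice is made once, for the fixed $n$, and it fixes a threshold on $q$: for every sufficiently small $q$ we obtain a tester with $\ell=4$, $\zeta=\frac{1}{1000}$, $\Gamma=[\frac12,2]$ and $\mathcal{D}=\mathcal{C}(q)^4_{[\frac12,2]}$. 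The remaining data (blow-up size $k$, auxiliary count $m$, randomness $R$, the code $f$) may depend on $n$ and $q$. Since both are fixed, these are constants.

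Plugging this tester into \Cref{lem:composition} gives a graph $G_3$ with alphabet $\Gamma^\ell=[\frac12,2]^4$ and constraint set $\mathcal{D}^{2\ell}_{[\frac12,2]}$. Each edge joins a test vertex $t_{e,r}$ (alphabet $[\frac12,2]^4$) to another vertex, or is a self-loop. Each constraint is either an equality between coordinates or a constraint of $\mathcal{D}$ applied to entries among the $8$ coordinates of the two endpoints. Both kinds therefore lie in $\mathcal{C}(q)^{8}_{[\frac12,2]}$, because $x=y$ and the null constraint belong to $\mathcal{C}(q)$. The other vertices $u_i$, $e_j$ carry a single value in $\Gamma$; I pad them to $\Gamma^4$ by repetition, which matches the assignment $(f_i(\cdot),\dots,f_i(\cdot))$ used in the proof of \Cref{lem:composition}. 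Padding with unconstrained extra coordinates would work equally well.

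The quantitative bounds follow directly from the lemma. Completeness is immediate. The edge count is $|E(G_3)|=(\ell+1)R|E(G_2)|=5R|E(G_2)|=\mathcal{O}_n(|E(G_2)|)$. The gap satisfies $\text{UNSAT}(G_3)\ge\frac{\zeta}{\ell+1}\text{UNSAT}(G_2)=\frac{1}{5000}\text{UNSAT}(G_2)$, which is exactly $\eta_3$.

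For polynomial running time, the tester depends only on $n$ and $q$, and the per-edge test function $(\mathcal{Q},\mathcal{T})$ is evaluated on one of finitely many constraints $c\in\mathcal{P}(\mathcal{C}(q)^{2n})$. This gives a constant-size lookup per edge of $G_2$, and the construction is linear in $|E(G_2)|$.

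The main obstacle is bookkeeping rather than mathematics. I must check that the constraint set produced by \Cref{lem:composition}, including the equality edges and the self-loop test constraints, really lands in $\mathcal{C}(q)^8_{[\frac12,2]}$ with alphabet $[\frac12,2]^4$. I must also confirm that the soundness argument of \Cref{lem:composition} (decoding via the nearest codeword of $f$) is compatible with \Cref{lem:main}'s notion of soundness. I take the latter as given, since both are stated in the paper.
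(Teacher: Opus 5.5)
Your proposal is correct and follows the paper's argument exactly: the paper's proof is just ``straightforward application of \Cref{lem:main} and \Cref{lem:composition}.'' You also fill in the bookkeeping the paper leaves implicit, namely invoking \Cref{lem:main} with $2n$ in place of $n$, checking that the equality edges and self-loop tests land in $\mathcal{C}(q)^{8}_{[\frac12,2]}$, and computing $\frac{\zeta}{\ell+1}=\frac{1}{5000}$ and $|E(G_3)|=5R\,|E(G_2)|$ with $R$ depending only on $n$.
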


\begin{proof}
Straightforward application of \Cref{lem:main} and \Cref{lem:composition}.
\end{proof}

With steps (1), (2) and (3), we can prove a gap amplification lemma:

\begin{lemma}\label{lem:gapamplification}
There is some $\alpha>0$ such that, for sufficiently small $q$, given a constraint graph $G$ with alphabet $[\frac12, 2]^4$ and constraint set $\mathcal{C}(q)_{[\frac12, 2]}^8$, we can construct in polynomial time a constraint graph $G'$ with the same alphabet and constraint set, such that:

\begin{itemize}
    \item $|E(G')|=\mathcal{O}(E(G))$
    \item if $\text{UNSAT}(G)=0$ then $\text{UNSAT}(G')=0$
    \item $\text{UNSAT}(G')\ge \text{min}(\alpha, 2\cdot \text{UNSAT}(G))$
\end{itemize}
\end{lemma}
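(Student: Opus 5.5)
We chain \Cref{lem:step1}, \Cref{lem:step2} and \Cref{cor:step3}, choosing $t$ and then $q$ appropriately, and track alphabets and constraint sets so that the output has the same form as the input.

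Start with $G$, with alphabet $[\frac12,2]^4$ and constraint set $\mathcal{C}(q)^8_{[\frac12,2]}$. Apply \Cref{lem:step1} to obtain a $d$-regular $G_1$ with $\lambda(G_1)\le\lambda$ and $\text{UNSAT}(G_1)\ge\eta_1\text{UNSAT}(G)$. The constants $\eta_1,d,\lambda$ are universal. The only added constraints are equality on $[\frac12,2]^4$, which is a conjunction of four constraints $x=y$, and the null constraint. Both lie in $\mathcal{P}(\mathcal{C}(q)^8_{[\frac12,2]})$, since $\mathcal{C}(q)$ contains $x=y$ and the null constraint.

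Next let $\eta_2$ be the constant from \Cref{lem:step2} for these fixed $d,\lambda$. Set $\eta_3=\frac1{5000}$ and choose $t=\lceil 2/(\eta_1\eta_2\eta_3)\rceil$. Applying \Cref{lem:step2} gives $G_2$ with alphabet $([\frac12,2]^4)^{d^{t+1}}=[\frac12,2]^{n}$, where $n=4d^{t+1}$. Its constraints are conjunctions of constraints from $\mathcal{C}(q)$ applied to coordinates among the $2n$ coordinates of the two endpoints. A conjunction of conjunctions is again a conjunction, so these lie in $\mathcal{P}(\mathcal{C}(q)^{2n}_{[\frac12,2]})$. We also get $\text{UNSAT}(G_2)\ge\eta_2t\min(\text{UNSAT}(G_1),\frac1t)$. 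Since $n$ depends only on the universal constants $d,\lambda$ and the fixed $t$, we may now take $q$ small enough for \Cref{cor:step3} to apply with this $n$. This ordering is why the statement says ``for sufficiently small $q$.'' \Cref{cor:step3} then yields $G'=G_3$ with alphabet $[\frac12,2]^4$, constraint set $\mathcal{C}(q)^8_{[\frac12,2]}$, $\text{UNSAT}(G')\ge\eta_3\text{UNSAT}(G_2)$, and completeness preserved.

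For the bookkeeping: each step multiplies the edge count by a constant depending only on $t$ and $n$, so $|E(G')|=\mathcal{O}(|E(G)|)$. Each step preserves satisfiability, and each runs in polynomial time. For the gap, split into two cases. If $\text{UNSAT}(G_1)\le\frac1t$, then $\text{UNSAT}(G')\ge\eta_3\eta_2t\eta_1\text{UNSAT}(G)\ge2\,\text{UNSAT}(G)$. Otherwise $\text{UNSAT}(G')\ge\eta_3\eta_2$. Taking $\alpha=\eta_2\eta_3$ gives $\text{UNSAT}(G')\ge\min(\alpha,2\,\text{UNSAT}(G))$.

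The main obstacle is the order of quantifiers between $q$ and $n$. The constant $\alpha$ must not depend on $q$, and it does not, because $\eta_2$ and $\eta_3$ are independent of $q$. The threshold for $q$ depends on $n$, and hence on $t$, but $t$ is fixed before $q$ is chosen, so there is no circularity. A secondary point is checking that the constraint sets really line up, for example that equality on tuples and the nested conjunctions from step (2) fit the hypotheses of \Cref{cor:step3}. That check is routine.
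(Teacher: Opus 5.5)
Your proposal is correct and follows the paper's proof essentially verbatim. You use the same $t=\lceil 2/(\eta_1\eta_2\eta_3)\rceil$ and $\alpha=\eta_2\eta_3$, chain \Cref{lem:step1}, \Cref{lem:step2} and \Cref{cor:step3} with $q$ fixed only after $t$, and note the same containment of $G_2$'s constraint set in $\mathcal{P}(\mathcal{C}(q)^{2n}_{[\frac12,2]})$. Your $n=4d^{t+1}$, which gives $2n=8d^{t+1}$ coordinates, is actually the bookkeeping that matches the hypotheses of \Cref{cor:step3} exactly, whereas the paper loosely writes $n=8d^{t+1}$.
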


\begin{proof}
Let $t=\lceil\frac{2}{\eta_1\eta_2\eta_3}\rceil$, where $\eta_1, \eta_2$ and $\eta_3$ are the constants from \Cref{lem:step1}, \Cref{lem:step2}, and \Cref{cor:step3} respectively. Suppose $q$ is small enough that \Cref{cor:step3} holds with $n=8d^{t+1}$, where $d$ is the constant from \Cref{lem:step1}. 

Let $\alpha=\eta_2\eta_3$. Applying \Cref{lem:step1,lem:step2} to $G$, we get a graph $G_2$ with:

\[\text{UNSAT}(G_2)\ge \eta_2t\cdot\text{min}\left(\eta_1\cdot\text{UNSAT}(G), \frac1t\right)\ge \frac{1}{\eta_3}\text{min}\left(2\cdot \text{UNSAT}(G), \alpha\right)\]

$\mathcal{C}(q)$ contains the null constraint, so the constraint set of $G_2$ is 

\[\mathcal{P}\left((\mathcal{C}(q)_{[\frac12, 2]}^8\cup \{x=y, [\frac12, 2]^8\})_{[\frac12, 2]^4}^{2d^{t+1}}\right)\]. 

Since $\mathcal{C}(q)$ contains a constraint $x=y$, it is not too hard to see that this is contained in $\mathcal{P}\left(\mathcal{C}(q)_{[\frac12, 2]}^{8d^{t+1}}\right)$. 

So we can apply \Cref{cor:step3} to obtain $G'$. This has $\text{UNSAT}(G')\ge \text{min}(2\cdot \text{UNSAT}(G), \alpha)$, and the other required properties are straightforward to verify.
\end{proof}

\subsection{Inapproximability of Max-ETR-$\mathcal{C}(q)$}

For each $q>0$, we define ETR-$\mathcal{C}(q)$ to be real constraint satisfaction problem with constraints of the form in $\mathcal{C}(q)$. So an instance of ETR-$\mathcal{C}(q)$ asks for an assignment of real variables in $[\frac12, 2]$ that satisfy constraints of form $x=y$, $x=(1+q)y$, $x+y=z$, $x+y=z+w$, $xy=z$, and $xy=zw$.

We also define the optimization version MAX-ETR-$\mathcal{C}(q)$, which asks for assignment satisfying the largest possible fraction of constraints. For $0\le a<b\le 1$ there is a corresponding gap problem Gap-MAX-ETR-$\mathcal{C}(q)_{a}^b$. 

Using \Cref{lem:gapamplification}, we can can obtain a hardness of approximation result for MAX-ETR-$\mathcal{C}(q)$:

\begin{theorem}\label{thm:etrq}
For sufficiently small $q>0$ there is some $0\le c<1$ such that ETR-$\mathcal{C}(q)$ polynomial-time reduces to Gap-MAX-ETR-$\mathcal{C}(q)_{c}^1$
\end{theorem}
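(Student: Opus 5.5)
We will prove \Cref{thm:etrq} in the same way as Dinur's argument: convert the instance into a constraint graph, and then apply \Cref{lem:gapamplification} logarithmically many times. Fix $q$ small enough for \Cref{lem:gapamplification} to hold, and let $\alpha$ be the constant it provides.

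\textbf{Encoding the instance.} Let $\Phi$ be an instance of ETR-$\mathcal{C}(q)$ with $N$ variables and $M$ constraints. We build a constraint graph $G_0$ with alphabet $[\frac12,2]^4$. It has one vertex per constraint of $\Phi$, and the four coordinates of that vertex hold local copies of the (at most four) variables the constraint mentions. We attach the constraint itself as a self-loop; it lies in $\mathcal{C}(q)_{[\frac12,2]}^8$, since it only touches the first four of the eight coordinates. Consistency between copies is enforced by edges carrying the constraint $x=y$ between coordinates: for each variable, we join its copies along a path (or a cycle).

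If $\Phi$ is satisfiable, then $G_0$ is satisfiable. If $\Phi$ is unsatisfiable, we need $\text{UNSAT}(G_0)\ge 1/\mathrm{poly}(M)$. Here we use the feature that every constraint of $\Phi$ has its own self-loop. Take any assignment of $G_0$ that violates fewer than one edge. Then it violates none: all copies of each variable agree and every constraint holds, which gives a satisfying assignment of $\Phi$. Hence any assignment of $G_0$ violates at least one edge, and $\text{UNSAT}(G_0)\ge 1/|E(G_0)|$, with $|E(G_0)|=\mathcal{O}(M)$. The domain restriction to $[\frac12,2]$ is automatic from the alphabet.

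\textbf{Amplification.} We apply \Cref{lem:gapamplification} $T=\lceil\log_2|E(G_0)|\rceil$ times, producing $G_0,G_1,\dots,G_T$. Completeness is preserved at each step. By induction, $\text{UNSAT}(G_i)\ge\min(\alpha,2^i\,\text{UNSAT}(G_0))$, so $\text{UNSAT}(G_T)\ge\alpha$ in the unsatisfiable case. Each step multiplies the edge count by a constant $C$, so the size is $C^T|E(G_0)|=\mathrm{poly}(M)$. Each step also runs in polynomial time in its input size, so the whole reduction is polynomial.

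\textbf{Back to ETR-$\mathcal{C}(q)$.} Finally, we translate $G_T$ into a MAX-ETR-$\mathcal{C}(q)$ instance. Each vertex becomes four real variables in $[\frac12,2]$, and each edge becomes a single $\mathcal{C}(q)$ constraint on the eight variables of its endpoints. This works because the constraint set $\mathcal{C}(q)_{[\frac12,2]}^8$ consists precisely of single $\mathcal{C}(q)$ constraints applied to chosen coordinates. Null constraints can be dropped, or replaced by a trivially true constraint such as $x=x$; keeping them as the tautology $x=x$ leaves the counts unchanged. So assignments to the instance correspond exactly to assignments of $G_T$, and satisfied constraints correspond to satisfied edges. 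Taking $c=1-\alpha$ gives the gap: the optimum is $1$ if $\Phi$ is satisfiable and at most $c$ otherwise.

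The main obstacle is ensuring that the initial instance has inverse-polynomial UNSAT. This is handled by including each original constraint as its own edge. The rest is bookkeeping that follows directly from \Cref{lem:gapamplification}.
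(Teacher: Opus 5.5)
Your proof is correct and follows the paper's argument. You encode the instance as a constraint graph over $[\frac12,2]^4$ with constraints in $\mathcal{C}(q)_{[\frac12,2]}^8$, use $\text{UNSAT}\ge 1/|E(G)|$ in the unsatisfiable case, and apply \Cref{lem:gapamplification} logarithmically many times to reach $\text{UNSAT}\ge\alpha$, giving $c=1-\alpha$. Beyond the paper, you spell out the vertex-per-constraint encoding (self-loops plus coordinate-wise equality edges) and the translation of $G_T$ back to an ETR-$\mathcal{C}(q)$ instance, including the handling of null constraints, all of which are sound.
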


\begin{proof}
An instance of ETR-$\mathcal{C}(q)$ can be represented as a constraint graph $G$ with alphabet $[\frac12, 2]^4$ and constraint set $\mathcal{C}(q)_{[\frac12, 2]}^8$. If the instance is satisfiable, then $\text{UNSAT}(G)=0$. Otherwise, $\text{UNSAT}(G)\ge \frac{1}{|E(G)|}$. 

We can then apply \Cref{lem:gapamplification} to $G$ a total of $\lceil\log_2(|E(G)|)-\log_2(\alpha)\rceil$ times to obtain a graph $G'$. Each iteration increases the size of the graph by at most a linear factor, so $|E(G')|$ is at most polynomial in the size of $G$. Each step runs in time polynomial in the size of its input, so the entire process takes polynomial time.

If $\text{UNSAT}(G)=0$, then $\text{UNSAT}(G')=0$. Otherwise, $\text{UNSAT}(G)\ge \frac{1}{|E(G)|}$, so we have $\text{UNSAT}(G')\ge \alpha$, where $\alpha$ is the constant from \Cref{lem:gapamplification}. So this gives a polynomial time reduction from ETR-$\mathcal{C}(q)$ to Gap-MAX-ETR-$\mathcal{C}(q)_{1-\alpha}^1$
\end{proof}

In the next section, we give a gap-preserving reduction from each ETR-INV to ETR-INV. Combined with \Cref{thm:etrq} and a proof that ETR-$\mathcal{C}(q)$ is $\exists\mathbb{R}$-hard, this proves \Cref{thm:main}. 

\section{Manipulating formulas}\label{sec:formulaconversion}

In this section, we prove a few lemmas that involve algebraic manipulations of formulas and constraints. We will prove that ETR-$\mathcal{C}(q)$ is $\exists\mathbb{R}$-hard, and that there is a gap-preserving reduction from MAX-ETR-$\mathcal{C}(q)$ to MAX-ETR-INV. Along the way, we will also prove a technical lemma that is useful for the construction of assignment testers. 

\paragraph{$\exists\mathbb{R}$-hardness of ETR-$\mathcal{C}(q)$.}

The $\exists\mathbb{R}$-hardness of ETR-$\mathcal{C}(q)$ follows almost immediately from a lemma in \cite{AAM2021}. 

\begin{lemma}\label{lem:cqhardness}
For each $q>0$, the problem $\mathcal{C}(q)$ is $\exists\mathbb{R}$-hard.
\end{lemma}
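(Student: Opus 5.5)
We reduce from ETR-INV, which \cite{AAM2021} showed to be $\exists\mathbb{R}$-complete. The issue is that ETR-INV uses the constraints $x=1$ and $xy=1$, which are not directly in $\mathcal{C}(q)$. The constraints $x+y=z$ are already in $\mathcal{C}(q)$, and $xy=zw$ will handle products. So the whole task is to create, inside $\mathcal{C}(q)$, a variable forced to equal $1$, or some other usable constant.

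\textbf{Step 1: pin a constant.} Introduce a variable $u$ with the constraint $uu=u$, an instance of $xy=z$ with repeated entries. Since $u\in[\frac12,2]$ rules out $u=0$, this forces $u=1$. This step is exactly where the constraint $xy=z$ is essential, and it makes the constant $q$ irrelevant: the constraint $x=(1+q)y$ is simply never used in the reduction.

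\textbf{Step 2: translate the remaining clauses.} Map each ETR-INV clause $x=1$ to $x=u$, an instance of $x=y$. Map each clause $xy=1$ to $xy=u$, an instance of $xy=z$. Map each clause $x+y=z$ to itself. All variables keep the domain $[\frac12,2]$.

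\textbf{Step 3: check equivalence.} Any satisfying assignment of the ETR-INV instance extends to the new instance by setting $u=1$. Conversely, any satisfying assignment of the new instance has $u=1$ by Step 1, so it satisfies the original clauses.

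The reduction is clearly polynomial, and the only point needing care is Step 1. If the paper's $xy=z$ were somehow required to act on distinct variables, Step 1 would fail. In that case I would fall back on the lemma in \cite{AAM2021} that reduces ETR to ETR-INV with the constant $1$ available, and introduce the constant by other means. One option is $x=(1+q)y$ combined with additions: $y+y=z$ and $x+x=\ldots$ give relations whose ratios pin the variables in range. For example, $a+a=b$ together with $ab=cd$ gives enough rigidity. This fallback should not be needed, since the constraint sets $\mathcal{C}(q)^k_\Sigma$ explicitly allow non-distinct indices.
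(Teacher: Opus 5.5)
Your reduction is correct and is essentially the paper's argument: both pin the constant $1$ with a constraint $u\cdot u=u$ (valid because $0\notin[\frac12,2]$), note that every other clause is already of a type in $\mathcal{C}(q)$, and never use $x=(1+q)y$. The only difference is the source problem: the paper starts from the intermediate problem ETR$_{[\frac12, 2]}^{1,+,\cdot}$ of \cite{AAM2021}, which already has $xy=z$ and so only needs $x=1$ rewritten as $x\cdot x=x$, whereas you start from ETR-INV and also simulate $xy=1$ by $xy=u$; your fallback paragraph is vague but unnecessary, since repeated variables are allowed and the paper itself uses $x\cdot x=x$.
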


\begin{proof}
Abrahamsen, Adamaszek, and Miltzow \cite{AAM2021} define the problem ETR$_{[\frac12, 2]}^{1,+,\cdot}$ to be the real constraint satisfaction problem with variables in $[\frac12, 2]$ and constraints of form $x=1$, $x+y=z$, and $xy=z$. They show that this problem is $\exists\mathbb{R}$-hard (see Section 4 in \cite{AAM2021}).

An instance of ETR$_{[\frac12, 2]}^{1,+,\cdot}$ is already almost an instance of ETR-$\mathcal{C}(q)$, since the constraints $x+y=z$ and $xy=z$ can appear in an instance of ETR-$\mathcal{C}(q)$. Since $0\notin [\frac12, 2]$ we can replace a constraint $x=1$ with $x\cdot x=x$.
\end{proof}

\paragraph{Shrinking a constraint.}

We now need a technical lemma that writes a constraint in $\mathcal{C}(q)$ in terms of simpler constraints. This will be used to help us convert an instance of MAX-ETR-$\mathcal{C}(q)$ to an instance of ETR-INV, and also in the construction our assignment tester. 

\begin{lemma}\label{lem:constraintshrinking}
Let $V\subseteq \mathbb{R}^4$ be a constraint in $\mathcal{C}(q)$. Then there is a subset $V'$ of $\mathbb{R}^{74}$ such that:

\begin{itemize}
    \item $V'$ is cut out by constraints of form $x=q$, $x+y=z$, and $(1+x)(1+y)=(1+z)$, with at most $1$ constraint of form $x=q$, at most $52$ constraints of form $x+y=z$, and at most $18$ constraints of form $(1+x)(1+y)=1+z$
    \item if $x\in V$, then there is some $y\in [-6q, 6q]^{70}$ such that $(qx, y)\in V'$, and
    \item if $(x, y)\in V'$ with $x\in \mathbb{R}^4$ and $y\in \mathbb{R}^{70}$, then $q^{-1}x\in V$
\end{itemize}
\end{lemma}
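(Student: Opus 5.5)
The plan is to work with the rescaled variables $X=qx$, $Y=qy$, $Z=qz$, $W=qw$, which lie in $[\frac q2,2q]$. We build $V'$ from a small gadget library and then check the three bullet points. Linear constraints of $\mathcal{C}(q)$ need no change: $x+y=z$ becomes $X+Y=Z$, and $x+y=z+w$ becomes $X+Y=S$, $Z+W=S$ with one auxiliary $S\le 4q$. The constraint $x=y$ becomes $X+O=Y$ with a zero variable $O$, forced to be $0$ by $O+O=O$.

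The single constraint $t=q$ fixes a variable $t$ equal to $q$; it is needed to compensate for the degree mismatch under scaling. The basic gadget is a product of two small quantities. From $(1+a)(1+b)=1+c$ we get $ab=c-a-b$. So introducing $c$ together with additions $a+b=s$ and $p+s=c$ produces a variable $p$ constrained to equal $ab$ exactly. If $|a|,|b|\le 2q$, then $|c|\le 4q+4q^2\le 6q$ for small $q$, so all auxiliaries stay in $[-6q,6q]$. With this gadget:
\begin{itemize}
\item $xy=z$ becomes $XY=tZ$: compute $p_1=XY$ and $p_2=tZ$, then force $p_1=p_2$ via $p_1+O=p_2$.
\item $xy=zw$ becomes $XY=ZW$ in the same way.
\item $x=(1+q)y$ becomes $X=Y+tY$: compute $p=tY$ and add.
\end{itemize}
Each product value is $O(q^2)$, which is well inside the allowed box.

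The main obstacle is the third bullet. Soundness must give $q^{-1}x\in V\subseteq[\frac12,2]^4$, so the range constraints themselves have to be enforced using only equalities. I would do this with squares. For each of the four coordinates $X$, set $D=X-\frac q2$ and $E=2q-X$. The constants $\frac q2$ and $2q$ come from $h+h=t$ and $t+t=u$. Then impose $s\cdot s=t\cdot D$ and $r\cdot r=t\cdot E$ using the product gadget.

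These force $D,E\ge 0$, since $t=q>0$, and hence $X\in[\frac q2,2q]$. The point of multiplying by $t$ rather than writing $D=s^2$ directly is size. The naive witness $s=\sqrt{D}$ is of order $\sqrt q$, which violates the box $[-6q,6q]$. With the factor $t$, the witness is $s=\sqrt{qD}\le\sqrt{3/2}\,q$. For completeness, the witnesses are obtained by choosing these square roots, and all intermediate $c$'s are bounded by $6q$.

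Finally I would tally the constraints. There is one $x=q$ constraint. Each coordinate gets two square gadgets, so the range checks use $8$ product gadgets plus additions. The constraint type itself uses at most two more products. This gives at most $10$ or so multiplicative constraints, under the stated $18$ cap, with additions around $3$ per product plus constants. I expect the bookkeeping to fit within $52$ additions and $70$ auxiliary variables. Padding with dummy auxiliaries set to $0$, which satisfy $O+O=O$-type constraints, brings the counts to exactly the stated sizes if needed.
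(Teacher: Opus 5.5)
Your construction is essentially the paper's, and the construction itself is sound. The shared ingredients are the rescaling by $q$, the gadget $(1+a)(1+b)=1+c$ with two additions that extracts $ab$, and the single constant $t=q$ that absorbs the degree mismatch in $xy=z$ and $x=(1+q)y$. You also use the same range check: $q$ times a slack variable must be a square, so the witness $\sqrt{q\cdot\text{slack}}$ is $O(q)$. The differences are cosmetic: a zero variable via $O+O=O$, shared constants $\frac q2$ and $2q$, and an explicit product $tY$ for $x=(1+q)y$. Soundness and the $[-6q,6q]$ witness bounds check out for $q\le\frac12$.

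The gap is in the final tally, and here the exact numbers ($70$ auxiliaries, at most $52$ additions, at most $18$ multiplications) are part of the statement.

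\textbf{Multiplications.} Each square check $s\cdot s=t\cdot D$ needs two multiplicative constraints, one for $s\cdot s$ and one for $t\cdot D$. So the eight range checks alone use $16$. With $xy=zw$ or $xy=z$ the total is exactly $18$, not ``10 or so''; there is no slack at all.

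\textbf{Auxiliaries and additions.} Suppose you equate the two products via $p_1+O=p_2$, as you describe. Then each of the nine product comparisons (eight range checks plus the main constraint) costs $6$ auxiliaries and $5$ additions. The worst case gives $8\cdot(1+1+6)+4+6=74$ auxiliaries, counting $t,h,u,O$. It also gives $8\cdot 6+3+5=56$ additions. Both exceed the allowed $70$ and $52$, so the construction as written does not prove the lemma as stated.

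\textbf{Fix.} Do what the paper does with its shared $\alpha_3$: let a single variable $p$ be the output of both product gadgets, via $p+\sigma_1=c_1$ and $p+\sigma_2=c_2$. This saves one auxiliary and one addition per comparison. The totals become $65$ auxiliaries, $47$ additions and $18$ multiplications. Then pad to $70$ coordinates with unconstrained variables set to $0$.
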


\begin{proof}
We start by creating one auxiliary variable $\alpha_q$ and a constraint $\alpha_q=q$. 

If $u_1, u_2, u_3$ and $u_4$ are variables in $V'$, then we can create a constraint $u_1u_2=u_3u_4$ using $5$ auxiliary variables $\alpha_1, \alpha_2, \alpha_3, \alpha_4$ and $\alpha_5$ by creating $6$ constraints (two multiplication and four addition):

\[\begin{array}{c}
(1+u_1)(1+u_2)=(1+\alpha_1)\\
\alpha_2+u_1=\alpha_1\\
\alpha_3+u_2=\alpha_2\\
(1+u_3)(1+u_4)=(1+\alpha_4)\\
\alpha_5+u_3=\alpha_4\\
\alpha_3+u_4=\alpha_5\\
\end{array}\]

These imply:

\[\begin{array}{c}
\alpha_1=u_1u_2+u_1+u_2\\
\alpha_2=\alpha_1-u_1=u_1u_2+u_2\\
\alpha_3=\alpha_2-u_2=u_1u_2\\
\alpha_4=u_3u_4+u_3+u_4\\
\alpha_5=\alpha_4-u_3=u_3u_4+u_4\\
\alpha_3=\alpha_5-u_4=u_3u_4\\
\end{array}\]

So $u_1u_2=\alpha_3=u_3u_4$. The above also gives a satisfying assignment for the auxiliary variables when $u_1u_2=u_3u_4$. Note that if the $u_i$ each have absolute value at most $a$, then this assignment of the auxiliary variables has $|\alpha_i|\le 2a+a^2$. 

For each constraint on $V$, we can now specify corresponding constraints on $V'$. Using in some cases the $u_1u_2=u_3u_4$ constraint as a building block, the following table shows how to do this:

\begin{center}
\begin{tabular}{|c|c|c|c|}
\hline
Constraint type&Aux. Vars&Constraints on $V'$&Satisfying assignment\\
\hline
$x=y$&$\alpha_1$&$\begin{array}{c}
x'+\alpha_1=y'\\y'+\alpha_1=x'
\end{array}$&$\alpha_1=0$\\
\hline
$x+y=z$&none&$x'+y'=z'$&-\\
\hline
$x+y=z+w$&$\alpha_1$&$\begin{array}{c}
x'+y'=\alpha_1\\z'+w'=\alpha_1
\end{array}$&$\alpha_1=x'+y'$\\
\hline
$x=(1+q)y$&$\alpha_1$&$\begin{array}{c}
(1+\alpha_q)(1+y')=(1+\alpha_1)\\x'+\alpha_q=\alpha_1
\end{array}$
&$\begin{array}{c}\alpha_1=x'+q\end{array}$\\
\hline
$xy=z$&$\begin{array}{c}5\end{array}$&$\begin{array}{c}
x'y'=z'\alpha_q
\end{array}$&-\\
\hline
$xy=zw$&$\begin{array}{c}5\end{array}$&$x'y'=z'w'$&-\\
\hline
\end{tabular}
\end{center}

Note that a constraint $xy=z$ on $V$ should correspond to a constraint equivalent to $qx'\cdot qy'=qz'$ on $V'$. 

For each constraint type, it is straightforward to verify that if a point $(x', y)\in \mathbb{R}^{74}$ satisfies the constraints in the third column, then $q^{-1}x'$ satisfies the original constraint. 

Also, for each constraint type, if $x\in \mathbb{R}^4$ satisfies the constraint, then we can assign values in $[-6q, 6q]$ to the auxiliary variables (include the extra auxiliary variables from the $u_1u_2=u_3u_4$ constraints) so that the constraints in the third column are satisfied by $(qx, y)\in \mathbb{R}^{74}$ for some $y\in \mathbb{R}^{70}$. 

Since $V$ is a subset of $[\frac12, 2]^4$, each variable in $V$ also implicitly has constraints $x\ge \frac12$ and $x\le 2$. We so need to represent these range constraints:


\begin{center}
\begin{tabular}{|c|c|c|c|}
\hline
Bound type&Aux. Vars&Constraints on $V'$&Satisfying assignment\\
\hline
$x\le 2$&$\begin{array}{c}
\alpha_1,\alpha_2,\alpha_3\\
+5
\end{array}$&$\begin{array}{c}
\alpha_1+\alpha_1=x'\\\alpha_1+\alpha_2=\alpha_q\\
q\alpha_2=\alpha_3\alpha_3\\
\end{array}$&$\begin{array}{c}
\alpha_1=\frac12x'\\
\alpha_2=q-\frac12x'\\
\alpha_3=\sqrt{q\alpha_2}
\end{array}$\\
\hline
$x\ge \frac12$&$\begin{array}{c}
\alpha_1,\alpha_2,\alpha_3\\
+5
\end{array}$&$\begin{array}{c}
x'+x'=\alpha_1\\\alpha_2+\alpha_q=\alpha_1\\
q\alpha_2=\alpha_3\alpha_3\\
\end{array}$&$\begin{array}{c}
\alpha_1=2x'\\
\alpha_2=2x'-q\\
\alpha_3=\sqrt{q\alpha_2}
\end{array}$\\
\hline
\end{tabular}
\end{center}

In each case, the idea is that we can only assign a real value $\alpha_3=\sqrt{q\alpha_2}$ if $\alpha_2$ is positive. A satisfying assignment has $\alpha_2\in [0, 4q]$ (at worst), so $\alpha_3=\sqrt{q\alpha_2}\in [0, 2q]$. So the extra auxiliary variables needed for the $u_1u_2=u_3u_4$ constraint have absolute value at most $4q+(2q)^2\le 6q$.

Each variable in $V$ gets one upper bound constraint $x\le 2$ and one lower bound constraint $x\ge \frac12$. 

Each constraint type requires $8$ auxiliary variables for each of the upper and lower bounds on each of the up to $4$ input variables, up to $5$ auxiliary variables for the constraint and the one auxiliary variable $\alpha_q$, giving a total of $70$ auxiliary variables.

There is at most one constraint of form $x=q$. Each upper or lower bound constraints uses $6$ addition constraints and $2$ multiplication constraints. Each of the constraints in the first table uses at most $4$ addition and $2$ multiplication constraints. So in total we need at most $52$ constraints of form $x+y=z$ and $18$ constraints of form $(1+x)(1+y)=(1+z)$.  
\end{proof}

The assignment tester that we will create has auxiliary variables that may need to be much larger in magnitude than the initial set of variables. The following transformation is used to shrink a semialgebraic subset of $[-2, 2]^n$ into a much smaller region:

\begin{lemma}\label{lem:threeconstraintconversion}
For each $n\in \mathbb{N}$ there is some $m\in \mathbb{N}$, such that for any $q\in (0, \frac12]$ and a constraint $V\in \mathcal{P}\left(\mathcal{C}(q)_{[\frac12, 2]}^{n}\right)$, we can construct a subset $V'$ of $\mathbb{R}^{n+m}$ such that:

\begin{itemize}
    \item $V'$ is cut out by constraints of form $x+y=z$, $(1+x)(1+y)=(1+z)$, and $x=q$, with exactly one constraint of form $x=q$,
    \item for any $x\in V$, there is a $y\in [-6q, 6q]^m$ such that $(qx, y)\in V'$, and
    \item if $(x, y)\in V'$ for some $x\in \mathbb{R}^n$, then $q^{-1}x\in V$
\end{itemize}
\end{lemma}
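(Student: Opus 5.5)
The plan is to assemble $V'$ from copies of the gadget in \Cref{lem:constraintshrinking}, one per conjunct of $V$, and then merge all the $x=q$ constraints into one. By definition of $\mathcal{P}\left(\mathcal{C}(q)_{[\frac12, 2]}^{n}\right)$, $V$ is an intersection of constraints, each obtained by applying some $c\in\mathcal{C}(q)$ to a $4$-tuple of coordinate indices $(i_1,\dots,i_4)\in[n]^4$. There are only finitely many such pairs $(c,(i_1,\dots,i_4))$: at most $7n^4$. So we may assume, after deleting duplicate conjuncts, that $V=\bigcap_{j=1}^{N} c_j$ with $N\le 7n^4$. We also add, for each coordinate $i\in[n]$, the conjunct $x_i=x_i$, i.e.\ the constraint ``$x=y$'' applied to the index pair $(i,i)$. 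This adds nothing to $V$, but it makes every coordinate lie in some conjunct, so every coordinate receives the range constraints $\frac12\le x_i\le 2$ that are implicit in $V\subseteq[\frac12,2]^n$. We set $m=70N$, a bound depending only on $n$; if fewer auxiliary variables are needed we can pad with unconstrained dummy variables, which may be set to $0$.

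For each conjunct $c_j$, apply \Cref{lem:constraintshrinking} to get $V'_j\subseteq\mathbb{R}^{4+70}$. We identify its first four coordinates with the coordinates $x_{i_1},\dots,x_{i_4}$ of the ambient $\mathbb{R}^n$. Repeated indices are allowed, since the defining constraints are just equations on variables and may mention the same variable twice. Its $70$ auxiliary coordinates get their own fresh block. Each $V'_j$ contains its own variable $\alpha_q^{(j)}$ with constraint $\alpha_q^{(j)}=q$. We replace all of these by a single shared variable $\alpha_q$ with one constraint $\alpha_q=q$. This is harmless because in the proof of \Cref{lem:constraintshrinking} every occurrence of $q$ enters only through $\alpha_q$. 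The coefficient appearing in ``$q\alpha_2=\alpha_3\alpha_3$'' is realized by the $u_1u_2=u_3u_4$ gadget with $u_1=\alpha_q$. Then $V'$ is the set cut out by the union of all these constraints, and it uses exactly one constraint of form $x=q$.

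Both properties reduce directly to the per-conjunct properties of \Cref{lem:constraintshrinking}. \textbf{Completeness:} given $x\in V$, each restriction $(x_{i_1},\dots,x_{i_4})$ lies in $c_j$. The lemma then gives auxiliary values in $[-6q,6q]^{70}$, with $\alpha_q=q$ common to all blocks. Because the blocks are disjoint apart from the shared $qx$ and $\alpha_q$, these values glue into a single $y\in[-6q,6q]^m$ with $(qx,y)\in V'$. \textbf{Soundness:} if $(x,y)\in V'$, then projecting onto each block gives a point of $V'_j$. The third property of \Cref{lem:constraintshrinking} gives $q^{-1}(x_{i_1},\dots,x_{i_4})\in c_j$ for every $j$. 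This includes membership in $[\frac12,2]^4$, since the range gadgets are part of each $V'_j$. Hence $q^{-1}x\in V$.

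The only real subtlety is the bookkeeping for the shared $\alpha_q$: we must check that each gadget of \Cref{lem:constraintshrinking} uses $q$ solely through the variable $\alpha_q$, so that sharing it preserves both directions. A second point is making sure every coordinate gets its range constraints, which the dummy conjuncts $x_i=x_i$ handle. Everything else is a direct union of constraint systems.
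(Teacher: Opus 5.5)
Your proposal is correct and is essentially the paper's construction. It uses the gadgets from \Cref{lem:constraintshrinking}, range gadgets on every coordinate, and a single shared $\alpha_q$; sharing is safe simply because every copy is forced to equal $q$. The paper attaches the range gadgets once per variable, which your dummy conjuncts $x_i=x_i$ emulate. The only real difference is how you bound $m$. You bound the number of conjuncts via the finiteness of $\mathcal{C}(q)_{[\frac12,2]}^n$ ($N\le 7n^4$, so $m=O(n^4)$). The paper instead uses a rank argument: the linear constraints cut out a subspace, and so do the multiplicative ones after taking logarithms, so at most $n$ of each are needed, giving $m=1+22n$. That value is quoted later in \Cref{lem:constrainttestsatisfying}, but only its dependence on $n$ matters there, so your bound suffices.
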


\begin{proof}
$V$ is cut out by a conjunction of constraints of the form in $\mathcal{C}(q)$. For each of these constraints, we use the first table from the proof of \Cref{lem:constraintshrinking} to create a corresponding set of constraints on $V'$. For each variable, we use the constraints specified by the second table in the proof of \Cref{lem:constraintshrinking} to add an upper and lower bound. 

This might create many variables with constraints $x=q$. However, these can be merged into one auxiliary variable $\alpha_q$.

The constraints $x=y$, $x=(1+q)y$, $x+y=z$ and $x+y=z+w$ are linear, so we can assume that $V$ involves at most $n$ of these constraints. If there are more, then some of the constraints are redundant. Each of these constraints requires at most $1$ auxiliary variable. 

Similarly, the constraints $xy=z$ and $xy=zw$ become linear after taking the logarithm (note that the variables in appearing in such constraints are restrict to being positive). So we can similarly assume that there are at most $n$ of these constraints, each requiring $5$ auxiliary variables. 

Each variable has an upper and a lower bound constraint, which each require $8$ auxiliary variables. So in total, we need at most $m=1+22n$ auxiliary variables.
\end{proof}

\paragraph{A gap-preserving reduction from MAX-ETR-$\mathcal{C}(q)$ to MAX-ETR-INV.}

Recall that an instance of MAX-ETR-INV asks us to find an assignment of variables in $[\frac12, 2]$ that minimizes the fraction of unsatisfied constraints, where the constraints are of form $x=1$, $xy=1$ or $x+y=z$. 

\Cref{thm:etrq} says that Gap-MAX-ETR-$\mathcal{C}(q)_c^1$ is $\exists\mathbb{R}$ hard for sufficiently small $q$ and some $c<1$. We now want to transform an instance of Gap-MAX-ETR-$\mathcal{C}(q)_c^1$ into an instance of Gap-MAX-ETR-INV$_{c'}^1$. 

Given an instance $\varphi$ of MAX-ETR-INV or MAX-ETR-$\mathcal{C}(q)$, we write $\text{UNSAT}(\varphi)$ for the minimum fraction of unsatisfied constraints. So the optimal objective value is $1-\text{UNSAT}(\varphi)$. We can now state the gap-preserving reduction:

\begin{lemma}\label{lem:gapetrinv}
Let $q=2^{-k}$ for some integer $k\ge 6$. Then there is a polynomial time algorithm that, given an instance $\varphi$ of MAX-ETR-$\mathcal{C}(q)$, produces an instance $\psi$ of MAX-ETR-INV such that $(1266+2k)^{-1}\cdot\text{UNSAT}(\varphi)\le \text{UNSAT}(\psi)\le \text{UNSAT}(\varphi)$
\end{lemma}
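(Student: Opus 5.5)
We replace each constraint of $\varphi$ by a gadget of at most $1266+2k$ ETR-INV constraints. All gadgets share a global pool of \emph{helper} variables that encode constants. Every original variable of $\varphi$ is kept as a variable of $\psi$. Each gadget is designed so that whenever the original constraint holds, its gadget variables can be assigned values in $[\frac12,2]$ satisfying every gadget constraint. Conversely, if all gadget constraints hold, then the original constraint holds.

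For the upper bound $\text{UNSAT}(\psi)\le\text{UNSAT}(\varphi)$, I pad every gadget to exactly the same size $N$ by adding dummy copies of constraints. Each gadget then gets its own fresh constant variables, so there is no global pool after all. The constant variables of a gadget are created inside that gadget, with the constraint $x=1$ and doubling/halving chains such as $x+x=y$. Given an assignment for $\varphi$ that is optimal, I keep it on the original variables. For each satisfied constraint I complete its gadget satisfyingly. For each unsatisfied constraint I set the gadget's auxiliaries arbitrarily, for instance to $1$. At most $N$ constraints per gadget are violated, so the unsatisfied fraction is at most $\text{UNSAT}(\varphi)$.

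For the lower bound, fix any assignment to $\psi$. Every original constraint whose gadget is fully satisfied is satisfied by the restriction to the original variables. Hence the number of violated $\psi$-constraints is at least the number of violated $\varphi$-constraints under that restriction. That number is at least $\text{UNSAT}(\varphi)\cdot|\varphi|$, and dividing by $|\psi|=N|\varphi|$ gives the factor $N^{-1}$ with $N\le 1266+2k$.

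The gadgets, all with variables in $[\frac12,2]$, are as follows.
\begin{itemize}
\item Constraints $x=y$ and $x+y=z$: for equality, introduce $a=1$ and $x\cdot u=1$, $u\cdot y=1$, which forces $x=y$.
\item Constraint $x+y=z+w$: introduce an intermediate variable $s$ representing $(x+y)/2$, which stays in range. Use the identities $x+y=s+s$ via an auxiliary $t$ with $s+s=t$, and halve $z,w$ similarly to avoid leaving $[\frac12,2]$.
\item Multiplication $xy=z$: use the standard ETR-INV multiplication trick based on inversion, $\frac{1}{x}-\frac{1}{x+1}=\frac{1}{x(x+1)}$, to build squaring. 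Then use $xy=\frac{(x+y)^2-x^2-y^2}{2}$, rescaling every intermediate by powers of $2$ so that it stays in $[\frac12,2]$. The constraint $xy=zw$ is handled by computing both products into a common variable.
\item Constraint $x=(1+q)y$ with $q=2^{-k}$: build $qy=2^{-k}y$ by $k$ halving steps. Each step is a constraint $h_{i+1}+h_{i+1}=h_i$, but intermediate values fall below $\frac12$. I will therefore represent small values by offsets, computing $y(1+2^{-i})$ and similar quantities inside the range. For example, $y+2^{-i}y$ can be built from $(y+2^{-i+1}y)$ and $y$ via $a+a=b+c$ style relations. This costs $O(k)$ constraints, roughly $2k$, and accounts for the $2k$ term.
\end{itemize}

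The main obstacle is the range restriction in the multiplication gadgets and the range-preserving scaling. Every intermediate value must remain in $[\frac12,2]$ for every input in $[\frac12,2]$, which forces careful affine rescaling of squares and inverses. That bookkeeping, which I would follow from the ETR-INV construction of \cite{AAM2021}, is what produces the large constant $1266$. I would fix the squaring gadget first, verify its range by interval arithmetic, and then count constraints.
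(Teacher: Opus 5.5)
Your reduction framework is fine and matches the paper: one gadget per constraint, fresh constants per gadget, padding to a common size $N$, and the two counting arguments. The gap is in the gadget layer. It is structural, not a matter of bookkeeping. As long as the variables of $\varphi$ are themselves variables of $\psi$, gadgets for $x=(1+q)y$, $x+y=z+w$, $xy=z$ and $xy=zw$ cannot exist.

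In ETR-INV, a variable occurring in $a+b=c$ is forced into $[\frac12,\frac32]$ as a summand and into $[1,2]$ as the sum. The constraints $v=1$ and $v\cdot v=1$ pin it to $1$. Completeness for $x+y=z+w$ needs a witness with $x=2$ (input $(2,2,2,2)$) and one with $x=\frac12$ (input $(\frac12,\frac12,\frac12,\frac12)$). The same holds at inputs $(2,\frac12,1)$ and $(\frac12,2,1)$ for $xy=z$, and for $x=(1+q)y$ you need witnesses with $x=2$ and $x=\frac{1+q}{2}<\frac23$. Hence $x$ can occur only in inversion constraints $xu=1$, and $u=1/x$ inherits this property. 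The component of $x$ under inversions therefore consists of variables equal to $x^{\pm1}$ and meets no other gadget constraint. By completeness at suitable inputs such as $(2,1,1,2)$ and $(2,1,2,1)$, it contains none of the other variables of the constraint. Changing $x$ along this component keeps every gadget constraint satisfied while violating the original constraint, so soundness fails.

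Your concrete attempts hit exactly this wall:
\begin{itemize}
\item $t=s+s=x+y$ can be $4$;
\item the squaring identity needs $x+1\in[\frac32,3]$;
\item $a+a=b+c$ is not an ETR-INV constraint;
\item $y+2^{-i}y$ can exceed $2$;
\item the multiplication gadget of \cite{AAM2021} is only valid for inputs in $[\frac78,\frac98]$.
\end{itemize}
Only $x=y$ (your inversion chain) and $x+y=z$ survive.

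The missing idea is to drop the original variables altogether. Represent each variable $x$ of $\varphi$ by a variable $x''$ meant to equal $1+qx$. Via \Cref{lem:constraintshrinking}, each constraint of $\varphi$ becomes at most one constraint $x'=q$, $52$ additions and $18$ constraints $(1+x')(1+y')=1+z'$, all with values in $[-6q,6q]$. This includes the range bounds $\frac12\le x\le 2$, which are no longer implicit and are encoded by requiring $q\alpha_2=\alpha_3\alpha_3$ to be solvable. So every $\psi$-variable lies in $[\frac78,\frac98]$, and in that band the pieces fit:
\begin{itemize}
\item $x''+y''=z''+1$ takes $8$ constraints, using per-gadget constants $V_{1/2},V_{3/4}$ and halving;
\item $x''y''=z''$ uses the $47$-constraint gadget of \cite{AAM2021};
\item $x''=1+2^{-k}$ uses the $2k-2$ constraints $\beta_i+\beta_i=\alpha_i$, $\beta_i+V_{1/2}=\alpha_{i+1}$, starting from $V_{3/2}$.
\end{itemize}
With the $6$ constant-creating constraints, the total is $6+8\cdot 52+18\cdot 47+2k-2=1266+2k$. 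This is where the constant comes from; you assert it rather than derive it. The lower-bound decoding becomes $x=q^{-1}(x''-1)$, which lies in $[\frac12,2]$ whenever a gadget containing $x''$ is fully satisfied, because the range constraints are now explicit.
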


\begin{proof}
For each variable $x$ in $\varphi$, we create a corresponding variable $x''$ in $\psi$, where $x''$ should represent the value of $1+xq$. 

For every constraint in $\varphi$, we now want to create system of constraints on $\psi$ involving some auxiliary variables.

We first use \Cref{lem:constraintshrinking} to replace each constraint in $\varphi$ with a system of constraints of form $x'=q$, $x'+y'=z'$ or $(1+x')(1+y')=(1+z')$, where now the variables are in $[-6q, 6q]\subseteq [-\frac18, \frac18]$. 

For each variable $x'$ in the new system, we create a variable $x'$ in $\psi$, which represents the value of $1+x\in [\frac78, \frac98]$. 

Next, we need to create some variables in $\psi$ representing constants $1, \frac12, \frac32, \frac34, \frac74$ and $\frac78$:

\[V_1=1\]
\[V_{1/2}+V_{1/2}=V_1\]
\[V_{1/2}+V_1=V_{3/2}\]
\[V_{3/4}+V_{3/4}=V_{3/2}\]
\[V_{3/4}+V_1=V_{7/4}\]
\[V_{7/8}+V_{7/8}=V_{7/4}\]

Each constraint in $\varphi$ gets its own set of constant variables and constraints setting the constant values. This is necessary to ensure enough redundancy in these constraints. 

A constraint $x+y=z$ corresponds to $x'+y'=z'+1$. For such a constraint, we create auxiliary variables $\alpha_1,\dots,\alpha_{7}$ and create the following $8$ constraints on $\psi$:

\begin{align*}
x'+V_{\frac12}=\alpha_1&\quad(\alpha_1=x'+\tfrac12\in[\tfrac{11}8, \tfrac{13}8])\\
y'+V_{\frac12}=\alpha_2&\quad(\alpha_2=y'+\tfrac12\in[\tfrac{11}8, \tfrac{13}8])\\
z'+V_{\frac12}=\alpha_3&\quad(\alpha_3=z'+\tfrac12\in[\tfrac{11}8, \tfrac{13}8])\\
\alpha_4+\alpha_4=\alpha_1&\quad(\alpha_4=\tfrac12x'+\tfrac14\in [\tfrac{11}{16}, \tfrac{13}{16}])\\
\alpha_5+\alpha_5=\alpha_2&\quad(\alpha_5=\tfrac12y'+\tfrac14\in [\tfrac{11}{16}, \tfrac{13}{16}])\\
\alpha_6+\alpha_6=\alpha_3&\quad(\alpha_6=\tfrac12z'+\tfrac14\in [\tfrac{11}{16}, \tfrac{13}{16}])\\
\alpha_6+V_{\frac34}=\alpha_{7}&\quad(\alpha_{7}=\tfrac12z'+1\in [\tfrac{27}{16}, \tfrac{29}{16}])\\
\alpha_4+\alpha_5=\alpha_7&\quad(\tfrac12x'+\tfrac12y'+\tfrac12=\tfrac12z'+1,x'+y'=z'+1)\\
\end{align*}

The multiplication constraint is harder. Luckily, most of the work has already been done by Abrahamsen, Adamaszek, and Miltzow (\cite{AAM2021}, Section 4) in their proof that ETR-INV is $\exists\mathbb{R}$-hard.

In their proof, they show that there is a system of constraints of form $xy=1$ and $x+y=z$ involving input variables $x, y$ and $z$ and some auxiliary variables such that:

\begin{itemize}
    \item if all the constraints are satisfied, then $xy=z$, and
    \item if $x,y\in [\frac78, \frac98]$ and $z=xy$, then there is an assignment of auxiliary variables, all in $[\frac12, 2]$, such that the constraints are satisfied
\end{itemize}

By our count, their construction has $47$ constraints and $46$ auxiliary variables (not counting the constraints used to create constant values $1, \frac12, \frac32, \frac34, \frac74$ and $\frac78$). 


We may need a constraint of form $x'=q$. This should correspond to $x''=1+q=1+2^{-k}$. We create auxiliary variables $\alpha_i$ for $i=2, \dots, k-1$ and $\beta_i$ for $i=1,\dots,k-1$. 

For $i=1, \dots, k-1$, we create constraints $\beta_{i}+\beta_{i}=\alpha_i$ and $\beta_i+V_{\frac12}=\alpha_{i+1}$, where $\alpha_1$ is an alternate label for $V_{\frac32}$ and $\alpha_{k}$ is an alternate label for $x''$.

Inductively, $\alpha_i=1+2^{-i}$ requires $\beta_i=\frac12+2^{-i-1}$, so $\alpha_{i+1}=1+2^{-i-1}$. So if all these constraints are satisfied (including the ones creating the constants $V_{\frac12}$ and $V_{\frac32}$), then we have $x''=1+q$.

\Cref{lem:constraintshrinking} writes any of the constraint types in $\varphi$ in terms of at most $1$ constraint of form $x'=q$, $52$ constraints of form $x'+y'=z'$, and $18$ constraints of form $(1+x')(1+y')=(1+z')$. 

So in total, each constraint from $\varphi$ requires $6$ constraints to create constants, up to $8\cdot 52$ constraints representing constraints of form $x'+y'=z'$, up to $18\cdot 47$ constraints representing constraints of form $(1+x')(1+y')=(1+z')$, and up to $2k-2$ constraints to create a constraint $x'=q$. In total, this gives $1266+2k$ constraints on $\psi$ for every constraint in $\varphi$. If we have less than this many constraints, then we can add repeats to get exactly $1266+2k$.

Clearly, $\psi$ is satisfiable if and only if $\varphi$ is satisfiable. Each constraint in $\varphi$ is represented by a set of $1266+2k$ constraints in $\psi$ with some auxiliary variables, so that $(x, y, z, w)\in \mathbb{R}^4$ satisfy that constraint in $\varphi$ if an only if there is an assignment of the auxiliary variables so that $(x'', y'',z'',w'')=(1+qx, 1+qy, 1+qz, 1+qw)$ satisfies the $1266+2k$ corresponding constraints. 

Suppose there is an assignment of variables in $\varphi$ that satisfies a $1-\epsilon$ fraction of constraints. Then we can assign the variables in $\psi$ by $x''=1+qx$, and for each constraint that is satisfied in $\varphi$ we can set the auxiliary variables corresponding to that constraint so that the corresponding constraints in $\psi$ are satisfied. Since each constraint in $\varphi$ corresponds to exactly $1266+2k$ constraints in $\psi$, this assignment satisfies at least a $1-\epsilon$ fraction of constraints in $\psi$. So $\text{UNSAT}(\psi)\le \text{UNSAT}(\varphi)$.

Conversely, suppose that there is an assignment of variables in $\psi$ that satisfies a $1-\epsilon$ fraction of constraints in $\psi$. Then we can assign variables in $\varphi$ by $x=q^{-1}(x''-1)$, or $x=1$ if $q^{-1}(x''-1)\notin [-2, 2]$. 

The constraints in $\psi$ are grouped into sets of size $1266+2k$, one set for each constraint in $\varphi$. Out of these sets, at least a $1-(1266+2k)\epsilon$ fraction of them have all their constraints satisfied. The corresponding constraints in $\varphi$ are then satisfied by the assignment (note that the constraints generated by \Cref{lem:constraintshrinking} require the variables to be in the appropriate range, if they are satisfied). So $(1266+2k)^{-1}\cdot\text{UNSAT}(\varphi)\le \text{UNSAT}(\psi)$.
\end{proof}

Now \Cref{thm:etrq} and \Cref{lem:cqhardness,lem:gapetrinv} imply that there is some $c<1$ such that Gap-MAX-ETR-INV$_{c}^1$ is $\exists\mathbb{R}$-hard, showing that MAX-ETR-INV is $\exists\mathbb{R}$-hard to approximate. It remains to prove \Cref{lem:main}. 

\section{The midpoint code}\label{sec:code}

The first step in creating our assignment tester is to define an error-correcting code. For an abelian group $G$, this will be a $\mathbb{Z}$-linear map $C:G^{n}\times G\rightarrow G^m$. We view this as allowing an element of $G^n$ to be encoded many different ways, one for each element of $G$. 

For the purposes of proving \Cref{thm:main}, we will take $G$ to be either the multiplicative or additive group of the reals. But the results in this section hold for arbitrary abelian groups. 

We identity $[m]$ is with a finite subset $R$ of $\mathbb{Z}^n$, and the code sends $(\beta, g)\in G^n\times G$ to the map $R\rightarrow G$ sending $x\in R$ to $g+\sum_i\beta_ix_i$.

\paragraph{Testers and decoders.}

We will describe a \emph{tester} and a \emph{decoder} for our code $C$. Here we describe the precise properties that we want these to have.

The tester is a process that, given an element $\beta\in G^m$, probabilistically chooses a random (but constant size) set of indices in $[m]$, reads those entries in $\beta$, and performs some test using the values read. If $\beta$ is a codeword (that is, it is in the image of $C$), then the test should pass with probability $1$.

The decoder is a process that, given $\beta\in G^m$, tries to determine some information about any $\alpha\in G^n$ that $\beta$ encodes or is close to encoding. The decoder takes as input a small-integer vector $d\in \mathbb{Z}^n$ and guesses the value of $\sum_{i=1}^nd_i\alpha_i$. The decoder should probabilistically choose a constant-size set of indices to evaluate $\beta$ at, and guess the value of $\sum_id_i\alpha_i$ based on those values. If $\beta=C(\alpha, g)$ for some $\alpha\in G^n, g\in G$, then the decoder should always correctly return $\sum_id_i\alpha_i$.

The code will depend on two external parameters: an error parameter $\delta\in (0, 1]$ and a range parameter $r\in \mathbb{N}$. If a string $\beta\in G^m$ is such that the test passes with suitably high probability (precisely $1-\frac14\delta$), then there should be some $\alpha\in G^n$ such that the decoder evaluates $\sum_id_i\alpha_i$ with probability at least $1-\delta$ for any valid decoding parameter $d$. A value of $d$ is valid if $\Vert d\Vert_1\le r$ (here $\Vert-\Vert_1$ refers to the $L_1$-length, so $\Vert d\Vert_1=\sum_i|d_i|$).

The important point here is that $\alpha$ only depends on $\beta$ (and not on the decoding parameter $d$). If the test passes with high probability, then $\beta$ is ``close'' to an encoding of $\alpha$ in some sense (but not necessarily in the Hamming distance sense). 

\paragraph{Defining the tester and decoder.}

In addition to $n$, our code has two internal parameters $k_1$ and $k_2$ with $k_1>k_2$. We will later determine $k_1$ and $k_2$ based on $r$ and $\delta$. We set $m=(k_1+1)^n$ and identify $1, \dots, m$ with $\{0, \dots, k_1\}^n$. 

Consider an element of $G^m$, viewed as a map $A:\{0, \dots, k_1\}^n\rightarrow G$. To test the code, we choose two elements $x_1$ and $y_1$ independently and uniformly at random from $\{0, \dots, k_1\}^n$. We then find some $x_2$ and $y_2$ in $\{0, \dots, k_1\}^n$ such that $x_1+x_2=y_1+y_2$, and we check that $A(x_1)+A(x_2)=A(y_1)+A(y_2)$. 

We choose $x_2$ and $y_2$ as follows. First, choose a value $z\in \{-k_2, \dots, k_2\}^n$ uniformly at random and set $x_2=k_1+z-x_1$ and $y_2=k_1+z-y_1$. If $x_2$ or $y_2$ is not in $\{0, \dots, k_1\}^n$, then the test passes automatically.

We call this the \emph{midpoint test}, since the line segments from $x_1$ to $x_2$ and from $y_1$ to $y_2$ intersect at their shared midpoint $\frac12(x_1+x_2)=\frac12 (y_1+y_2)$. The test uses the values of $A(x_1)$, $A(x_2)$, $A(y_1)$ and $A(y_2)$ to linearly interpolate $A$ at the midpoint in two ways, and checks that the two values agree.

To decode the code, we choose $x$ uniformly at random from the set of points in $\{0, \dots, k_1\}^n$ such that $x+d\in \{0, \dots, k_1\}^n$. We then evaluate $A(x+d)-A(x)$.

If $\alpha\in G^n$ and $A(x)=\sum_i\alpha_ix_i+c$, then it is straightforward to see that the test passes with probability $1$, and that the decoder always correctly decodes $\sum_id_i\alpha_i$. 

\paragraph{Verifying the tester and decoder.}

We first specify the values of $k_1$ and $k_2$ in terms of $\delta$ and $r$. We set:

\[k_2=4\lceil\delta^{-1}\rceil r,\quad k_1=8\lceil\delta^{-1}\rceil nk_2=32\lceil\delta^{-1}\rceil^2nr\]

The following lemma will be used in the verification of the tester and decoder.

\begin{lemma}\label{lem:gridfraction}
Let $k\in \mathbb{N}$. Suppose that $x$ is chosen uniformly at random in $\{0, \dots, k\}^n$ and let $d\in \mathbb{Z}^n$. Then the probability that $x+d\in \{0, \dots, k\}^n$ is at least $1-\frac{\Vert d\Vert_1}{k+1}$
\end{lemma}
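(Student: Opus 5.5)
The plan is to reduce to the one-dimensional case coordinate by coordinate and then apply a union bound. Since $x$ is uniform on $\{0,\dots,k\}^n$, its coordinates $x_1,\dots,x_n$ are independent and each is uniform on $\{0,\dots,k\}$. The event $x+d\in\{0,\dots,k\}^n$ is exactly the intersection over $i\in[n]$ of the events $E_i=\{x_i+d_i\in\{0,\dots,k\}\}$. So it suffices to bound the probability of each complement $\overline{E_i}$ and sum.

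For a single coordinate, fix $i$ and write $d_i$. If $|d_i|\ge k+1$, then $\Pr[\overline{E_i}]\le 1\le \frac{|d_i|}{k+1}$ trivially. Otherwise $|d_i|\le k$. When $d_i\ge 0$, the value $x_i+d_i$ lies in $\{0,\dots,k\}$ exactly when $x_i\in\{0,\dots,k-d_i\}$, which is a set of $k+1-d_i$ values. When $d_i<0$, the good set is $\{-d_i,\dots,k\}$, which has $k+1-|d_i|$ values. In either case
\[\Pr[\overline{E_i}]=\frac{|d_i|}{k+1}.\]

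A union bound then gives
\[\Pr\bigl[x+d\notin\{0,\dots,k\}^n\bigr]\le\sum_{i=1}^n\Pr[\overline{E_i}]\le\sum_{i=1}^n\frac{|d_i|}{k+1}=\frac{\Vert d\Vert_1}{k+1},\]
which is the claimed bound.

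There is no real obstacle. The only care needed is the edge case $|d_i|>k$, where the exact count $k+1-|d_i|$ would be negative, and that case is handled by the trivial bound. Independence is not even required, since the union bound alone suffices. Alternatively, independence gives the sharper bound $\prod_i\bigl(1-\frac{|d_i|}{k+1}\bigr)\ge 1-\sum_i\frac{|d_i|}{k+1}$, which reaches the same conclusion.
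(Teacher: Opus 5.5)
Your proof is correct and follows the same route as the paper: compute the exact per-coordinate failure probability $|d_i|/(k+1)$ and union-bound over coordinates. The only difference is cosmetic: you absorb large shifts coordinate by coordinate, whereas the paper disposes of the case $\Vert d\Vert_1>k$ globally as vacuous. One small caveat on your closing aside: $\prod_i\bigl(1-\frac{|d_i|}{k+1}\bigr)$ is a valid lower bound only if each factor is first capped below at $0$ (e.g.\ $k=0$, $d=(2,2)$ gives product $1$ but probability $0$); your main argument does not rely on it.
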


\begin{proof}
The claim is vacuously true when $\Vert d\Vert_1>k$, so suppose that $\Vert d\Vert_1\le k$

Write $x=(x_1, \dots, x_n)$ and $d=(d_1, \dots, d_n)$. The probability that $x_i\in \{0, \dots, k\}$ is $1-\frac{|d_i|}{k+1}$. So the probability that $x\in \{0, \dots, k_1\}^n$ is at least:

\[1-\sum_{i=1}^n\frac{|d_i|}{k+1}=1-\frac{\Vert d\Vert_1}{k+1}\]
\end{proof}

Fix $A: \{0, \dots, k_1\}^n\rightarrow G$. For each $d\in \mathbb{Z}^n$, let $B(d)$ be the most common value of $A(x+d)-A(x)$ for $x$ chosen uniformly at random in $\{0, \dots, k_1\}^n$, ignoring the values where $x+d\notin \{0, \dots, k_1\}^n$.

\begin{lemma}\label{lem:Bdefinedness}
If the midpoint test passes with probability at least $1-\frac{1}{4}\delta$ and $d$ satisfies $\Vert d\Vert_1\le r$, then both $x+d\in \{0, \dots, k_1\}^n$ and $A(x+d)-A(x)=B(d)$ hold for at least a $1-\delta$ fraction of $x\in \{0, \dots, k_1\}^n$
\end{lemma}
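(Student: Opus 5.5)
The plan is a collision argument. For $x$ with $x+d\in\{0,\dots,k_1\}^n$, write $f(x)=A(x+d)-A(x)$, and set $f(x)=\bot$ otherwise. Draw $x,y$ independently and uniformly from $\{0,\dots,k_1\}^n$. I will show that with probability at least $1-\delta$ both $f(x),f(y)$ are defined and $f(x)=f(y)$. This suffices. Let $p_v$ be the fraction of $x$ with $f(x)=v\neq\bot$. The collision probability is $\sum_v p_v^2\le \max_v p_v\cdot\sum_v p_v\le \max_v p_v$. So the most common value, which is $B(d)$, satisfies $p_{B(d)}\ge 1-\delta$, which is exactly the claim of the lemma.

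To produce the collision event I will couple two runs of the midpoint test. In the first run, $(x_1,y_1,z)=(x,y,z)$ with $z$ uniform in $\{-k_2,\dots,k_2\}^n$. The other two points are $w_x=k_1+z-x$ and $w_y=k_1+z-y$, and the test checks $A(x)+A(w_x)=A(y)+A(w_y)$. In the second run, $(x_1,y_1,z)=(x+d,y+d,z+d)$. The other two points are $k_1+(z+d)-(x+d)=w_x$ and likewise $w_y$, so the second run checks $A(x+d)+A(w_x)=A(y+d)+A(w_y)$. If both checks are \emph{genuinely} performed and pass, then subtracting them gives $f(x)=f(y)$. Here genuinely means that none of $x,y,w_x,w_y,x+d,y+d$ falls outside the grid and $z+d\in\{-k_2,\dots,k_2\}^n$, so that no automatic pass occurs.

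The remaining work is a union bound over the bad events.
\begin{enumerate}
\item The first run fails with probability at most $\frac14\delta$.
\item The triple $(x+d,y+d,z+d)$ is within total variation distance $\frac{2r}{k_1+1}+\frac{r}{2k_2+1}$ of the uniform triple, by the same coordinatewise count as in \Cref{lem:gridfraction}. Hence the second run fails with probability at most $\frac14\delta+\frac{2r}{k_1+1}+\frac{r}{2k_2+1}$.
\item By \Cref{lem:gridfraction}, $x+d$ or $y+d$ leaves the grid with probability at most $\frac{2r}{k_1+1}$.
\item The main point to check is the probability of a vacuous pass, namely $w_x$ or $w_y$ leaving the grid. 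Since $k_1-x$ is uniform on the grid and independent of $z$, conditioning on $z$ and applying \Cref{lem:gridfraction} bounds this probability by $\frac{\mathbb{E}\Vert z\Vert_1}{k_1+1}\le \frac{nk_2}{k_1}=\frac{1}{8\lceil\delta^{-1}\rceil}\le\frac{\delta}{8}$ for each of $w_x,w_y$. This bound is precisely where the choice $k_1=8\lceil\delta^{-1}\rceil nk_2$ is used.
\end{enumerate}

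Finally, plug in $k_2=4\lceil\delta^{-1}\rceil r$ and $k_1=32\lceil\delta^{-1}\rceil^2nr$. The terms $\frac{r}{2k_2+1}\le\frac{\delta}{8}$ and $\frac{4r}{k_1+1}\le\frac{\delta}{8}$ are then small. The total failure probability is at most $\frac14\delta+\frac14\delta+\frac14\delta+\frac{\delta}{8}+\frac{\delta}{8}=\delta$. I expect the delicate step to be the careful accounting of the automatic passes. Mere failure bounds on the test are useless, because the test "passes" vacuously whenever points leave the grid. So every such event must be charged explicitly, and the $\mathbb{E}\Vert z\Vert_1$ estimate is what drives the scale of $k_1$.
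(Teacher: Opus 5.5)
Your proposal is correct and is essentially the paper's proof: the paper also couples the test at $(x,y,z)$ with the test at $(x+d,y+d,z+d)$, which share the points $k_1+z-x$ and $k_1+z-y$, union-bounds the same bad events (its modular wrap-around coupling plays the role of your total-variation bound), and subtracts the two resulting equations. Your step $\sum_v p_v^2\le\max_v p_v$ makes rigorous the paper's closing ``in particular''. Read your item 2 as ``the shifted triple is not a passing triple of the test''; under that reading it already charges $x+d$, $y+d$ or $z+d$ leaving their ranges, so item 3 is redundant slack rather than a missing charge for $z+d$.
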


\begin{proof}
Let $\epsilon=\frac14 \delta$.

For $x_1$ and $y_1$ chosen independently and uniformly at random in $\{0, \dots, k_1\}^n$, we want a lower bound on the probability that there are some $x_2, y_2$ such that both of:

\begin{align}
A(x_1)+A(x_2)&=A(y_1)+A(y_2)\label{eq:basetest}\\
A(x_1+d)+A(x_2)&=A(y_1+d)+A(y_2)\label{eq:shifttest}
\end{align}

\noindent hold.

We say that the test passes for a triple $(x, y, z)$ if the test passes when choosing those random values. That is, if $k_1+z-x$ or $k_2+z-y$ is out of range or $A(x)+A(k_1+z-x)=A(y)+A(k_1+z-y)$. 

Choose $z$ uniformly at random in $\{-k_2, \dots, k_2\}^n$ and set $x_2=k_1+z-x_1$ and $y_2=k_1+z-y_1$. So \eqref{eq:shifttest} holds when $x_2$ and $y_2$ are in range and the test passes for $(x_1, y_1, z)$. 

We can write $x_2=k+(z+d)-(x_1+d)$ and $y_2=k+(z+d)-(x_2+d)$. So \eqref{eq:shifttest} holds when $x_1+d, y_1+d\in \{0, \dots, k_1\}^n$, $z+d\in \{-k_2, \dots, k_2\}^n$, $x_2, y_2\in \{0, \dots, k_1\}^n$, and the test passes for $(x_1+d, y_1+d, z+d)$

If we choose $z\in \{-k_2, \dots, k_2\}$, then sufficient conditions for \eqref{eq:basetest} and \eqref{eq:shifttest} to hold are:

\begin{itemize}
    \item $(x_1, y_1, z)$ passes,
    \item $x_2$ and $y_2$ are in range,
    \item $(x_1+d, y_1+d, z+d)$ passes or at least one of $x_1+d$, $y_1+d$, or $z+d$ is out of range, and
    \item $x_1+d$, $y_1+d$, and $z+d$ are in range
\end{itemize}

The probability that $(x_1, y_1, z)$ passes is at least $1-\epsilon$ by assumption. Note that $\Vert z\Vert_1\le nk_2$, so by \Cref{lem:gridfraction}, $x_2$ and $y_2$ are each in range with probability at least $1-\frac{nk_2}{k_1}$. So both are in range with probability at least $1-2\frac{nk_2}{k_1}\ge 1-\frac14\delta=1-\epsilon$. 

We now claim that, with probability at least $1-\epsilon$, either $(x_1+d, y_1+d, z+d)$ passes or one of $x_1+d$, $y_1+d$, and $z+d$ is out of range. Indeed, the triple:

\[(x_1+d \text{ mod } (k_1+1), y_1+d\text{ mod } (k_1+1), ((z+d+k_2)\text{ mod } (2k_2+1))-k_2)\]

\noindent has the same distribution as $(x_1, y_1, z)$, and is equal to $(x_1+d, y_1+d, z+d)$ when $x_1+d$, $y_1+d$, and $z+d$ are in range. 

By \Cref{lem:gridfraction}, $z+d\in \{-k_2, \dots, k_2\}^n$ with probability at least $1-\frac{\Vert d\Vert_1}{2k_2}$. Each of $x_1+d$ and $y_1+d$ has a probability at least $1-\frac{\Vert d\Vert_1}{k_1}$ of being in $\{0, \dots, k_1\}^{n}$. So with probability at least $1-2\frac{\Vert d\Vert_1}{k_1}-\frac{\Vert d\Vert_1}{2k_2}\ge 1-\epsilon$, all of $x_1+d$, $y_1+d$, and $z+d$ are in range. 

Altogether, we see that \eqref{eq:basetest} and \eqref{eq:shifttest} hold with probability at least $1-4\epsilon=1-\delta$. By subtracting \eqref{eq:basetest} from \eqref{eq:shifttest}, we find that $A(x_1+d)-A(x_1)=A(y_1+d)-A(y_1)$ with probability at least $1-\delta$. 

In particular, the most common value $B(d)$ of $A(x+d)-A(x)$ occurs at least a $1-\delta$ fraction of the time.
\end{proof}

So if $x$ is chosen uniformly at random, then the probability that $A(x+d)-A(x)=B(d)$ given that $x+d\in \{0, \dots, k_1\}^n$ is at least $1-\delta$. We should now check that there is some $\alpha\in \mathbb{R}^n$ such that $B(d)=\sum_{i}\alpha_id_i$. 

\begin{lemma}\label{lem:Blinearity}
Suppose that $\delta\le \frac14$ and the midpoint test passes with probability at least $1-\frac14\delta$. Then for $d_1, d_2$ and $d_3$ such that $d_1+d_2=d_3$ and $\Vert d_i \Vert_1\le r$ for $i=1,2,3$, we have $B(d_1)+B(d_2)=B(d_2)$.
\end{lemma}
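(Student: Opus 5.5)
The statement has a typo: the intended conclusion is $B(d_1)+B(d_2)=B(d_3)$. The plan is a union bound over three applications of \Cref{lem:Bdefinedness}, with a single random base point $x$ chosen so that all three differences telescope. Since the hypothesis of \Cref{lem:Bdefinedness} is exactly that the midpoint test passes with probability at least $1-\frac14\delta$, we may apply it to each of $d_1$, $d_2$, $d_3$; all three satisfy $\Vert d_i\Vert_1\le r$.

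Choose $x$ uniformly at random in $\{0,\dots,k_1\}^n$ and consider three events:
\begin{itemize}
    \item $E_1$: $x+d_1$ is in range and $A(x+d_1)-A(x)=B(d_1)$;
    \item $E_2$: $x+d_1+d_2=x+d_3$ and $x+d_1$ are in range and $A(x+d_3)-A(x+d_1)=B(d_2)$;
    \item $E_3$: $x+d_3$ is in range and $A(x+d_3)-A(x)=B(d_3)$.
\end{itemize}
By \Cref{lem:Bdefinedness}, $E_1$ and $E_3$ each fail with probability at most $\delta$.

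For $E_2$, the point $y=x+d_1$ is not uniform, so \Cref{lem:Bdefinedness} cannot be applied to $y$ directly. I would handle this with the same wrap-around trick used in the proof of \Cref{lem:Bdefinedness}. The map $x\mapsto (x+d_1)\bmod (k_1+1)$ is a bijection of the grid, so $y'=(x+d_1)\bmod(k_1+1)$ is uniform. Moreover $y'=x+d_1$ whenever $x+d_1$ is in range, which happens with probability at least $1-\frac{\Vert d_1\Vert_1}{k_1+1}$ by \Cref{lem:gridfraction}. Applying \Cref{lem:Bdefinedness} to $y'$ with shift $d_2$, the event that $y'+d_2$ is in range and $A(y'+d_2)-A(y')=B(d_2)$ fails with probability at most $\delta$. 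Hence $E_2$ fails with probability at most $\delta+\frac{r}{k_1+1}$.

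On $E_1\cap E_2\cap E_3$ the identity telescopes:
\[B(d_1)+B(d_2)=\bigl(A(x+d_1)-A(x)\bigr)+\bigl(A(x+d_3)-A(x+d_1)\bigr)=A(x+d_3)-A(x)=B(d_3).\]
It therefore suffices that this intersection is nonempty, i.e.\ that the total failure probability is below $1$. The total failure probability is at most
\[3\delta+\frac{r}{k_1+1}\le \frac34+\frac{r}{k_1+1}<1,\]
using $\delta\le\frac14$ and the choice $k_1=32\lceil\delta^{-1}\rceil^2nr$, which makes $\frac{r}{k_1+1}$ tiny. Because the conclusion is a deterministic equality among the fixed values $B(d_i)$, a single good $x$ suffices.

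The main obstacle is the non-uniformity of $x+d_1$. Beyond that, one must make sure that the "most common value" $B(d)$ is computed over the conditional distribution (given that $x+d$ is in range), matching the definition. \Cref{lem:Bdefinedness} as stated already bundles the in-range condition into its $1-\delta$ bound, so the modular bijection argument above should close this gap cleanly.
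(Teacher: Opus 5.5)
Your proof is correct and follows essentially the same route as the paper. Both arguments use the same three events: two direct applications of \Cref{lem:Bdefinedness}, plus the wrap-around bijection $x\mapsto (x+d_1)\bmod (k_1+1)$ combined with \Cref{lem:gridfraction} for the shifted difference. A union bound then shows the three events share a common $x$, and telescoping gives $B(d_1)+B(d_2)=B(d_3)$ (you correctly read the intended conclusion); your failure bound $3\delta+\frac{r}{k_1+1}$ is just a slightly sharper version of the paper's $\frac14+\frac14+\frac38$.
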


\begin{proof}
Choose $x$ uniformly at random in $\{0, \dots, k_1\}$. By \Cref{lem:Bdefinedness}, $A(x+d_1)-A(x)=B(d_1)$ with probability at least $\frac{3}{4}$ and $A(x+d_3)-A(x)=B(d_3)$ with probability at least $\frac{3}{4}$. 

With probability at least $\frac{3}{4}$, we have $A((x+d_1\mod k_1+1)+d_2)-A(x+d_1\mod k_1+1)=B(d_2)$. By \Cref{lem:gridfraction}, we have $x+d_1\in \{0, \dots, k_1\}^n$ with probability at least $1-\frac{\Vert d_1\Vert}{k_1}\ge\frac{7}{8}$. So with probability at least $\frac{5}{8}$, we have $A(x+d_1+d_2)-A(x+d_1)=B(d_2)$.

So with probability at least $\frac{1}{8}>0$, we have $A(x+d_1)-A(x)=B(d_1)$, $A(x+d_3)-A(x)=B(d_3)$, and $A(x+d_1+d_2)-A(x+d_1)=B(d_2)$. Since $d_1+d_2=d_3$, $B(d_1)+B(d_2)=B(d_3)$.
\end{proof}

If $r\ge 1$ and $e_i$ is the $i$th standard basis vector, then repeated applications of \Cref{lem:Blinearity} show that $B(d)=\sum_id_iB(e_i)$ for $\Vert d\Vert_1\le r$. 

\paragraph{Comparison to locally testable/correctable codes.}

In comparison to the standard notion of a locally testable code, the test passing with probability $1-\frac14\delta$ doesn't actually guarantee that $A$ is close to being a codeword, at least in the Hamming distance sense.

For example, let $S$ be the set of points in $\{0, \dots, k_1\}^n$ where the first coordinate is at most $\frac12k_1$. Let $G$ be the additive group of the reals and consider the map $A:\{0, \dots, k_1\}^n\rightarrow \mathbb{R}$ given by $A(x)=0$ if $x\in S$ and $A(x)=1$ otherwise.

For large enough $k_1/k_2$, this string passes the test with high probability. Indeed, when we choose $x_1, x_2, y_1$ and $y_2$, it is very likely that exactly one of $x_1$ or $x_2$ is in $S$ and exactly one of $y_1$ or $y_2$ is in $S$, so $A(x_1)+A(x_2)=1=A(y_1)+A(y_2)$. However, $A$ differs from the nearest valid codeword in half it its locations.

By \Cref{lem:Bdefinedness,lem:Blinearity}, we know that the decoder is still likely to correctly decode $B(d)=\sum_id_iB(e_i)$. In this example, it is not too hard to see the decoder will almost always output $A(x+d)-A(x)=0$, since $x+d\in S$ for most $x\in S$. 

\section{Using the midpoint code to check constraints}\label{sec:constraints}

\subsection{Checking linear constraints}\label{sec:lineartest}

Suppose we have a $\mathbb{Z}$-linear map $f:G^d\rightarrow G^n$, that is a map that sends $(g_1, \dots, g_d)$ to the element whose $i$th entry is is $\sum_{j=1}^dM_{ij}g_j$ for some $\mathbb{Z}$-valued $n\times d$ matrix $M$. 

Given an element $\alpha$ of $G^n$, we can use the midpoint test to check if $\alpha$ is close to an element in the image of $f$. We use a midpoint code over $G$ with dimension $d$, $\delta=\frac14$, and $r=\text{max}_{i\in \{1, \dots, n\}}(\Vert M^Te_i\Vert_1)$, where $e_i$ is the $i$th standard basis vector. 

Given an element $\alpha\in G^n$ and $A:\{0, \dots, k\}^d\rightarrow G$, we can test that $\alpha$ is close to being in the image of $f$ as follows: with probability $\frac12$, run the midpoint test on $A$. Otherwise, choose a random $i\in \{1, \dots, n\}$ and $x\in \{0, \dots, k_1\}^d$ such that $x+Me_i\in \{0, \dots, k_1\}^d$ and check that:

\[\alpha_i+A(x)=A(x+M^Te_i)\]

We call the combined test the \emph{linear test}. 

\begin{lemma}\label{lem:lineartest}
If the linear test passes with probability $1-\epsilon$ with $\epsilon\le \frac18\delta=\frac1{32}$, then there is some $\beta\in \text{img}(f)$ such that $\alpha_i=\beta_i$ for at least a $1-4\epsilon$ fraction of $i\in \{1, \dots, n\}$.
\end{lemma}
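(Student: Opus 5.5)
The plan is to split the failure probability between the two halves of the linear test, use the midpoint-code lemmas to extract a linear functional $B$, and then compare $\alpha$ coordinatewise with $\beta=f(B(e_1),\dots,B(e_d))$.

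\textbf{Step 1: the midpoint half.} The linear test passes with probability $1-\epsilon$. The midpoint half is run with probability $\frac12$, so the midpoint test on $A$ fails with probability at most $2\epsilon\le \frac14\delta$. This is exactly the hypothesis of \Cref{lem:Bdefinedness} and \Cref{lem:Blinearity}, and $\delta=\frac14$ meets the requirement $\delta\le\frac14$.

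\textbf{Step 2: the functional $B$.} For every $d$ with $\Vert d\Vert_1\le r$, \Cref{lem:Bdefinedness} gives that $x+d$ is in range and $A(x+d)-A(x)=B(d)$ for at least a $1-\delta=\frac34$ fraction of $x$. Repeated use of \Cref{lem:Blinearity} gives $B(d)=\sum_j d_j B(e_j)$ whenever $\Vert d\Vert_1\le r$. This uses $r\ge1$; if $r=0$ then $M=0$ and the claim is trivial.

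\textbf{Step 3: the candidate $\beta$.} Set $g_j=B(e_j)$ and $\beta=f(g)$, so $\beta_i=\sum_j M_{ij}g_j=B(M^Te_i)$. By the choice of $r$ we have $\Vert M^Te_i\Vert_1\le r$ for every $i$, so Step 2 applies to $d=M^Te_i$.

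\textbf{Step 4: comparing $\alpha$ and $\beta$.} The coordinate half is run with probability $\frac12$, so its failure probability, averaged over uniform $i$, is at most $2\epsilon$. Call $i$ \emph{bad} if the coordinate check fails for at least half of the admissible $x$. By Markov, the fraction of bad $i$ is at most $4\epsilon$.

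For a good $i$, at least half of the admissible $x$ satisfy $\alpha_i=A(x+M^Te_i)-A(x)$. By Step 2, at least a $\frac34$ fraction of all $x$ (and hence of the admissible ones) satisfy $A(x+M^Te_i)-A(x)=\beta_i$. These two sets of $x$ must intersect, which forces $\alpha_i=\beta_i$.

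\textbf{Main obstacle.} The delicate point is the overlap argument in Step 4: the two fractions must be measured against the same set of $x$. The coordinate check samples $x$ only among admissible points, while \Cref{lem:Bdefinedness} counts over all of $\{0,\dots,k_1\}^d$. Since the $\frac34$ fraction from \Cref{lem:Bdefinedness} already requires $x+d$ to be in range, it is a subset of the admissible set, so its relative density there is still at least $\frac34$. The "half" threshold from Markov therefore leaves a positive-measure overlap. If the constants turn out tighter, I would lower the badness threshold and adjust the factor $4$ accordingly.
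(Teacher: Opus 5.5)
Your proposal is correct and follows essentially the same argument as the paper. You halve the failure budget to invoke \Cref{lem:Bdefinedness} and \Cref{lem:Blinearity}, set $\beta_i=B(M^Te_i)$, use Markov to get a $1-4\epsilon$ fraction of good indices, and finish with an overlap argument. The one minor difference is that the paper measures both densities over the whole grid, using \Cref{lem:gridfraction} to get $\frac{7}{16}$ and an overlap of $\frac{3}{16}$, while you measure them relative to the admissible set; this is equally valid and slightly cleaner, since the set from \Cref{lem:Bdefinedness} already lies inside it.
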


\begin{proof}
The midpoint test occurs with probability $\frac12$, so passes with probability at least $1-2\epsilon\ge 1-\frac{1}4\delta$. So for $\Vert d\Vert_1\le r$, the most common value $B(d)$ of $A(x+d)-A(x)$ occurs at least a $1-\delta=\frac34$ fraction of the time by \Cref{lem:Bdefinedness}.

Let $\beta_i=B(Me_i)$ for each $i=1, \dots, n$. By \Cref{lem:Blinearity}, $B(M^Te_i)=\sum_{j=1}^dM_{ij}B(e_j)=f(B(e_1), \dots, B(e_d))$. So $\beta\in \text{img}(f)$.

Let $S\subseteq \{1, \dots, n\}$ be the set of indices $i$ for which $\alpha_i=A(x+Me_i)-A(x)$ for at least half of the values of $x\in \{0, \dots, k_1\}^d$ satisfying $x+Me_i\in \{0, \dots, k_1\}^d$. The second test passes with probability at least $1-2\epsilon$, so $S$ contains at least a $1-4\epsilon$ fraction of $i\in \{1, \dots, n\}$.

Now $\Vert Me_i\Vert_1\le r$ and $k_1\ge 8r$, so $x+Me_i\in \{0, \dots, k_1\}^d$ for at least a $\frac{7}{8}$ fraction of $x\in \{0, \dots, k_1\}^d$ by \Cref{lem:gridfraction}. So $\alpha_i=A(x+Me_i)-A(x)$ for at least $\frac{7}{16}$ of $x\in \{0, \dots, k_1\}^d$. So for $i\in S$, we have $\alpha_i=A(x+Me_i)-A(x)$ and $A(x+Me_i)-A(x)=B(Me_i)=\beta_i$ for at least a $\frac{3}{16}$ fraction of $x\in \{0, \dots, k_1\}^d$.

So $\alpha_i=\beta_i$ for $i\in S$. Since $\beta\in \text{img}(f)$ and $S$ contains at least a $1-4\epsilon$ fraction of $i\in \{1, \dots, n\}$, this proves the claim. 
\end{proof}

We will want to test membership in a subset of $\mathbb{R}^n$ cut out by some constraints, rather than one parameterized by a function $f$. If this subset is a linear subspace, then it is straightforward to convert a description in terms of integer-coefficient linear constraints into a parameterization. 

\begin{lemma}\label{lem:basisfinding}
Let $V$ be $d$-dimensional linear subspace of $\mathbb{R}^n$ given by the kernel of a matrix $B$ with integer entries, where each row $c$ of $B$ has $\Vert c\Vert_1\le k$ for some integer $k\ge 1$. 

Then there is a basis $b_1, \dots, b_d$ for $V$ where the entries of the $b_i$ are integers with absolute value at most $k^{n-1}$. Furthermore, there is some positive integer $D$ and some function $\rho:[d]\rightarrow [n]$ such that, for $\alpha\in V$, $\alpha=\frac{1}{D}\sum_{i=1}^d\alpha_{\rho(i)}b_i$.
\end{lemma}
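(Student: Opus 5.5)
We will use Gaussian elimination together with Cramer's rule. First discard redundant rows of $B$, so that $B$ has rank $n-d$ and, after choosing a set of $n-d$ independent rows, we may take $B$ to be an $(n-d)\times n$ integer matrix of full row rank. The kernel is unchanged, and each remaining row still has $L_1$-norm at most $k$.

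Next pick $n-d$ columns of $B$ forming a nonsingular $(n-d)\times(n-d)$ submatrix $B_P$, where $P\subseteq[n]$ indexes the pivot columns. Let $F=[n]\setminus P$ be the $d$ free columns, and let $\rho:[d]\rightarrow F$ enumerate them. Set $D=|\det B_P|$, which is a positive integer. For each free index $j=\rho(i)$, define $b_i$ as the vector with $D$ in coordinate $j$, $0$ in the other free coordinates, and pivot coordinates $x_P=-D\,B_P^{-1}B_{\cdot j}$. By Cramer's rule these pivot coordinates are $\pm$ determinants of $(n-d)\times(n-d)$ integer matrices, namely $B_P$ with one column replaced by $B_{\cdot j}$. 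Hence $b_i$ is an integer vector with $Bb_i=0$. The $b_i$ are independent because their restrictions to $F$ are $D$ times the standard basis, and there are $d=\dim V$ of them, so they form a basis.

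For the decomposition formula, take $\alpha\in V$. The vector $\alpha-\frac1D\sum_i\alpha_{\rho(i)}b_i$ lies in $V$ and vanishes on $F$. Its pivot part $y$ then satisfies $B_Py=0$, so $y=0$ because $B_P$ is nonsingular. This gives $\alpha=\frac1D\sum_{i=1}^d\alpha_{\rho(i)}b_i$.

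The main step is bounding the entries by $k^{n-1}$. Every entry of any $b_i$ is, up to sign, either $D$ or the determinant of an $(n-d)\times(n-d)$ matrix each of whose rows is a subvector of a row of $B$. Each such row therefore has $L_1$-norm at most $k$. Expanding the determinant as $\sum_\sigma \pm\prod_r M_{r\sigma(r)}$ and bounding it by $\prod_r\sum_c|M_{rc}|$ (Hadamard's inequality in $L_1$ form) gives $|\det|\le k^{n-d}$. When $d\ge1$ this is at most $k^{n-1}$ since $k\ge1$. The case $d=0$ is vacuous, as the basis is empty. The only care needed is that replacing a column of $B_P$ by $B_{\cdot j}$ still leaves each row a subvector of the corresponding row of $B$, because $j\notin P$. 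Thus row $L_1$-norms stay at most $k$.
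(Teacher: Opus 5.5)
Your proof is correct and is essentially the paper's argument. The paper appends the unit rows $e_{\rho(i)}^T$ at the free coordinates to get a square matrix $M$ with $\det M=\pm\det B_P$, and takes the corresponding columns of $\det(M)M^{-1}$; these are exactly your Cramer's-rule vectors $b_i$, and the paper bounds their entries as cofactors of $M$, whereas your row-$L_1$ determinant bound gives the slightly sharper $k^{n-d}$.
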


\begin{proof}
By removing redundant constraints, we can suppose that $B$ is a $(n-d)\times n$ non-singular matrix, where $d$ is the dimension of $V$.  

We need to compute a basis for $V$. First, we add $d$ rows to $B$, where the $i$th new row has a $1$ in position $\rho(i)$ and is $0$ in all other positions, to obtain a non-singular $n\times n$ matrix $M$. By multiplying one of the constraint rows by $-1$ if necessary, we can suppose that $\text{det}(M)$ is positive.

Now $\text{det}(M)>0$ is an integer with absolute value at most $k^n$, and $\text{det}(M)M^{-1}=\text{adj}(M)^T$ is matrix with integer entries, each of absolute value at most $k^{n-1}$. The $n-d+1$ through $n$th columns of $\text{det}(M)M^{-1}$ give a basis $b_1, \dots, b_d$ for $V$.

Given an element $\alpha\in V$, we have $\alpha=M^{-1}M\alpha=\frac{1}{\text{det}(M)}\sum_{i=1}^d\alpha_{\rho(i)}b_i$.
\end{proof}

The multiplicative group of the positive reals is isomorphic to the additive group of the reals, with the isomorphism being given by the logarithm. \Cref{lem:threeconstraintconversion} produces constraints of form $x+y=z$, $x=q$, and $(1+x)(1+y)=1+z$, where $q$ is a constant. The first two types of constraints cut out a linear subspace of $\mathbb{R}^{n+m}$, while the third type becomes linear after taking the logarithm, since $\log((1+x)(1+y))=\log(1+x)+\log(1+y)$. 

\subsection{Checking general constraints}\label{sec:constrainttest}

Let $V\subseteq [-2, 2]^n$ be an element of $\mathcal{P}\left({\mathcal{C}(q)_{[-2, 2]}^n}\right)$. We want to create a single test that checks for membership in $V$. 

Using \Cref{lem:threeconstraintconversion}, we create a new set $V'\subseteq \mathbb{R}^{n+m}$, where $V'\subseteq [-6q, 6q]^{n+m}$ and the projection of $V'$ on to the first $n$ variables is $qV$. Write $V'=V_1\cap V_2\cap V_3$, where $V_1$ is cut out by constraints of form $\alpha_i+\alpha_j=\alpha_k$, $V_2$ is cut out by constraints of form $(\alpha_i+1)(\alpha_j+1)=\alpha_k+1$, and $V_3$ is cut out by a single constraint of form $\alpha_i=q$. We choose $q$ small enough so that $6q\le \frac12$. 

We encode an element of $V'$ as a map $\{0, 1\}^{n+m}\rightarrow \mathbb{R}$. For each $\sigma:\{1, \dots, n+m\}\rightarrow \{0, 1\}$, we create a variable $u_\sigma$. To encode $\alpha\in V'$, we would set $u_\sigma=\prod_{i=1}^{n+m}(\alpha_i+1)^{\sigma_i}$. 

Note that $6q\le \frac12$ and $V'\subseteq [-6q, 6q]$, so if $\alpha\in V'$ then each $\alpha_i+1$ is positive. So we will assume (for now) that the $u_\sigma$ are positive.

Let $V_2'$ be the set of $(u_\sigma)_{\sigma\in \{0, 1\}^{n+m}}$ such that $u_\sigma=\prod_{i=1}^{n+m}(\alpha_i+1)^{\sigma_i}$ for some $\alpha\in V_2$. We first want to test that $(u_\sigma)_{\sigma\in \{0, 1\}^{n+m}}$ is close to an element of $V_2'$.

A constraint of form $(\alpha_i+1)(\alpha_j+1)=\alpha_k+1$ becomes linear after taking the logarithm. So by \Cref{lem:basisfinding}, there is an integer-valued matrix $M$ such that every $\alpha\in V_2$ has $\alpha_i+1=\prod_{j=1}^dg_j^{M_{ij}}$ for some positive real numbers $g_1, \dots, g_d$.

The expression $u_\sigma=\prod_{i=1}^{n+m}(\alpha_i+1)^{\sigma_i}$ then gives us a parameterization of $V_2'$ in terms of the $g_j$. So we can use a linear test to check that $(u_\sigma)_{\sigma\in \{0, 1\}^{n+m}}$ is close to an element of $V_2'$. If the test passes with probability $1-\epsilon$ for $\epsilon\le \frac1{32}$, then there is some $\alpha\in V_2$ such that $u_\sigma=\prod_{i=1}^{n+m}(\alpha_i+1)^{\sigma_i}$ for at least a $1-4\epsilon$ fraction of $\sigma$. 

In addition to the values of $u_\sigma$, the test takes as input an auxiliary string $A_1:\{0,\dots, a_1\}^{d_1}\rightarrow \mathbb{R}$. It performs tests of form $A_1(x_1)A_1(x_2)=A_1(y_1)A_1(y_2)$ or $A(x+d)=A(x)u_\sigma$.

This tests membership in $V_2$. We now need tests to test membership in $V_1$ and $V_3$. 

$V_3$ is cut out by a single constraint $x_i=q$. To test membership in $V_3$, we choose a random $\sigma$ such that $\sigma(i)=0$ and test that $u_{\sigma+e_i}=(1+q)u_{\sigma}$.

To test membership in $V_1$, we define a linear subspace $V_1'$ of assignments of the values of $u_\sigma$. For each constraint $\alpha_i+\alpha_j=\alpha_k$ defining $V_1$ and each $\sigma\in \{0, 1\}^{n+m}$ with $\sigma(i)=\sigma(j)=\sigma(k)=0$, we create a constraint $u_{\sigma+e_i}+u_{\sigma+e_j}=u_{\sigma+e_k}+u_{\sigma}$, and we let $V_1'$ be the subspace cut out by all these constraints. We can create a basis for $V_1'$ using \Cref{lem:basisfinding}. 

We could use another linear test to check that $(u_\sigma)_{\sigma\in \{0, 1\}^{n+m}}$ is close to an element of $V_1'$. However, we need to be careful: this test would perform tests of form $A_2(x+d)=A_2(x)+u_\sigma$. When $q$ is small, the $u_\sigma$ are close to $1$, so this would require many entries of $A_2$ to differ by a large amount. We want to be able to keep the entries of $A_2$ in $[\frac12, 2]$, so this is not good. 

We notice that $V_1'$ is invariant under adding a constant to each $u_\sigma$. So we use instead use a linear test to check that $(u_\sigma-u_0)_{\sigma\in \{0, 1\}^{n+m}}\in V_1'$ (here $u_0$ is the $u$-variable corresponding to the all-zeros element of $\{0, 1\}^{n+m}$). In addition to the $u_\sigma$, this test takes as input an auxiliary string $A_2:\{0, \dots, a_2\}^{d_2}\rightarrow \mathbb{R}$. It performs tests of form $A_2(x_1)+A_2(x_2)=A_2(y_1)+A_2(y_2)$ or $A_2(x+d)+u_0=A_2(x)+u_\sigma$.

If $(u_\sigma-u_0)_{\sigma\in \{0, 1\}^{n+m}}\in V_1'$ is close to an element $v$ of $V_1'$, then $(u_\sigma)_{\sigma\in \{0, 1\}^{n+m}}\in V_1'$ is just as close to $(v_\sigma+u_0)_{\sigma\in \{0, 1\}^{n+m}}$, which is also in $V_1'$. 

This gives us three tests, one for each of $V_1$, $V_2$ and $V_3$. The \emph{constraint test} picks one of these tests at random and runs it.

\begin{lemma}\label{lem:constrainttest}
Suppose that $u_\sigma$, $A_1$ and $A_2$ are such that each entry is in $[\frac12, 2]$ and the constraint test passes with probability $1-\epsilon$ with $\epsilon<\frac{1}{192}$. Then there is some $\alpha\in V'$ such that $u_\sigma=\prod_{i=1}^{n+m}(1+\alpha_i)^{\sigma_i}$ for at least a $1-12\epsilon$ fraction of $\sigma\in \{0, 1\}^{n+m}$. 
\end{lemma}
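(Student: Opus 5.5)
The plan is to split the failure probability among the three sub-tests and then glue the three conclusions together. The constraint test picks each of the three tests with probability $\frac13$, so each individual test fails with probability at most $3\epsilon<\frac{1}{64}$. For each sub-test I would extract a "nearby" structured object, and then combine them using a union bound over $\sigma$.

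\emph{Step 1 (the $V_2$ test).} Since the entries are positive, I take logarithms and view the multiplicative test as a linear test over the additive group of the reals, through the isomorphism $\log$. The failure probability is $3\epsilon\le\frac1{32}$, so \Cref{lem:lineartest} applies. It gives some $\alpha\in V_2$ such that $u_\sigma=\prod_i(1+\alpha_i)^{\sigma_i}$ on a set $S_2$ of $\sigma$ of density at least $1-12\epsilon$. One point needs checking: the parameterization produces $1+\alpha_i=\prod_j g_j^{M_{ij}}>0$, so $\alpha_i>-1$. Note that this $\alpha$ need not yet lie in $V_1$ or $V_3$.

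\emph{Step 2 (the $V_1$ test).} Similarly, \Cref{lem:lineartest} applied to $(u_\sigma-u_0)$ gives $v\in V_1'$ agreeing with $u$ on a set $S_1$ of density at least $1-12\epsilon$. This uses the translation-invariance remark, and I need to be careful that $u_0$ itself may be a corrupted entry, though the shift by a constant is harmless.

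\emph{Step 3 (combining).} $S_1\cap S_2$ has density at least $1-24\epsilon>\frac78$. I want to conclude that the function $\sigma\mapsto\prod_i(1+\alpha_i)^{\sigma_i}$ lies in $V_1'$, which forces $\alpha\in V_1$. Take a constraint $\alpha_i+\alpha_j=\alpha_k$. Choose $\sigma$ with $\sigma(i)=\sigma(j)=\sigma(k)=0$ such that all four points $\sigma,\sigma+e_i,\sigma+e_j,\sigma+e_k$ lie in $S_1\cap S_2$. Each of the four points is uniform on a quarter-or-eighth of the cube, so a union bound shows such a $\sigma$ exists once $24\epsilon\cdot 8<1$, which is exactly where $\epsilon<\frac1{192}$ comes in. At such a $\sigma$, write $P=\prod(1+\alpha_l)^{\sigma_l}>0$. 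The $V_1'$ constraint gives $P(1+\alpha_i)+P(1+\alpha_j)=P(1+\alpha_k)+P$, hence $\alpha_i+\alpha_j=\alpha_k$. So $\alpha\in V_1$.

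\emph{Step 4 (the $V_3$ test).} The same argument handles $V_3$. The $V_3$ test fails with probability at most $3\epsilon$ over random $\sigma$ with $\sigma(i)=0$. Intersecting with the event that both $\sigma$ and $\sigma+e_i$ lie in $S_2$ (failure at most $2\cdot 2\cdot12\epsilon$ under this conditional distribution) leaves positive probability. At such a $\sigma$ we get $P(1+\alpha_i)=(1+q)P$, so $\alpha_i=q$.

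Thus $\alpha\in V_1\cap V_2\cap V_3=V'$, and it agrees with $u$ on $S_2$, of density at least $1-12\epsilon$. The main obstacle is the bookkeeping in Steps 3 and 4: the conditional distributions (restricting $\sigma$ to a subcube where certain coordinates vanish) inflate the bad densities by factors of $2$ or $8$, and I must verify that these factors stay below what $\epsilon<\frac1{192}$ permits. If $S_1\cap S_2$ is not enough, I would instead use $S_1$ from the $V_1$ test directly with the fraction bound from \Cref{lem:lineartest}.
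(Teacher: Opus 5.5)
Your proposal is correct and follows the paper's proof essentially step for step. You apply \Cref{lem:lineartest} to the $V_2$ test and to the shifted $V_1'$ test with failure probability $3\epsilon$. You then intersect the two agreement sets, which have density at least $1-24\epsilon>\frac78$, to find a good base point $\sigma$ for each constraint $\alpha_i+\alpha_j=\alpha_k$, and you use a pair $\sigma,\sigma+e_i$ for $\alpha_i=q$.

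The one step to tighten is Step 3. A plain union bound over four points, each with conditional bad probability at most $8\cdot 24\epsilon$, would only give $768\epsilon<1$. To get $192\epsilon<1$ you need that $\sigma,\sigma+e_i,\sigma+e_j,\sigma+e_k$ lie in disjoint subcubes, so their bad probabilities sum to at most $8\cdot 24\epsilon$. This is exactly the paper's counting argument over the $8$-point fibres.
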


\begin{proof}
Each of the three component tests pass with probability at least $1-3\epsilon>1-\frac{1}{64}$, so (using \Cref{lem:lineartest}):

\begin{itemize}
    \item there is some $\alpha=(\alpha_1, \dots, \alpha_{n+m})\in V_2$ such that $u_\sigma=\prod_{i=1}^{n+m}(1+\alpha_i)^{\sigma_i}$ for at least a $1-12\epsilon$ fraction of $\sigma:\{1,\dots, n+m\}\rightarrow \{0, 1\}$,
    \item there is some $(v_\sigma)_{\sigma \in \{0, 1\}^{n+m}}\in V_1'$ such that $u_\sigma=v_\sigma$ for at least a $1-12\epsilon$ fraction of $\sigma$, and
    \item for the value of $i$ corresponding to the constraint $\alpha_i=q$, $u_{\sigma+e_i}=(1+q)u_{\sigma}$ for at least a $1-3\epsilon$ fraction of $\sigma$ such that $\sigma(i)=0$. 
\end{itemize}

We want to show $\alpha\in V'$. Let $S$ be the set of values of $\sigma$ such that $u_\sigma=\prod_{i=1}^{n+m}(1+\alpha_i)^{\sigma_i}$, and $(v_\sigma)_{\sigma\in \{0, 1\}^{n+m}}\in V_1'$. By the above, $S$ contains a least a $1-24\epsilon$ fraction of all $\sigma\in \{0, 1\}^{n+m}$. 

For each constraint of form $\alpha_i+\alpha_j=\alpha_k$ cutting out $V_1$, we claim there is some $\sigma$ such that $\sigma(i)=\sigma(j)=\sigma(k)=0$ and ${\sigma+e_i}, {\sigma+e_j}, {\sigma+e_k}$ and ${\sigma}$ are all in $S$. 

Suppose not. Then for each $\sigma$ such that $\sigma(i)=\sigma(j)=\sigma(k)=0$, one of $\sigma$, $\sigma+e_i$, $\sigma+e_j$, or $\sigma+e_k$ is not in $S$. So the set $\{\mu\in S:\mu\text{ and }\sigma \text{ agree on indices other than }i,j,k\}$ has size at most $7$. Adding up over all $\sigma$ with $\sigma(i)=\sigma(j)=\sigma(k)=0$, we see that $2^{-n-m}|S|\le \frac{7}{8}$, which is a contradiction since $24\epsilon<\frac18$. 

Since ${\sigma+e_i}, {\sigma+e_j}, {\sigma+e_k}$ and ${\sigma}$ are in $S$, we have that $u_{\sigma+e_i}+u_{\sigma+e_j}=u_{\sigma+e_k}+u_{\sigma}$. Also, $u_{\sigma+e_i}=(1+\alpha_i)u_\sigma$, $u_{\sigma+e_j}=(1+\alpha_j)u_\sigma$, and $u_{\sigma+e_k}=(1+\alpha_k)u_\sigma$. So $(\alpha_i+1)u_\sigma+(\alpha_j+1)u_\sigma=(\alpha_k+1)u_\sigma+u_{\sigma}$. Since $u_\sigma>0$, this implies that the constraint $\alpha_i+\alpha_j=\alpha_k$ is satisfied. This is true for every constraint $\alpha_i+\alpha_j=\alpha_k$ cutting out $V_1$, so $\alpha\in V_1$.

Finally, let $i$ be the index such that a constraint $\alpha_i=q$ cuts out $V_3$. Similarly to above, we let $S$ be the set of $\sigma\in \{0, 1\}^{n+m}$ such that $u_\sigma=\prod_{j=1}^{n+m}(1+\alpha_j)^{\sigma_j}$ and $u_{\sigma+e_i}=(1+q)u_\sigma$ if $\sigma(i)=0$, or $u_{\sigma}=(1+q)u_{\sigma-e_i}$ if $\sigma(i)=1$. Now $S$ contains at least a $1-15\epsilon$ fraction of $\sigma\in \{0, 1\}^{n+m}$. 

So if $15\epsilon<\frac12$, then there is some $\sigma$ such that $\sigma(i)=0$ and both $\sigma$, $\sigma+e_i$ are in $S$. So $1+q=1+\alpha_i$, so $\alpha\in V_3$. 

So if $\epsilon<\frac{1}{8\cdot24}=\frac1{192}$, then $\alpha\in V_1\cap V_2\cap V_3=V'$.
\end{proof}

We now need to check that each $\alpha\in V'$ has a corresponding assignment of auxiliary variables that satisfies every possible test. We need to ensure that this assignment has all variables in $[\frac12, 2]$, which requires choosing $q$ sufficiently small. 

\begin{lemma}\label{lem:constrainttestsatisfying}
There is some value of $q>0$ depending only on $n$ such that, for any $\alpha\in V'$, we can set $u_\sigma=\prod_{i=1}^{n+m}(1+\alpha_i)^{\sigma_i}$ for each $\sigma\in \{0, 1\}^{n+m}$ and assign values of $A_1$, $A_2$ such that:

\begin{itemize}
    \item each $u_\sigma$ and each entry of $A_1$ or $A_2$ is in $[\frac12, 2]$, and
    \item the constraint test passes with probability $1$
\end{itemize}
\end{lemma}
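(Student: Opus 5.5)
The plan is to write down the honest encodings explicitly, check that every test is satisfied exactly, and then bound every entry by choosing $q$ small, using only the fact that all relevant parameters (dimensions, matrix entries, grid sizes) depend on $n$ alone. The dependence on $n$ is inherited from \Cref{lem:threeconstraintconversion}. There $m\le 1+22n$, and $V'$ is cut out by at most $\mathcal{O}(n)$ nonredundant constraints whose coefficient rows have $L_1$-norm at most $3$.

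First, the $u_\sigma$. Since $\alpha\in V'\subseteq[-6q,6q]^{n+m}$, we have
\[u_\sigma\in\left[(1-6q)^{n+m},(1+6q)^{n+m}\right].\]
This interval lies in $[\frac12,2]$, indeed arbitrarily close to $1$, once $q\le c/(n+m)$.

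Next, $A_1$. By \Cref{lem:basisfinding}, applied to the log-linear system defining $V_2$, we get an integer matrix $M$ with entries bounded by $3^{n+m-1}$, an integer $D$, and a map $\rho$. These give
\[\log(1+\alpha_i)=\sum_j M_{ij}\gamma_j,\qquad \gamma_j=\tfrac1D\log(1+\alpha_{\rho(j)}).\]
Hence $|\gamma_j|\le 12q$. In the linear test for $V_2'$, the parameterization of $u_\sigma$ has coefficient rows $\sum_i\sigma_iM_i$, so the range parameter $r_1$ and therefore $a_1=k_1$ are bounded by a function of $n$ only. Set
\[A_1(x)=\exp\Big(\sum_j\gamma_jx_j\Big),\]
with the additive constant $g=0$. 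This is an honest codeword in the multiplicative group, so the midpoint test and the checks $A_1(x+d)=A_1(x)u_\sigma$ pass with probability $1$. Its entries satisfy $|\log A_1(x)|\le d_1a_1\cdot 12q$, and this is at most $\log 2$ once $q$ is below a threshold depending on $n$ only.

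Then $A_2$. The vector $w_\sigma=u_\sigma-u_0$ lies in $V_1'$, because $\alpha_i+\alpha_j=\alpha_k$ gives $u_{\sigma+e_i}+u_{\sigma+e_j}=u_{\sigma+e_k}+u_\sigma$. Again \Cref{lem:basisfinding} gives integer basis vectors for $V_1'$ with entries bounded by a function of $n$. It also expresses $w$ through coordinates $\frac1D w_{\rho(j)}$, each of absolute value at most $(1+6q)^{n+m}-(1-6q)^{n+m}=\mathcal{O}_n(q)$. Set
\[A_2(x)=1+\sum_j\tfrac1D w_{\rho(j)}x_j,\]
using the free additive shift $g=1$ of the midpoint code. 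The midpoint test is linear and passes. The checks $A_2(x+d)+u_0=A_2(x)+u_\sigma$ hold because $A_2(x+d)-A_2(x)=w_\sigma$ whenever $d$ is the corresponding row. The entries lie within $d_2a_2\cdot\mathcal{O}_n(q)$ of $1$, so they are in $[\frac12,2]$ for $q$ small. The $V_3$ test passes exactly, since $\alpha_i=q$ gives $u_{\sigma+e_i}=(1+q)u_\sigma$.

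The main obstacle is the bookkeeping: we must confirm that the dimensions $d_1,d_2$, the integer bounds from \Cref{lem:basisfinding}, the resulting $r$ values, and the grid sizes $a_1,a_2$ truly depend only on $n$ and not on $q$. The constraint matrices do not involve $q$, and the only $q$-dependence, the constant in $x=q$, is isolated in $V_3$. Consequently every entry deviates from $1$ by at most $C(n)\cdot q$, and taking $q<1/(3C(n))$ finishes the proof. The key design choices that make this work are the shift $g=1$ for $A_2$ and $g=0$ (multiplicatively) for $A_1$.
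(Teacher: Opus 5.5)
Your proposal is correct and takes essentially the same route as the paper. You use the same honest encodings, $A_1(x)=\prod_j(1+\alpha_{\rho(j)})^{x_j/D}$ and $A_2(x)=1+\sum_j x_j\frac1D(u_{\rho(j)}-u_0)$ with $u_0=1$, check each test exactly, and argue that every grid size and integer bound from \Cref{lem:basisfinding} depends only on $n$, so every entry lies within $C(n)q$ of $1$; the paper differs only in making these constants explicit (e.g.\ $q\le 2^{-2^{n+m+\mathcal{O}(1)}}$ for $A_2$), which the statement does not require.
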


\begin{proof}
From the proof of \Cref{lem:threeconstraintconversion}, we have $m=1+22n$. 

In order to have $u_\sigma=\prod_{i=1}^{n+m}(1+\alpha_i)^{\sigma_i}\in [\frac12, 2]$, it is sufficient to have $|\alpha_i|\le \frac{\log(2)}{\log(n+m)}$ for each $i$. Since $V'\subseteq [-6q, 6q]^{n+m}$, it is sufficient to have $6q\le \frac{\log(2)}{\log(n+m)}$.

To define the linear test for $V_2'$, we first found a basis for $\log(1+V_2)=\{(\log(1+\alpha_1), \dots, \log(1+\alpha_{n+m}))|\alpha\in V_2\}$, and then parameterized $V_2'$ in terms of $V_2$. By \Cref{lem:basisfinding}, the vectors in the basis for $\log(1+V_2)$ have entries with absolute value at most $3^{m+n-1}$. Each entry of $\log(1+V_2')$ is a sum of at most $n+m$ entries of of $\log(1+V_2)$, so the corresponding basis for $\log(1+V_2')$ has entries of absolute value at most $(n+m)3^{n+m-1}$. The code used then needs range parameter at most $(n+m)^23^{n+m-1}$. 

Since this code has dimension $d_1$ and $\delta=\frac14$, the width of the grid is $a_1=512d_1(n+m)^23^{n+m-1}\le 512(n+m)^33^{n+m-1}$. To encode $\alpha\in V'$, we should have $A_1(x)=\prod_{i=1}^{d_1}(1+\alpha_{\rho_1(i)})^{x_i/D_i}$, where $D_1$ and $\rho_1$ come from \Cref{lem:basisfinding}. In order for this to be in $[\frac12, 2]$, it is sufficient to have $|\alpha_i|\le \frac{\log(2)}{512(n+m)^43^{n+m-1}}$ for each $i$, so it is sufficient to have $6q\le \frac{\log(2)}{512(n+m)^43^{n+m-1}}$. 

Now $V_1'$ is a linear subspace of $\mathbb{R}^{2^{n+m}}$ cut out by constraints of form $u_{\sigma+e_i}+u_{\sigma+e_j}=u_{\sigma+e_k}+u_{\sigma}$. The basis found by \Cref{lem:basisfinding} has entries with absolute value at most $4^{2^{n+m}-1}$. We then need a code with range parameter at most $2^{n+m}4^{2^{n+m}-1}$. So $a_2\le 512\cdot 2^{2(n+m)}4^{2^{n+m}-1}$.

To encode $\alpha\in V'$, we set:

\[A_2(x)=1+\sum_{i=1}^{d_2}x_i\frac{1}{D_2}\left(\prod_{j=1}^{n+m}(1+\alpha_j)^{\rho_2(i,j)}-1\right)\]

Here $D_2$ and $\rho_2$ come from \Cref{lem:basisfinding}, so $\rho_2$ is a map $[d_2]\rightarrow \{0, 1\}^{n+m}$, viewed here as a map $[d_2]\times [n+m]\rightarrow \{0, 1\}$. 

We want each $A_2(x)$ to be in $[\frac12, 2]$. It is sufficient to have:

\[d_2\cdot 512\cdot 2^{2(n+m)}4^{2^{n+m}-1}\cdot \left((1+6q)^{n+m}-1\right)\le \frac12\]

This is satisfied when $q\le2^{-2^{n+m+\mathcal{O}(1)}}$. So when $q$ is sufficiently small, all the auxiliary variables can be assigned in $[\frac12, 2]$. The bound on $q$ depends only on $n$ (since $m=1+22n$). 
\end{proof}

\section{The assignment tester}\label{sec:finalproof}

We are now ready to prove \Cref{lem:main}. Recall that we want to construct that, for each $n$ and sufficiently small $q$, an assignment tester for $\mathcal{P}\left(\mathcal{C}(q)_{[\frac12, 2]}^{2n}\right)$ over $[\frac12, 2]^n$ with length $\ell=4$, error threshold $\zeta=\frac{1}{1000}$, output alphabet $[\frac12, 2]$, and output constraint set $\mathcal{C}(q)_{[\frac12, 2]}^4$. We need to specify the remaining parameters of the assignment tester.

Let $V\in \mathcal{P}\left(\mathcal{C}(q)_{[\frac12, 2]}^{2n}\right)$ be the constraint that we want to test. 

\paragraph{Error correcting function and blow-up.}

The blow-up size is $2^n$, and the error correcting function is the function $f$ that maps $\alpha\in [\frac12, 2]^n$ to the string $(a_\sigma)_{\sigma\in \{0, 1\}^n}$ where $a_\sigma=\prod_{i=1}^n(1+q\alpha_i)^{\sigma_i}$. We choose $q$ smaller than $\frac{\log(2)}{2n}$, so this assigns $a_\sigma$ a value in $[\frac12, 2]$.

The following corollary of the DeMillo-Lipton-Schwartz-Zippel \cite{DL78,Zippel79,Schwartz80} Lemma is used to compute the distance of the error-correcting function:

\begin{lemma}\label{lem:schwartzzippel}
Let $R\subseteq \mathbb{R}$ be finite and let $f$ and $g$ be two different affine functions $R^n\rightarrow \mathbb{R}$. Then $f$ and $g$ agree on at most $|R|^{n-1}$ points in $R^n$. 
\end{lemma}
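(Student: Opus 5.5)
We prove the statement by induction on $n$, reducing to the one-variable fact that a nonzero affine function of one real variable has at most one root. Write $h=f-g$. This is an affine function $R^n\rightarrow\mathbb{R}$, say $h(x)=c_0+\sum_{i=1}^n c_ix_i$. Since $f\neq g$ as functions on $R^n$, $h$ is not identically zero on $R^n$. The points where $f$ and $g$ agree are exactly the zeros of $h$ in $R^n$, so it suffices to show that $h$ vanishes on at most $|R|^{n-1}$ points of $R^n$.

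First suppose some coefficient $c_j$ with $j\ge 1$ is nonzero; without loss of generality $c_n\neq 0$. Fix any $(x_1,\dots,x_{n-1})\in R^{n-1}$. Then $h(x_1,\dots,x_{n-1},t)=c_nt+(\text{const})$ is a nonconstant affine function of $t$, so it has exactly one real root. In particular at most one $t\in R$ is a root. Summing over the $|R|^{n-1}$ choices of the first $n-1$ coordinates bounds the number of zeros by $|R|^{n-1}$. No induction is actually needed: this fiber-counting argument gives the bound directly, as in the standard proof of Schwartz--Zippel for degree $1$.

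Otherwise all $c_j$ with $j\ge 1$ vanish, and $h\equiv c_0$ is constant. Since $h$ is not identically zero on $R^n$, we have $c_0\neq 0$, and $h$ has no zeros at all, so the bound holds trivially. We use $R\neq\emptyset$ here, which is implicit because otherwise $f=g$ on the empty set.

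The only subtle point is the interpretation of ``different''. If it meant different as formal affine expressions, two such expressions could still agree on all of $R^n$ when $|R|=1$. With $|R|=1$, however, the distinct formal expressions fall into two cases. Either they agree on the single point of $R^n$, which gives $|R|^{n-1}=1$ agreement point and is fine. Or some $c_j\neq 0$ with $j\ge 1$, in which case the fiber argument above applies unchanged. So the argument covers both readings, and the main step is simply the observation that a nonconstant affine function of one variable has a unique root.
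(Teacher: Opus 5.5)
Your proof is correct. The paper gives no argument of its own: it only notes that the statement is the degree-one case of the DeMillo--Lipton--Schwartz--Zippel lemma, and your fiber-counting argument (choose a coordinate with nonzero coefficient, so each line in that direction meets the zero set at most once) is that lemma's standard proof specialized to degree one, with the closing discussion of the two readings of ``different'' harmless but unneeded since your case split already handles formally distinct expressions for every $R$.
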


Using the isomorphism between the additive and multiplicative groups of the reals, this shows the error correcting function has distance at least $\frac{1}{2}$. 

\paragraph{Auxiliary variables.}

Let $m=1+22n$ be the number of auxiliary variables needed by \Cref{lem:threeconstraintconversion} applied to $\mathcal{P}\left(\mathcal{C}(q)_{[\frac12, 2]}^{2n}\right)$. The assignment tester then uses the following auxiliary variables:

\begin{itemize}
    \item the variables $(u_\sigma)_{\sigma\in \{0, 1\}^{2n+m}}$, $A_1$ and $A_2$ from the constraint test in \Cref{sec:constrainttest}, and
    \item two auxiliary strings $L_1$ and $L_2$, each viewed as maps $\{0, \dots, 512n^2\}^n\rightarrow [\frac12, 2]$
\end{itemize}

The lengths of $A_1$ and $A_2$ depend on $V$, but are bounded by a function of $n$. 

\paragraph{The tests.}

We write $(a_\sigma)_{\sigma\in \{0, 1\}^n}$ for the first set of input variables and $(b_\sigma)_{\sigma\in \{0, 1\}^n}$ for the second set of input variables. 

Now let $V$ be a constraint in $\mathcal{P}\left(\mathcal{C}(q)_{[\frac12, 2]}^{2n}\right)$, and let $V'$ be the subset of $\mathbb{R}^{2n+m}$ produced by \Cref{lem:threeconstraintconversion}. In order to test membership in $V$, we need $5$ different types of test. These are:

\begin{itemize}
    \item a linear test checking that $(a_\sigma)_{\sigma\in \{0, 1\}^n}$ is close to an encoding of some $\alpha\in \mathbb{R}^n$
    \item a linear test checking that $(b_\sigma)_{\sigma\in \{0, 1\}^n}$ is close to an encoding of some $\beta\in \mathbb{R}^n$
    \item a constraint test checking that $(u_\sigma)_{\sigma\in \{0, 1\}^{2n+m}}$ is close to an encoding of some $(\alpha', \beta', \gamma)\in V'$ (here $\alpha', \beta'\in \mathbb{R}^n, \gamma\in \mathbb{R}^m$)
    \item a consistency test checking that $\alpha=\alpha'$ 
    \item a consistency test checking that $\beta=\beta'$
\end{itemize}

The first two tests are linear tests as in \Cref{sec:lineartest}, using the auxiliary strings $L_1$ and $L_2$. These tests have range parameter $r=n$, since we can parameterize the $\sum_i(1+\alpha_i)^{\sigma_i}$ in terms of the $(1+\alpha_i)$, and $\Vert \sigma\Vert_1\le n$.

The constraint test is described in detail in \Cref{sec:constrainttest}. It remains to explain how to check consistency between the $u_\sigma$ and the $a_\sigma$ and $b_\sigma$. 

The \emph{consistency test} for $\alpha$ chooses $\sigma_1\in \{0, 1\}^{2n+m}$ uniformly at random. Then, it chooses another $\sigma_2\in \{0, 1\}^{2n+m}$ where $\sigma_2(i)=\sigma_1(i)$ for $i\in \{n+1, \dots, 2n+m\}$. For each $i\in \{1, \dots, n\}$, it sets $\sigma_2(i)=\sigma_1(i)$ with probability $\frac13$ and $\sigma_2(i)=1-\sigma_1(i)$ with probability $\frac23$, with each entry being assignment independently. Let $\mu_1, \mu_2$ be restrictions of $\sigma_1, \sigma_2$ to $\{1, \dots, n\}$ respectively. The test finally checks that $a_{\mu_1}u_{\sigma_2}=a_{\mu_2}u_{\sigma_1}$.

If $(a_\sigma)_{\sigma\in \{0, 1\}^n}$ is the encoding of some $\alpha\in \mathbb{R}^n$ and $(u_\sigma)_{\sigma\in \{0, 1\}^{2n+m}}$ is the encoding of $(\alpha, \beta, \gamma)$ for some $\beta\in\mathbb{R}^n, \gamma\in \mathbb{R}^m$, then the consistency test always passes, since:

\[\frac{u_{\sigma_1}}{u_{\sigma_2}}=\prod_{i=1}^{n}\alpha_i^{\mu_1-\mu_2}=\frac{a_{\mu_1}}{a_{\mu_2}}\]

Because of the (anti-)correlation between the first $n$ entries of $\sigma_1$ and $\sigma_2$, the value of $\mu_1-\mu_2$ is uniformly distributed over $\{-1, 0, 1\}^n$. 

Similarly, there is a consistency test for $\beta$ that tests consistency between the $u_\sigma$ and $b_\sigma$. The only difference it is that is uses the $n+1, \dots, 2n$ coordinates instead of the $1, \dots, n$ coordinates. 

The assignment test chooses one of the $5$ test types at random and run a test of that type. The test queries at most $4$ values from among the $a_\sigma$, $b_\sigma$, $u_\sigma$, and entries of $A_1$, $A_2$, $L_1$, and $L_2$, and runs a test of form $x=y$, $x=(1+q)y$, $x+y=z$, $x+y=z+w$, $xy=z$, or $xy=zw$. So the output constraint set is $\mathcal{C}(q)$.

\paragraph{Verifying the assignment tester.}

To verify the assignment tester, we need to show to check completeness and soundness. Completeness is already given by \Cref{lem:constrainttestsatisfying}, so it remains to show soundness:

\begin{lemma}\label{lem:edgetest}
Let $V\in\mathcal{P}\left(\mathcal{C}(q)_{[\frac12, 2]}^{2n}\right)$. If $(a_\sigma)_{\sigma\in \{0, 1\}^n}$, $(b_\sigma)_{\sigma\in \{0, 1\}^n}$, $(u_\sigma)_{\sigma\in \{0, 1\}^{2n+m}}$, $A_1$, $A_2$, $L_1$ and $L_2$ are such that the assignment test passes with probability at least $\frac{999}{1000}$, then there is some $(\alpha, \beta)\in V$ such that $a_\sigma=\prod_{i=1}^n(1+q\alpha_i)^{\sigma_i}$ for at least $\frac1{50}$ of $\sigma\in \{0, 1\}^n$ and $b_\sigma=\prod_{i=1}^n(1+q\beta_i)^{\sigma_i}$ for at least $\frac{1}{50}$ of $\sigma\in \{0, 1\}^n$.
\end{lemma}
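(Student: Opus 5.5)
Since the assignment test picks one of its $5$ test types uniformly at random, a total failure probability of at most $\frac{1}{1000}$ means each component test fails with probability at most $\epsilon=\frac{5}{1000}=\frac1{200}$. This is below $\frac1{192}$ and below $\frac1{32}$. The plan is to apply the earlier soundness lemmas to each component separately and then glue them together with the consistency tests.

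\textbf{Step 1: extract the three encoded objects.} First apply \Cref{lem:lineartest} to the linear test on $(a_\sigma)$ with auxiliary string $L_1$. This gives some $\alpha\in\mathbb{R}^n$ with $a_\sigma=\prod_i(1+\alpha_i)^{\sigma_i}$ for at least a $1-4\epsilon$ fraction of $\sigma\in\{0,1\}^n$. We use the multiplicative group, whose image is parameterized by the $1+\alpha_i$; since all entries are positive (they lie in $[\frac12,2]$), this group is legitimate. In the same way, $L_2$ gives $\beta$ for $(b_\sigma)$. Next apply \Cref{lem:constrainttest} to the $u_\sigma$, $A_1$, $A_2$. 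This gives $(\alpha',\beta',\gamma)\in V'$ with $u_\sigma$ agreeing with its product encoding on at least a $1-12\epsilon=\frac{47}{50}$ fraction of $\sigma\in\{0,1\}^{2n+m}$. By \Cref{lem:threeconstraintconversion}, $q^{-1}(\alpha',\beta')\in V$. The target is therefore to show $1+\alpha_i=1+\alpha'_i$ for all $i$ (and likewise for $\beta$). We can then take the pair in $V$ to be $q^{-1}(\alpha',\beta')$, so that $a_\sigma=\prod_i(1+q\cdot q^{-1}\alpha'_i)^{\sigma_i}$ on at least a $1-4\epsilon\ge\frac1{50}$ fraction, which is much stronger than required.

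\textbf{Step 2: consistency.} Call $\sigma_1$ good if $u_{\sigma_1}$ is correctly encoded, and similarly for $\sigma_2$, $a_{\mu_1}$, $a_{\mu_2}$. Marginally, $\sigma_1$ is uniform. The rule that keeps each bit with probability $\frac13$ and flips it with probability $\frac23$, applied to a uniform bit, yields a uniform bit, so $\sigma_2$ is also uniform, as are $\mu_1$ and $\mu_2$. By a union bound, all four values are correctly encoded and the consistency test passes with probability at least $1-2\cdot12\epsilon-2\cdot4\epsilon-5\epsilon>\frac12$. On such an event the test equation reads
\[\prod_{i\le n}\Big(\frac{1+\alpha_i}{1+\alpha'_i}\Big)^{\mu_1(i)-\mu_2(i)}=1 .\]
Set $\theta_i=\log\frac{1+\alpha_i}{1+\alpha'_i}$. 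The difference $\mu_1-\mu_2$ is uniform on $\{-1,0,1\}^n$, so more than half of $v\in\{-1,0,1\}^n$ satisfy $\sum_i\theta_iv_i=0$. If some $\theta_j\ne0$, then for each fixing of the other coordinates at most one of the three values $v_j\in\{-1,0,1\}$ makes the sum vanish. Hence at most a $\frac13$ fraction of $v$ work (the same conclusion as \Cref{lem:schwartzzippel} on the grid $\{-1,0,1\}^n$), a contradiction. So $\theta=0$ and $\alpha=\alpha'$. The argument for $\beta$ is symmetric, using coordinates $n+1,\dots,2n$.

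\textbf{Main obstacle.} The delicate point is the probability bookkeeping in Step 2. The four good events are not independent, so only a union bound is available. We must check that the marginals of $\sigma_2$ and $\mu_2$ are exactly uniform, so that the agreement fractions from Step 1 transfer to them, and that the resulting bound exceeds $\frac13$; the numbers above leave ample room. A secondary check is that \Cref{lem:lineartest} applies with the multiplicative group and range parameter $r=n$, and that the resulting encoding $\prod(1+\alpha_i)^{\sigma_i}$ matches the form $\prod(1+q\cdot)^{\sigma_i}$ in the statement after rescaling by $q$.
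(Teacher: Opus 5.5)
Your proposal is correct and follows the paper's proof essentially step for step. You extract $\alpha$ from the linear test and $(\alpha',\beta',\gamma)\in V'$ from the constraint test, union-bound the four good encodings together with the consistency test, and conclude $\alpha=\alpha'$ via the $\{-1,0,1\}^n$ case of \Cref{lem:schwartzzippel}, which your coordinate-fixing argument reproves directly. Your $-5\epsilon$ term overcounts, since the consistency test itself fails with probability at most $\epsilon$, but the resulting bound $\frac{163}{200}$ still suffices. You also correctly spell out the rescaling $q^{-1}(\alpha',\beta')\in V$ and that agreement actually holds on a $1-\frac1{50}$ fraction of $\sigma$.
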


\begin{proof}
The constraint test passes with probability at least $\frac{199}{200}>\frac{191}{192}$, so by \Cref{lem:constrainttest}, we know that $(u_\sigma)_{\sigma\in \{0, 1\}^{2n+m}}$ is $\frac{12}{200}$-close to the encoding of some $(\alpha', \beta', \gamma)\in V'$. By \Cref{lem:threeconstraintconversion}, $(q^{-1}\alpha', q^{-1}\beta')\in V$. 

The linear test on $(a_\sigma)_{\sigma\in \{0, 1\}^n}$ passes with probability at least $\frac{199}{200}$, so $u_\sigma$ is $\frac{1}{50}$-close the encoding of some $\alpha\in \mathbb{R}^n$ by \Cref{lem:lineartest}. We want to check that $\alpha=\alpha'$.

Choose $\sigma_1$, $\sigma_2$, $\mu_1$, and $\mu_2$ as in the consistency test. The consistency test passes with probability at least $\frac{199}{200}$. When this happens, we have $a_{\mu_1}u_{\sigma_2}=a_{\mu_2}u_{\sigma_1}$.

Each of $\sigma_1$ and $\sigma_2$ is uniformly distributed over functions $\{0, 1\}^{2n+m}$ (though they aren't independent). As restrictions of $\sigma_1$ and $\sigma_2$, $\mu_1$ and $\mu_2$ are each uniformly distributed over $\{0, 1\}^n$. So with probability at least $\frac{200-2\cdot 12-2\cdot 4}{200}=\frac{168}{200}$, $u_{\sigma_1}$ and $u_{\sigma_2}$ agree with the encoding of $(\alpha', \beta', \gamma)$ and $a_{\mu_1}$ and $a_{\mu_2}$ agree with the encoding of $\alpha$. With probability at least $\frac{167}{200}$, all this holds and $a_{\mu_1}u_{\sigma_2}=a_{\mu_2}u_{\sigma_1}$. When all this happens, we have:

\[\prod_{i=1}^n(1+\alpha_i)^{\mu_1(i)-\mu_2(i)}=\prod_{i=1}^n(1+\alpha_i')^{\mu_1(i)-\mu_2(i)}\]

Now $\mu_1-\mu_2$ is uniformly distributed over $\{-1, 0, 1\}^n$. So for at least a $\frac{167}{200}>\frac13$ fraction of $\mu\in \{-1, 0, 1\}^n$, we have:

\[\prod_{i=1}^n(1+\alpha_i)^{\mu(i)}=\prod_{i=1}^n(1+\alpha_i')^{\mu(i)}\]

By \Cref{lem:schwartzzippel}, this implies that $\alpha=\alpha'$. A similar argument shows that $(b_\sigma)_{\sigma\in \{0, 1\}^n}$ is $\frac{1}{50}$-close to the encoding of $\beta'$, completing the proof.
\end{proof}

The test depends on a number of random choices. Each of these is a uniform random choice of a bounded number of alternatives, so all these choices can be determined in terms of a single uniform random integer in $[R]$ for some integer $R$. For fixed $n$, there are finitely many $V\in \mathcal{P}\left(\mathcal{C}(q)_{[\frac12, 2]}^{2n}\right)$, so by taking least common multiples, we can suppose that $R$ only depends on $n$. 

So we have constructed a family of assignment testers satisfying the requirements of \Cref{lem:main}, proving the lemma and completing the proof of \Cref{thm:main}. 

\section{Complexity of solutions}\label{sec:solutioncomplexity}

In this section, we prove \Cref{thm:solutioncomplexity}. We are given an integer $k$, and we want to produce a satisfiable instance of ETR-INV such that any assignment satisfying a $c$ fraction of constraints must have at least one variable set to a value that generates a field extension of degree more than $n^k$, where $n\ge 2$ is the number of variables in the instance. We let $c$ be the constant from \Cref{thm:main}. 

For each integer $d$, let $\mathcal{A}(d)$ be the field generated by real algebraic numbers that are roots of integer-coefficient polynomials of degree at most $d$. Given an instance $\varphi$ of ETR-INV or ETR-$\mathcal{C}(q)$, we let $\text{UNSAT}_d(\varphi)$ be the minimum over variable assignments in $\mathcal{A}(d)\cap [\frac12, 2]$ of the number of unsatisfied constraints.

The idea is to start with an instance $\varphi$ of ETR-$\mathcal{C}(q)$ that is satisfiable, but not satisfiable over $\mathcal{A}(d)$ for some $d$. So $\text{UNSAT}(\varphi)=0$ but $\text{UNSAT}_d(\varphi)>0$. We then apply the gap amplification process described in the proof \Cref{thm:etrq} to $\varphi$, obtaining a new instance $\psi$ of ETR-$\mathcal{C}(q)$. We know that $\text{UNSAT}(\psi)=0$, and we claim that $\text{UNSAT}_{d}(\psi)\ge 1-c$.

The transformations given by \Cref{lem:step1,lem:step2} don't depend on the underlying alphabet, so we can directly apply these steps over the alphabet $\mathcal{A}(d)\cap [\frac12, 2]$ rather than $[\frac12, 2]$. We need to be more careful about \Cref{cor:step3}. 

\begin{lemma}
Let $G_2$ be a constraint graph with constraint set $\mathcal{P}\left(\mathcal{C}(q)_{[\frac12, 2]}^{2n}\right)$ and let $G_3$ be the constraint graph produced by \Cref{cor:step3}. Then for any integer $d$, $\text{UNSAT}_{d}(G_{3})\ge \eta_3\cdot \text{UNSAT}_b(G_2)$ where $\eta_3=\frac1{5000}$.
\end{lemma}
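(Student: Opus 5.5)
The plan is to rerun the soundness half of the proof of \Cref{lem:composition} (as instantiated in \Cref{cor:step3}), with every real assignment replaced by one taking values in $\mathcal{A}(d)\cap[\frac12,2]$ (the $b$ in the statement should read $d$). Completeness plays no role here. The only place where the decoded assignment of $G_2$ could fail to be $\mathcal{A}(d)$-valued is the step where we pick, for each vertex $u$ of $G_2$, the nearest codeword. So the main work is to show that an $\mathcal{A}(d)$-valued assignment of $G'$ decodes to an $\mathcal{A}(d)$-valued assignment of $G_2$.

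Concretely, fix an assignment $\mathcal{B}$ of $G_3$ with values in $\mathcal{A}(d)\cap[\frac12,2]$ that satisfies a $1-\epsilon$ fraction of edges. Define $S$ exactly as in \Cref{lem:composition}: the set of edges $e$ of $G_2$ for which at least a $1-\zeta$ fraction of the tests $t_{e,r}$ have all incident edges satisfied. The counting argument is independent of the alphabet, so it gives $|S|\ge(1-\frac{(\ell+1)\epsilon}{\zeta})|E(G_2)|$. For $e=(u,v)\in S$, \Cref{lem:edgetest} produces $(\alpha,\beta)\in V_e$ whose encodings agree with the blocks $(\mathcal{B}(u_\sigma))_\sigma$ and $(\mathcal{B}(v_\sigma))_\sigma$ on the stated fraction of $\sigma$. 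By the distance bound from \Cref{lem:schwartzzippel}, $\alpha$ and $\beta$ are the unique nearest decodings. We therefore set $\mathcal{A}(u)=\alpha$, and $e$ is satisfied by $\mathcal{A}$.

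The key step is to show that $\alpha$ has coordinates in $\mathcal{A}(d)$. Among the $\sigma$ on which $(\mathcal{B}(u_\sigma))_\sigma$ agrees with $f(\alpha)$, look at the singletons $\sigma=e_i$ and the empty set. If $a_0=1$ and $a_{e_i}=1+q\alpha_i$ hold, then $\alpha_i=q^{-1}(a_{e_i}-1)\in\mathcal{A}(d)$ because $q$ is rational. Agreement on a constant fraction of $\sigma$ does not guarantee agreement on these particular $\sigma$, so instead I would use a pair $\sigma,\sigma+e_i$ in the agreement set. Then $1+q\alpha_i=a_{\sigma+e_i}/a_\sigma$ lies in the field $\mathcal{A}(d)$. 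I expect the main obstacle to be showing that such a pair exists for every $i$.

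For that existence argument, the plan is to strengthen the agreement fraction first. Since \Cref{lem:lineartest} with test-failure probability at most $\frac{5}{1000}$ gives agreement on a $1-4\epsilon'\ge\frac{49}{50}$ fraction of $\sigma$, I take this, rather than the weaker $\frac1{50}$ appearing in \Cref{lem:edgetest}, as the operative agreement fraction. A pigeonhole argument over the matching $\{\sigma,\sigma+e_i\}$ then yields a pair with both ends in the agreement set, since the complement covers fewer than half the pairs. If $\mathcal{A}(u)$ is not defined by this procedure, for instance because $u$ is not incident to any edge of $S$, set $\mathcal{A}(u)=(1,\dots,1)$. With $\mathcal{A}$ taking values in $\mathcal{A}(d)\cap[\frac12,2]$, the same arithmetic as in \Cref{lem:composition} and \Cref{cor:step3} gives $\text{UNSAT}_d(G_2)\le\epsilon/\eta_3$, which is the claimed inequality.
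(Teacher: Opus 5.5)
Your proposal is correct and is essentially the paper's proof: you rerun the soundness half of \Cref{lem:composition}, and show the decoded value lies in $\mathcal{A}(d)$ by finding, for each $i$, a pair $\sigma,\sigma+e_i$ in the agreement set, so that $1+q\alpha_i=a_{\sigma+e_i}/a_\sigma\in\mathcal{A}(d)$. You also use that $q$ is rational, which the paper leaves implicit. The only cosmetic difference is that the paper decodes a vertex whenever its block agrees with some codeword on more than $\frac34$ of the $\sigma$ (less than half the code distance away) and assigns it arbitrarily otherwise, whereas you decode only vertices touching $S$, via the $\frac{49}{50}$ agreement from \Cref{lem:lineartest}; both lead to the same pigeonhole argument.
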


\begin{proof}
Each vertex in $G_2$ is replaced by $2^n$ vertices in $G_3$ corresponding to the values of the error-correcting function $f$. Given an assignment of the variables in $G_3$ that satisfies a $1-\epsilon$ fraction of edges in $G_3$, \Cref{lem:composition} produces an assignment of the variables in $G_2$ that satisfies a $1-5000\epsilon$ fraction of edges in $G_2$. If the assignment of $G_3$ has values in $\mathcal{A}(d)$, then we need to check that the assignment of $G_2$ produced by the lemma also has values in $\mathcal{A}(d)$. 

The error-correcting function $f$ maps $(\alpha_1, \dots, \alpha_n)$ to $(a_\sigma)_{\sigma\in \{0, 1\}^n}$ where $a_\sigma=\prod_{i=1}^n(1+q\alpha_i)^{\sigma_i}$. Given an assignment of variables in $G_3$, the assignment produced by \Cref{lem:composition} assigns each vertex of $G_2$ the value that with encoding closest to the value of the corresponding variables in $G_3$, with ties broken arbitrarily. Examining the proof of \Cref{lem:composition}, we notice that it is only necessary to choose this value when the distance is less than half the distance of the code (so less than $\frac{1}{4}$). If the distance is $\ge \frac14$, then we can instead assign a value to that vertex arbitrarily. 

So it is sufficient to show that, if $(a_\sigma)_{\sigma\in \{0, 1\}^n}$ are in $\mathcal{A}(d)$ and $a_\sigma=\prod_{i=1}^n(1+q\alpha_i)^{\sigma_i}$ for more than $\frac34$ of $\sigma$, then the $\alpha_i$ are in $\mathcal{A}(d)$.

Let $S$ be the set of $\sigma$ such that $a_\sigma=\prod_{i=1}^n(1+q\alpha_i)^{\sigma_i}$. For each $i$, there is some $\sigma$ with $\sigma(i)=0$ such that both $\sigma$ and $\sigma+e_i$ are in $S$, so $\sigma_i=\frac{a_{\sigma+e_i}}{a_\sigma}$, which is in $\mathcal{A}(d)$.
\end{proof}

Next, we need to find instances of ETR-$\mathcal{C}(q)$ that are satisfiable, but not over $\mathcal{A}(d)$ for some $d$. First, it will be helpful to find some real algebraic numbers that aren't in $\mathcal{A}(d)$.

\begin{lemma}
For every $\alpha\in \mathcal{A}(d)$, the prime factors of $[\mathbb{Q}(\alpha):\mathbb{Q}]$ (the degree of the field extension $\mathbb{Q}(\alpha)$ over $\mathbb{Q}$) are at most $d$. 
\end{lemma}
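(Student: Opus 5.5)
The approach is Galois-theoretic: work inside the Galois closure of the compositum of the relevant fields and use the fact that a group of order $d!$ has no prime factors larger than $d$.

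First, since $\alpha\in\mathcal{A}(d)$ and $\mathcal{A}(d)$ is the field generated by all real roots of integer polynomials of degree at most $d$, the element $\alpha$ is a rational expression in finitely many such roots $\theta_1,\dots,\theta_s$. Let $p_j$ be a degree-$\le d$ integer polynomial with root $\theta_j$, and let $L$ be the splitting field over $\mathbb{Q}$ of $p_1\cdots p_s$, taken inside $\mathbb{C}$. Then $\mathbb{Q}\subseteq\mathbb{Q}(\alpha)\subseteq L$, so by the tower law $[\mathbb{Q}(\alpha):\mathbb{Q}]$ divides $[L:\mathbb{Q}]$. It therefore suffices to show that every prime factor of $[L:\mathbb{Q}]$ is at most $d$.

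Next, let $L_j$ be the splitting field of $p_j$. Each $L_j/\mathbb{Q}$ is Galois, and $\mathrm{Gal}(L_j/\mathbb{Q})$ embeds into the symmetric group on the at most $d$ roots of $p_j$. Hence $[L_j:\mathbb{Q}]$ divides $d!$, and all its prime factors are at most $d$.

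Now $L=L_1L_2\cdots L_s$ is the compositum of these Galois extensions. Restriction gives an injective homomorphism
\[\mathrm{Gal}(L/\mathbb{Q})\hookrightarrow\prod_{j}\mathrm{Gal}(L_j/\mathbb{Q}),\]
which is injective because an automorphism that fixes every $L_j$ fixes their compositum. So $[L:\mathbb{Q}]$ divides $\prod_j[L_j:\mathbb{Q}]$, whose prime factors are all at most $d$. Alternatively, one can induct on $s$ using $[L_1L_2:\mathbb{Q}]=[L_1:\mathbb{Q}]\,[L_2:L_1\cap L_2]$ together with $[L_2:L_1\cap L_2]\mid[L_2:\mathbb{Q}]$ for Galois $L_2$. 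Combining this with the first step proves the statement.

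There is no real obstacle here. The only care points are working with Galois closures rather than the simple extensions $\mathbb{Q}(\theta_j)$, since degrees of non-Galois composita need not divide products in the same clean way, and noting that finitely many generators suffice because $\alpha$ lies in the field they generate.
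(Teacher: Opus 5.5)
Your argument is correct, but it takes a heavier route than the paper's. The paper works only with the simple extensions. It writes $\alpha\in\mathbb{Q}(\alpha_1,\dots,\alpha_j)$ and applies the tower law to $\mathbb{Q}\subseteq\mathbb{Q}(\alpha_1)\subseteq\cdots\subseteq\mathbb{Q}(\alpha_1,\dots,\alpha_j)$. Each relative degree is at most $d$, because $\alpha_i$ still satisfies its degree-$\le d$ polynomial over $\mathbb{Q}(\alpha_1,\dots,\alpha_{i-1})$. A positive integer $\le d$ has no prime factor exceeding $d$, so neither does the product of these degrees, and $[\mathbb{Q}(\alpha):\mathbb{Q}]$ divides that product.

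You instead pass to splitting fields, get $[L_j:\mathbb{Q}]\mid d!$ from $\mathrm{Gal}(L_j/\mathbb{Q})\hookrightarrow S_d$, and use the injective restriction map $\mathrm{Gal}(L/\mathbb{Q})\hookrightarrow\prod_j\mathrm{Gal}(L_j/\mathbb{Q})$. Each of these steps is valid.

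Your closing remark, however, overstates the need for Galois closures. It is true that the degree of a compositum of non-normal extensions need not divide the product of the degrees. But this lemma only needs a bound on each relative degree in the tower, never divisibility of a product of degrees. So the elementary argument works directly with the $\mathbb{Q}(\theta_j)$.

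Your approach gives control of a whole Galois group (a subgroup of a product of copies of $S_d$), not just a degree, which is more than this lemma needs. The paper's approach uses nothing beyond the tower law and the fact that a root of a degree-$\le d$ polynomial has degree $\le d$ over any extension of $\mathbb{Q}$.
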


\begin{proof}
The field $\mathcal{A}(d)$ is generated by roots of integer-coefficient polynomials of degree at most $d$. For $\alpha\in \mathcal{A}(d)$, there is a finite set of generators $\alpha_1, \dots, \alpha_j$ such that $\alpha\in \mathbb{Q}(\alpha_1, \dots, \alpha_j)$. By the tower law for field extensions, $[\mathbb{Q}(\alpha_1, \dots, \alpha_j):\mathbb{Q}]$ can be written as a product:

\[[\mathbb{Q}(\alpha_1, \dots, \alpha_j):\mathbb{Q}]=[\mathbb{Q}(\alpha_1, \dots, \alpha_j):\mathbb{Q}(\alpha_1, \dots, \alpha_{j-1})]\cdots[\mathbb{Q}(\alpha_1):\mathbb{Q}]\]

Each $\alpha_i$ is a root of a polynomial of degree at most $d$ over $\mathbb{Q}$, so is a root of a polynomial of degree at most $d$ over $\mathbb{Q}(\alpha_1, \dots, \alpha_{i-1})$. So each $[\mathbb{Q}(\alpha_1, \dots, \alpha_i):\mathbb{Q}(\alpha_1, \dots, \alpha_{i-1})]$ is at most $d$. So the prime factorization of $\mathbb{Q}(\alpha_1, \dots, \alpha_j)$ contains no factors factors larger than $d$. 

Again by the tower law, we have:

\[[\mathbb{Q}(\alpha_1, \dots, \alpha_j):\mathbb{Q}]=[\mathbb{Q}(\alpha_1, \dots, \alpha_j):\mathbb{Q}(\alpha)]\cdot [\mathbb{Q}(\alpha):\mathbb{Q}]\]

So $[\mathbb{Q}(\alpha):\mathbb{Q}]$ divides $\mathbb{Q}(\alpha_1, \dots, \alpha_j)$, and therefore also has no prime factors larger than $d$. 
\end{proof}

For an integer $p$, the degree of the field extension $\mathbb{Q}\left(2^{\frac1p}\right)/\mathbb{Q}$ is $p$. So if $p$ is prime, then $2^{\frac1p}$ is not in $\mathcal{A}(d)$ for any $d<p$. 

\begin{lemma}\label{lem:cqbadsolution}
For any $q>0$ and integer $k$, there is an instance of ETR-$\mathcal{C}(q)$ that is satisfiable, but not satisfiable over $\mathcal{A}(n^k)$, where $n$ is the number of variables in the instance. 
\end{lemma}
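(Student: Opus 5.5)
The plan is to build an instance that forces some variable to equal a number of the form $2^{1/p}$ suitably rescaled into $[\frac12,2]$, where $p$ is a prime larger than $n^k$ but only polynomially large, so that the instance has size polynomial in $p$ while $n^k<p$. Since $n$ grows with $p$, I will take $p$ prime with the instance size $n=\mathcal{O}(p^{1/(k+1)})$ or so. That is impossible with a straightforward encoding of $x^p=2$, which uses $\Theta(\log p)$ to $\Theta(p)$ variables. Instead I will force degree by \emph{iterated} constraints: repeated squaring. The cleanest choice is a number like $2^{1/2^N}$, but its degree $2^N$ is a power of $2\le d$, so it lies in $\mathcal{A}(2)$. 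Therefore I must use a large prime $p$, and exploit the fact that the equation $x^p=2$ can be written with $\mathcal{O}(\log p)$ multiplication constraints via repeated squaring and multiplying (binary exponentiation). Then $n=\mathcal{O}(\log p)$, and choosing $p$ prime with $p>(C\log p)^k$ (true for all large $p$) gives $p>n^k$.

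Concretely, the steps are as follows. First, rescale so that all variables lie in $[\frac12,2]$. I will take $y=2^{1/p}\in[1,2]$ and introduce variables $y_j=y^{e_j}$ for the exponents along an addition chain for $p$. The intermediate powers $y^{e}$ with $e\le p$ lie in $[1,2]$, so they stay in range. Each step is a constraint $y_a y_b=y_c$ of form $xy=z$. Second, force $y^p=2$ and force the constants: a variable $o$ with $o\cdot o=o$ gives $o=1$, a variable $t$ with $o+o=t$ gives $t=2$, and then $y_{\text{last}}=t$ via the constraint $x=y$. Any solution then has $y^p=2$ with $y$ real positive, so $y=2^{1/p}$, which has degree $p$ over $\mathbb{Q}$ since $x^p-2$ is Eisenstein. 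Third, by the preceding lemma, every element of $\mathcal{A}(d)$ has degree with prime factors at most $d$, so $y\notin\mathcal{A}(n^k)$ once $p>n^k$. Satisfiability over $\mathbb{R}$ is clear by setting $y_j=2^{e_j/p}$. Finally, if one wants exactly $n$ variables with a clean count, padding with dummy variables is unnecessary.

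The main obstacle is the bookkeeping that makes $n$ logarithmic in $p$ and keeps every intermediate value inside $[\frac12,2]$. Using the standard binary addition chain, every exponent is at most $p$, so each $y^{e}\in[1,2]$, and the chain has length at most $2\log_2 p$. Thus $n\le 2\log_2 p+4$, and it suffices to pick a prime $p>(2\log_2 p+4)^k$, which exists since polynomials in $\log p$ grow slower than $p$ and there are infinitely many primes. One also needs $n\ge 2$, which holds automatically. The constant $q$ plays no role, since the instance uses only constraints of form $x=y$, $x+y=z$, and $xy=z$.
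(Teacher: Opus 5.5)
Your proposal is correct and is essentially the paper's proof. Both force $y^p=2$ with $\mathcal{O}(\log p)$ binary-exponentiation constraints of form $xy=z$, pin the constant $2$ via $o\cdot o=o$ and $o+o=t$, and conclude from $[\mathbb{Q}(2^{1/p}):\mathbb{Q}]=p$, the prime-factor lemma, and a prime $p>(2\log_2 p+4)^k\ge n^k$.

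Your insistence that every exponent be at most $p$ is more careful than the paper: its $j=\lceil\log_2 p\rceil$ gives $x_j=2^{2^j/p}>2$ for odd $p$, and one needs $j=\lfloor\log_2 p\rfloor$. Your harmless aside that $2^{1/2^N}\in\mathcal{A}(2)$ is false under the paper's definition, since $\mathbb{Q}(2^{1/4})$ is not normal over $\mathbb{Q}$ and so cannot lie in the compositum of real quadratic fields.
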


\begin{proof}
For a prime $p$, we want to construct an instance of ETR-$\mathcal{C}(q)$ where a solution requires one variable $x_0$ to satisfy $x_0^p=2$.

Let $j=\lceil \log_2(p)\rceil$, and add variables $x_1, \dots, x_j$. We add a constraint $x_i\cdot x_i=x_{i+1}$ for each $i\in \{0, \dots, j-1\}$. Inductively, a satisfying assignment must have $x_i=x_0^{2^i}$ for each $i\in \{0, \dots, j\}$. 

Now write $p$ as a binary expansion $p=\sum_{i=0}^{j}b_i2^i$ with $b_i\in \{0, 1\}$. Add variables $y_0, \dots, y_j$. If $b_0=1$ then add a constraint $y_0=x_0$, otherwise add a constraint $y_0\cdot y_0=y_0$ (requiring $y_0=1$). For each $i\in \{1, \dots, j\}$, we create a constraint $y_{i}=y_{i-1}$ if $b_i=0$, or $y_{i}=x_{i}y_{i-1}$ if $b_i=1$. A satisfying assignment must have $y_j=x_0^p$. 

We now add a variable $V_1$ and constraints $V_1\cdot V_1=V_1$ and $V_1+V_1=y_j$, requiring that $y_j=2$, and so $x_0=2^{\frac1p}$. It is straightforward to check that there is a satisfying assignment where each $x_i$ and $y_i$ is of form $2^{\frac{\ell}{p}}$ for some $\ell\le p$, so is in $[\frac12, 2]$.

This construction requires $2\lceil \log_2(p)\rceil+3$ variables, so choosing $p$ large enough that $p>\left(2\lceil \log_2(p)\rceil+3\right)^k$ establishes the claim.
\end{proof}

Using the gap amplification process on an instance generated by \Cref{lem:cqbadsolution}, we get an instance of ETR-$\mathcal{C}(q)$ that is satisfiable over $[\frac12, 2]$ but with constant UNSAT value over $[\frac12, 2]\cap \mathcal{A}(d)$. To get an instance of ETR-INV, we just need to check that \Cref{lem:gapetrinv} preserves gaps over $[\frac12, 2]\cap \mathcal{A}(d)$.

\begin{lemma}
Let $q$ be a (negative) power of two, let $\varphi$ be an instance of MAX-ETR-$\mathcal{C}(q)$ and let $\psi$ be the instance of MAX-ETR-INV produced by \Cref{lem:gapetrinv}. Then $(1266-\log_2(q))^{-1}\cdot \text{UNSAT}(\varphi)\le \text{UNSAT}(\psi)$.
\end{lemma}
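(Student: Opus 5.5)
The plan is to rerun the construction of \Cref{lem:gapetrinv} verbatim with $k=-\log_2(q)$ and to sharpen only the final counting step. The lower bound on $\text{UNSAT}(\psi)$ in that proof uses only one structural fact: the constraints of $\psi$ split into disjoint blocks, one block per constraint of $\varphi$, and all blocks have the same size $N$. With that fact, if an assignment of $\psi$ violates at most an $\epsilon$ fraction of its constraints, then at most an $N\epsilon$ fraction of the blocks contain a violated constraint. Every fully satisfied block gives a satisfied constraint of $\varphi$ under the decoding $x=q^{-1}(x''-1)$. The validity of this decoding is already argued in \Cref{lem:gapetrinv}, using the range constraints from \Cref{lem:constraintshrinking}. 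Hence $\text{UNSAT}(\varphi)\le N\cdot\text{UNSAT}(\psi)$, so the whole task is to show that every block can be built, and then padded with repeated constraints, to have exactly $N=1266-\log_2(q)$ constraints.

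\textbf{Step 1: recount the blocks by constraint type.} I will not use the uniform worst-case tallies ($52$ additions, $18$ multiplications, and a $q$-gadget of $2k-2$ constraints all occurring together). Instead, I will go through the two tables in \Cref{lem:constraintshrinking} and count, for each constraint type of $\mathcal{C}(q)$, the number of $x'+y'=z'$ constraints (cost $8$ each in $\psi$), the number of $(1+x')(1+y')=1+z'$ constraints (cost $47$ each), and the $6$ constant-creating constraints.

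\textbf{Step 2: shorten the gadget for $x''=1+q$.} The current gadget spends two constraints per halving step, namely $\beta_i+\beta_i=\alpha_i$ and $\beta_i+V_{\frac12}=\alpha_{i+1}$. I would first try to replace this with a chain of roughly $k$ constraints. One option is to reuse the constants $V_{\frac34},V_{\frac78}$ already present in the block, so that the first few halvings cost nothing. Another is to note that the range constraints also use $\alpha_q$, so the constant $1+q$ can be shared between the bound gadgets and the constraint gadget, removing slack from the $8\cdot 52$ term.

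\textbf{Step 3: verify the uniform bound and pad.} Having the per-type counts, I check that every type fits within $1266+k$. I then pad each block with repeats, which leaves completeness untouched: the upper bound $\text{UNSAT}(\psi)\le\text{UNSAT}(\varphi)$ and the exact-block-size structure are preserved.

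The main obstacle is the counting in Steps 1 and 2. The paper's own tally gives $1266+2k$, which is weaker than the claimed $1266+k$ whenever $k\ge 1$. The argument therefore hinges on genuinely saving about $k$ constraints in the $q$-gadget. If no cheaper halving chain exists within the ETR-INV constraint language, I would fall back to proving the statement with $N=1266+2k$. That version follows immediately from the block argument above, and it would show that the constant in the statement should read $1266-2\log_2(q)$.
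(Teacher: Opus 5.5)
Your fallback is correct, but your primary plan (Steps 1--2) cannot work, and the problem is not just that the savings are uncertain.

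\textbf{The statement fixes $\psi$.} It is the instance produced by \Cref{lem:gapetrinv}. Shortening the $q$-gadget or re-sharing constants gives a bound for a different instance, for which \Cref{lem:gapetrinv} would have to be re-verified.

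\textbf{For this $\psi$, the constant $1266+k$ (with $k=-\log_2 q$) is false.} By the count in \Cref{lem:gapetrinv}, the block of a constraint $xy=zw$ on four distinct variables uses $52$ additions, $18$ multiplications and the $q$-chain. So it has exactly $1266+2k$ distinct constraints and no padding. Take $\varphi=\{x=(1+q)z,\ y=w,\ xy=zw\}$. It is unsatisfiable, with $\text{UNSAT}(\varphi)=\frac13$. Set $z=y=w=1$ and $x=1+q$. Then you can satisfy every constraint of $\psi$ except one: the last constraint of the $8$-constraint addition gadget encoding $\alpha_3+u_4=\alpha_5$, inside the $u_1u_2=u_3u_4$ gadget. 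This gives $\text{UNSAT}(\psi)\le\frac{1}{3(1266+2k)}<\frac{1}{3(1266+k)}$, so no recount can rescue the printed constant.

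\textbf{Your savings ideas would not reach order $k$ anyway.} $\alpha_q$ is already a single shared variable in \Cref{lem:constraintshrinking}, so sharing it removes nothing from the $8\cdot 52$ term. Also, $V_{3/4}$ and $V_{7/8}$ equal $1-2^{-2}$ and $1-2^{-3}$, not chain terms $1+2^{-i}$, so they save at most $O(1)$ constraints.

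So drop Steps 1--2 and present the fallback. It is exactly the paper's argument: the paper's proof invokes the block count of \Cref{lem:gapetrinv}, where every block has exactly $1266+2k$ constraints. The printed $1266-\log_2(q)$ is therefore a slip for $1266-2\log_2(q)$, as you suspected.

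You do leave out the only new content of the paper's proof. In context the lemma is needed for $\text{UNSAT}_d$, to pass from $\mathcal{C}(q)$ to ETR-INV in \Cref{thm:solutioncomplexity}. The needed observation is that the decoding $x=q^{-1}(x''-1)$, with default value $1$, keeps values in $\mathcal{A}(d)$, because $q\in\mathbb{Q}$ and $\mathcal{A}(d)$ is a field. With that one line, your block argument gives $\text{UNSAT}_d(\varphi)\le(1266-2\log_2 q)\,\text{UNSAT}_d(\psi)$ verbatim.
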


\begin{proof}
Given an assignment of the variables in $\psi$ that satisfies a $1-\epsilon$ fraction of constraints in $\psi$, the proof of \Cref{lem:gapetrinv} produces an assignment of variables in $\varphi$ that satisfies at least a $1-(1266-\log_2(q))\epsilon$ fraction of constraints in $\psi$. Each variable $x''$ in $\varphi$ is assigned a value of $1+qx$ for some corresponding variable $x$ in $\psi$. Since $1$ and $q$ are rational, $1+qx$ is in $\mathcal{A}(d)$ if $x$ is. 
\end{proof}

To prove \Cref{thm:solutioncomplexity}, we start with an instance of ETR-$\mathcal{C}(q)$ produced by \Cref{lem:cqbadsolution}, apply the gap amplification process and then \Cref{lem:gapetrinv}. We obtain an instance $\psi$ of ETR-INV that is satisfiable, but $\text{UNSAT}_d(\psi)\ge 1-c$. The gap amplification process and \Cref{lem:gapetrinv} each introduce only a polynomial increase in the number of variables, so using \Cref{lem:cqbadsolution} we can choose $d$ large enough that the claim of \Cref{thm:solutioncomplexity} is satisfied.

\section{Approximating Max-ETR-INV}\label{sec:algorithms}

First, we give a polynomial-time $8$-factor approximation algorithm for Max-ETR-INV.

\begin{theorem}
There is a polynomial-time algorithm that decides Gap-Max-ETR-INV$_a^b$ for any $0\le a<b\le 1$ with $b/a>8$.
\end{theorem}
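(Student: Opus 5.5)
The plan is to show that in polynomial time one can always find an explicit assignment satisfying at least a $\frac18$ fraction of the constraints. Then, if the optimum is $\ge b$, the algorithm certifies a value $\ge \frac18\ge \frac18 b>a$ (using $b\le 1$); if the optimum is $\le a$, no assignment can exceed $a$. The decision rule is therefore: output ``yes'' iff the fraction satisfied by our explicit assignment exceeds $a$. This rule is correct because our fraction is always at most the optimum, and is at least $\frac18\ge \frac{b}{8}>a$ whenever the optimum is $\ge b$. It therefore remains only to exhibit assignments achieving $\frac18$ of the constraints.

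To do this, partition the constraints into three classes: those of the form $x=1$, those of the form $xy=1$, and those of the form $x+y=z$. First, the all-ones assignment $x\equiv 1$ satisfies every constraint of the form $x=1$ and $xy=1$. This already gives a $\frac18$ fraction unless addition constraints make up more than $\frac78$ of the total, so suppose they do.

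For the addition constraints I would use a randomized assignment that is then derandomized. The plan is to set each variable independently and uniformly from $\{\frac12,1\}$. A constraint $x+y=z$ with $x,y,z$ distinct is satisfied exactly when $x=y=\frac12$ and $z=1$, which happens with probability $\frac18$. Degenerate constraints need separate checks:
\begin{itemize}
\item $x+x=z$ with $x\ne z$ holds with probability $\frac14$;
\item $x+y=x$ is never satisfiable in $[\frac12,2]$, so no assignment whatsoever satisfies it;
\item $x+x=x$ is likewise never satisfiable.
\end{itemize}
Constraints of the form $x+y=y$ can be counted as unsatisfiable in every solution, so they lower the optimum as well. The main subtlety is therefore to argue relative to the optimum rather than to the total count. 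Let $N'$ be the number of constraints that are individually satisfiable; the optimum is at most $N'/N$. Every individually satisfiable addition constraint is met with probability at least $\frac18$. Taking the better of the all-ones assignment and the random assignment then yields at least $\frac18$ of the individually satisfiable constraints, since the better of the two is at least the average weighted appropriately. More simply, one can run the method of conditional expectations on the random assignment, which hits $\ge \frac18$ of the satisfiable addition constraints, and compare this with all-ones on the others.

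Consequently the algorithm finds an assignment satisfying at least $\frac{N'}{8N}\ge \frac{\mathrm{OPT}}{8}$, and the decision rule becomes: accept iff the found fraction exceeds $a$. This is correct since $\mathrm{OPT}\ge b$ implies found $\ge b/8>a$. Derandomization via conditional expectations is routine because each constraint involves at most three variables with two values each. I expect the main care point to be these degenerate and unsatisfiable constraints, handled by comparing against $N'$.
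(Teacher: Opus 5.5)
\textbf{Gap in the combining step.} Your reduction to ``find an assignment satisfying at least $N'/8$ constraints'' is sound, and so is your decision rule. The step that combines the two assignments does not prove the required bound, though.

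You establish two facts. All-ones satisfies the $N_1$ constraints of the form $x=1$ and $xy=1$. The random $\{\frac12,1\}$ assignment, or its derandomization over the addition constraints only, satisfies at least $N_2/8$ of the $N_2$ satisfiable addition constraints. From these facts alone, the better of the two is only guaranteed to satisfy $\max(N_1,N_2/8)$ constraints. When $N_2=8N_1$ this equals $N_1=N'/9$. So what you have proved is a factor-$9$ guarantee, which does not decide the gap problem when $8<b/a\le 9$. No ``appropriate weighting'' rescues it, since no convex combination of $N_1$ and $N_2/8$ reaches $(N_1+N_2)/8$ when $N_1>0$. The phrase ``compare this with all-ones on the others'' has the same problem: a single assignment cannot follow the random choice on the addition variables and all-ones on the rest.

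\textbf{The missing observation and fix.} The same uniform $\{\frac12,1\}$ assignment also satisfies each $x=1$ with probability $\frac12$ and each $xy=1$ with probability $\frac14$ (or $\frac12$ if $x$ and $y$ coincide). Hence every individually satisfiable constraint of every type holds with probability at least $\frac18$. Run conditional expectations on the total number of satisfied constraints. The all-ones case split then becomes unnecessary. Your opening sentence, claiming a $\frac18$ fraction of \emph{all} constraints, should likewise be stated relative to $N'/N$ from the start, since constraints such as $x+x=x$ can never be satisfied.

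\textbf{Comparison with the paper.} With that fix, your construction is the paper's construction. The paper samples from $\{1,2\}$ instead of $\{\frac12,1\}$, with the same probabilities. The paper, however, never builds an assignment. It computes $|A|=N'$ and answers ``yes'' iff $N'/N\ge b$, using $\mathrm{OPT}\le N'/N\le 8\,\mathrm{OPT}$ from the expectation bound. Your derandomized version is slightly stronger, because it also outputs an explicit assignment achieving the bound. The paper's version needs no derandomization at all.
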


\begin{proof}
For each variable $x$, we assign it a value of either $1$ or $2$ with equal probability. Now each constraint of form $x=1$ is satisfied with probability $\frac12$.

Each constraint of form $xy=1$ is satisfied with probability $\frac14$ if $x$ and $y$ are different variables, or $\frac12$ if $x$ and $y$ are the same variable. Each constraint of form $x+y=z$ is satisfied with probability $\frac18$ if $x, y$ and $z$ are all different variables, or $\frac14$ if $x$ and $y$ are the same variable but $z$ is different.

Note that constraints of form $x+y=x$ or $x+x=x$ are not possible to satisfy with $x,y\in [\frac12, 2]$. Let $A$ be the set of constraints that are possible to satisfy (that is, those that aren't of form $x+y=x$ or $x+x=x$). Then every constraint in $A$ is satisfied with probability at least $\frac18$.

So the expected number of satisfied clauses is at least $\frac18|A|$, so there is some assignment satisfying this many clauses. We know that an optimal assignment satisfies at most $|A|$ clauses, so if $|A|< b$ then we can answer ``no'', otherwise $\frac18|A|>a$ and we can answer ``yes''.
\end{proof}

Note that this algorithm only approximates the size of an optimal solution, we need randomness in order to find an good assignment.

\Cref{thm:main} says that even a non-deterministic algorithm can't decide Gap-Max-ETR-INV$_{c}^1$ for some $c<1$, unless NP$=\exists\mathbb{R}$. However, with the additional power of NP, we can achieve a better approximation factor.

\begin{theorem}
There is a non-deterministic-polynomial-time algorithm that decides Gap-Max-ETR-INV$_a^b$ for any $0\le a<b\le 1$ with $b/a> 2$.
\end{theorem}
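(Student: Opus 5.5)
We use an NP witness to choose between two strategies: satisfy every non-additive constraint, or satisfy a maximum feasible subset of the additive ones. Partition the constraints into $N$ (forms $x=1$ and $xy=1$) and $P$ (form $x+y=z$). Any assignment satisfies at most $|N|+|P|$ constraints, so the optimum $\mathrm{OPT}$ satisfies $\mathrm{OPT}\le |N|+\mathrm{OPT}_P$, where $\mathrm{OPT}_P$ is the largest number of constraints in $P$ simultaneously satisfiable with all variables in $[\frac12,2]$. Hence $\max(|N|,\mathrm{OPT}_P)\ge \frac12\mathrm{OPT}$.

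Next, each of the two strategies is realized by a polynomial-size certificate. Setting every variable to $1$ satisfies all constraints of form $x=1$ and $xy=1$, so $|N|$ is achievable deterministically. For $P$, a subset $S\subseteq P$ is feasible exactly when the linear program given by the equalities in $S$ together with the bounds $\frac12\le x\le 2$ is feasible. LP feasibility is in P, and a feasible LP has a rational basic solution of polynomial bit size, which also gives a polynomial-size rational assignment. So the nondeterministic algorithm guesses $S$, checks feasibility (or checks a guessed rational vertex), and obtains a certified lower bound $|S|$ on the optimum; over the best branch, $|S|=\mathrm{OPT}_P$.

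Finally, decide the gap problem with the quantity $\max(|N|,\mathrm{OPT}_P)$, which lies in $[\frac12\mathrm{OPT},\mathrm{OPT}]$. The algorithm accepts if $|N|\ge \frac12 b\cdot(\text{total})$, or if it guesses an LP-feasible $S\subseteq P$ with $|S|\ge\frac12 b\cdot(\text{total})$. In a yes-instance ($\mathrm{OPT}\ge b$), one of the two quantities is at least $\frac b2$, so some branch accepts. In a no-instance ($\mathrm{OPT}\le a<\frac b2$), every certified set is genuinely satisfiable, so no branch can certify $\frac b2$ and all branches reject.

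The only nontrivial ingredient is the NP membership of maximum feasible subsystem with box constraints, that is, bounding the bit size of LP solutions. This is standard, so I expect no real obstacle beyond keeping the fractions normalized correctly.
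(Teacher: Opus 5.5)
Your proof is correct and follows the paper's argument. You satisfy all $x=1$ and $xy=1$ constraints by setting every variable to $1$, and you certify a maximum feasible subsystem of the linear constraints $x+y=z$ with the box $[\frac12,2]$ via polynomial-size rational LP solutions; the only difference is that you threshold $\max(|N|,\mathrm{OPT}_P)$ at $\frac b2$ of the total, whereas the paper accepts when either quantity exceeds an $a$ fraction (using $b-a>a$), and your normalization by the total count is the correct form of the paper's $|A|/|B|$.
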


\begin{proof}
Let $\varphi$ be an instance of Gap-Max-ETR-INV$_a^b$ with $b/a>2$. If it is possible to satisfy a $\ge b$ fraction of constraints in $\varphi$, then we want to give a certificate that it is possible to satisfy a $>a$ fraction of constraints in $\varphi$. 

We split the constraints in $\varphi$ into two subsystems. Let $A$ be the subsystem of constraints of form $xy=1$ and $x=1$ and let $B$ be the subsystem of constraints of form $x+y=z$. 

Note that we can satisfy all the constraints in $A$ by setting all the variables equal to $1$. If $\frac{|A|}{|B|}>a$, then we are done.

Let $\beta$ be the maximum fraction of satisfied constraints in $B$ over all assignments of the variables. Now any assignment of variables satisfies at most a $\frac{|A|+\beta|B|}{|A|+|B|}$ fraction of constraints in $\varphi$. So if it is possible to satisfy a $\ge b$ fraction of constraints in $\varphi$ and $\frac{|A|}{|B|}\le a$, then $\frac{\beta|B|}{|A|+|B|}\ge b-a>a$.

In this case, the certificate contains an assignment of variables that satisfies a $\beta$ fraction of constraints in $B$, represented as rational numbers with polynomially many bits. Since the constraints in $B$ (and the implicit constraints $x\in [\frac12, 2]$) are affine, such a an assignment exists. This assignment satisfies more than an $a$ fraction of constraints in $\varphi$. 
\end{proof}

\bibliographystyle{alphaurl}
\bibliography{bib}

\end{document}